\documentclass[aps,pra,onecolumn,superscriptaddress,amsthm,nopreprintline]{revtex4-2}

\usepackage{amsmath, amssymb, amsthm}
\usepackage{braket}
\usepackage{graphicx}
\usepackage{xcolor}
\usepackage{enumitem}
\usepackage[hidelinks]{hyperref}

\newcommand{\CZ}{\mathrm{CZ}}
\newcommand{\CNOT}{\mathrm{CNOT}}
\newcommand{\Phiplus}{\ket{\Phi^+}}
\newcommand{\ketbra}[2]{\ket{#1}\!\bra{#2}}
\newcommand{\GKtwo}{\ket{G_{K_2}}}
\newcommand{\LFour}{\ket{L_4}}
\newcommand{\GHZn}[1]{\ket{\mathrm{GHZ}_{#1}}}
\newcommand{\Fstar}{F^*}
\newcommand{\xor}{\oplus}
\newcommand{\PQW}{\textsc{pqw}}

\newtheorem{theorem}{Theorem}
\newtheorem{lemma}[theorem]{Lemma}
\newtheorem{proposition}[theorem]{Proposition}
\newtheorem{corollary}[theorem]{Corollary}
\newtheorem{definition}[theorem]{Definition}
\newtheorem{remark}[theorem]{Remark}

\allowdisplaybreaks
\begin{document}

\title{Universal Z-only Correction for Distributing Arbitrary Graph
       States in Quantum Networks}

\author{Soumyojyoti Dutta}
\affiliation{Indian Institute of Technology Jodhpur,
             Jodhpur 342030, India}
\email{m24iqt014@alumni.iitj.ac.in}

\date{2 April, 2026}

\begin{abstract}
Distributing arbitrary graph states across quantum networks is a central
challenge for modular quantum computing and measurement-based quantum
communication. I present a protocol that converts $|E|$ elementary
two-qubit resource states --- one per edge of the target graph
$G=(V,E)$ --- into the distributed graph state $\ket{G}$ using only
operations local to each party --- single-qubit gates, CZ gates between a
party's own data and resource qubits, and single-qubit measurements ---
together with classical communication; no joint operation between spatially
separated parties is required at any point. The protocol
reaches arbitrary graph topologies rather than the GHZ
class to which prior walk-based schemes are confined. Its central result
is a universal correction theorem: for \emph{any} graph $G=(V,E)$
and any measurement outcome, the local correction
$C_v = Z_v^{g_v}$ where $g_v = \bigoplus_{e\ni v}s_{e,\bar{v}}$ restores
the distributed state to $\ket{G}$ --- a single formula covering all graph
topologies without case analysis. The elementary step is obtained from the
coined discrete-time quantum walk by replacing the position-permuting shift
operator with a diagonal conditional phase (CZ) gate, and I refer to the
resulting construction as the \emph{phase quantum walk} (PQW). Analytical correction formulas are verified
exhaustively for every connected graph on at most four vertices and every
connected graph on five vertices with at most six edges --- 22 graphs,
32{,}212 measurement outcomes in total --- at $F=1.0$. Exact closed-form fidelities under independent
depolarising and phase damping noise on the resource qubits are derived,
$\Fstar_{\mathrm{dep}}=\prod_v[1+(1-4p/3)^{\deg v}]/2$ and
$\Fstar_{\mathrm{pd}}=\prod_v[1+(1-p)^{\deg v/2}]/2$: the noise budget of a
distribution task is set by the degree sequence of the target graph, not by
its edge count alone.
On \texttt{ibm\_kingston} (IBM Heron~r2, CZ-native) I certify the
distributed states by direct fidelity estimation, obtaining
$F=0.815(3)$ for $\LFour$, $0.776(3)$ for $\GHZn{4}$, and $0.733(3)$
for $\ket{C_4}$ in a single native-layout batch. I further establish a
\emph{necessary} condition for native (SWAP-free) execution on a device
of coupling girth $\gamma$: the target graph must have girth at least
$\gamma/3$ and maximum degree at most~$3$. On heavy-hex hardware
($\gamma=12$) this forbids triangles and places the canonical interaction
graphs $I(P_4)$, $I(C_4)$ and $I(K_4)$ in a strict embeddability hierarchy
--- flexible, rigid, and non-embeddable respectively --- with the
measured fidelities descending from the flexible path to the rigid cycle.
The obstruction constrains the circuit, not the state: since
$K_4\sim_{\rm LC}K_{1,3}$, the state $\ket{K_4}$ itself remains reachable
from three resource pairs on the natively embeddable star layout.
\end{abstract}

\keywords{graph states, entanglement distribution, quantum networks,
          universal correction, stabiliser formalism, measurement-based
          quantum computation, quantum walk, depolarising noise, fidelity}

\maketitle

\section{Introduction}
\label{sec:intro}

Quantum networks require shared entanglement between spatially separated
parties as their fundamental resource. The topology of this entanglement
determines which distributed quantum tasks can be performed: Bell pairs enable
quantum teleportation~\cite{bennett1993teleporting} and quantum key
distribution~\cite{ekert1991quantum}; GHZ states enable multi-party secret
sharing~\cite{briegel1998quantum}; and general graph states enable
measurement-based quantum computation (MBQC)~\cite{raussendorf2001one,browne2005resource},
quantum error-correcting codes distributed across network
nodes~\cite{gottesman1997stabilizer}, and multi-party communication tasks
that neither Bell pairs nor GHZ states can support~\cite{hein2004multiparty}.
As a concrete illustration: in Section~\ref{sec:usecase}, I show
that a four-qubit linear cluster state $\LFour$ distributed by the PQW
enables a remote party to receive an arbitrary single-qubit rotation
$R_x(-\theta)\ket{0}$ from a spatially separated sender using only
single-qubit measurements and 9~classical bits --- no direct quantum
channel between sender and receiver is required after the initial
resource distribution. This end-to-end example grounds the protocol
in an operational task: the PQW is not merely a method for generating
graph states, but a resource-generation layer on which distributed
measurement-based quantum computation can be built.

The PQW operates at the \emph{conversion layer} of a quantum network
stack, above the link layer.
Link-layer protocols --- such as DLCZ~\cite{duan2001long} or
Barrett--Kok~\cite{barrett2005efficient} --- generate Bell pairs
$\Phiplus$ between adjacent nodes as the network primitive; since
$\GKtwo = (H\otimes I)\Phiplus$
(Proposition~\ref{prop:shift_topology}(i)),
a single local Hadamard on one qubit converts each pair to the
PQW resource state at negligible cost.
The PQW takes exactly $|E|$ such resource pairs as input, one per edge
of the target graph $G=(V,E)$, and outputs a single copy of $\ket{G}$
distributed across the $|V|$ parties using only operations local to each
party and classical feed-forward.
For tree graphs, this resource count is optimal: removing an edge
$(u,v)\in E$ splits the tree into two components, defining a bipartition
across which $\ket{G}$ carries exactly one ebit of entanglement, and ---
since link-layer pairs connect adjacent nodes --- the pair on $(u,v)$ is
the only resource crossing this cut. Because LOCC cannot increase
entanglement across any cut~\cite{horodecki2009quantum}, at least one
pair per edge is necessary.
For graphs with cycles this per-edge cut argument no longer applies:
the entanglement of $\ket{G}$ across a bipartition equals the
$\mathbb{F}_2$ rank of the corresponding off-diagonal block of the
adjacency matrix, which can be strictly smaller than the number of
crossing edges (for $C_3$, the single-vertex cut is crossed by two
edges but carries only one ebit).
The PQW's count of $|E|$ pairs therefore remains achievable for cyclic
graphs, but whether it is optimal is left as an open question.
The PQW thus establishes a clean separation of concerns:
link-layer protocols are responsible for generating the $|E|$
elementary pairs; the PQW is responsible for converting them into
the target graph state with no additional entanglement overhead.

Quantum walks provide a natural mechanism for entanglement distribution.
Meignant et al.~\cite{meignant2019distributing} showed that discrete-time
quantum walk (DTQW) steps can distribute Bell pairs and GHZ states using
elementary two-qubit resources. Chen et al.~\cite{chen2025entanglement}
extended this to a quantum repeater framework, distributing $d$-dimensional
Bell and GHZ states across arbitrary network topologies --- selected by a
Steiner-tree search --- with experimental demonstrations on superconducting
hardware. In both cases the shift is a (generalised) conditional permutation
of the position register, reducing to CNOT for qubits, and both distribute
GHZ-class entanglement: their fusion primitives merge GHZ-class resources
into larger GHZ states, so neither \emph{walk-based} protocol reaches general
graph topologies. (Arbitrary graph states are reached by a separate,
non-walk protocol of Meignant et al., discussed in
Sec.~\ref{sec:discussion}, at the cost of joint two-qubit measurements at
intermediate nodes.) I emphasise that
``arbitrary networks'' in this line of work refers to the network over which
the state is distributed, not to the class of distributed states.

In this paper I present a graph-state distribution protocol built on two
ingredients. The first is a diagonal conditional-phase (CZ) step in place of
the position-permuting shift of prior walk-based schemes; the second is a
two-register architecture in which every party holds a data qubit alongside
its resource qubits. The two play distinct roles. The CZ form is symmetric
(unlike CNOT) and is the natural gate of graph-state preparation
(Eq.~\eqref{eq:graphstate}), so the target state appears in its own frame
and the required corrections reduce to $Z$ operators; the CZ and CNOT shifts
are themselves locally equivalent, and the choice between them is a change of
frame rather than of capability
(Proposition~\ref{prop:shift_topology}). It is the separate data register
that allows the adjacency of an arbitrary target graph to be imprinted edge
by edge using strictly local operations
(Remark~\ref{rem:reachable_class}). Because the elementary step is obtained
from the coined DTQW by the single substitution of a phase for a permutation,
I refer to the construction as the \emph{phase quantum walk} (PQW), the name
recording that derivation rather than a taxonomic claim
(Sec.~\ref{sec:pqw}). I prove that the two ingredients together provide a
unified framework for distributing \emph{arbitrary} graph states across
quantum networks from elementary two-qubit resources.

The main contributions of this paper are:
\begin{enumerate}[label=(\roman*),leftmargin=*,itemsep=2pt]
  \item A \textbf{graph-state distribution primitive} (the \emph{phase
    quantum walk}, PQW): a two-register construction obtained from the coined
    DTQW by replacing the position-shift with a conditional phase (CZ), whose
    structural properties --- symmetry, $Z$-transparency, and local
    equivalence to the CNOT shift --- are established
    (Sec.~\ref{sec:pqw}).
  \item The \textbf{Byproduct Lemma}: a PQW step teleports edge entanglement
    with a correctable Pauli $X$ byproduct (Lemma~\ref{lem:byproduct}).
  \item A complete \textbf{distribution protocol} for the four-qubit linear
    cluster state $\LFour$ with analytical correction formula and $F=1.0$
    verification for all 64 outcomes (Sec.~\ref{sec:protocol}).
  \item \textbf{Exact closed-form fidelities} under depolarising and phase
    damping noise, expressed as products over vertices weighted by degree
    (Sec.~\ref{sec:noise}).
  \item \textbf{Generalisation} to arbitrary graphs
    (Sec.~\ref{sec:generalisation}), with the universal formula verified
    exhaustively on 22 connected graphs (32{,}212 outcomes) and the
    topology-specific formulas on eight
    (Tables~\ref{tab:z_only_verify} and~\ref{tab:topologies}).
  \item \textbf{LC-inequivalence} of the protocol outputs, verified via
    Schmidt rank --- a known classification fact included as certification
    that the protocol distributes entanglement beyond the GHZ class
    (Sec.~\ref{sec:lc_ineq}).
  \item The \textbf{Universal Z-only Correction Theorem}: a single correction
    formula $C_v = Z_v^{g_v}$ valid for all graphs and all outcomes, proved
    via stabiliser tracking (Sec.~\ref{sec:universal_correction}).
  \item A \textbf{concrete distributed use case}: PQW-distributed $\LFour$
    enables remote single-qubit state preparation via MBQC, using only
    single-qubit measurements and classical communication, verified at
    $F=1.0$ for all outcomes (Sec.~\ref{sec:usecase}).
  \item \textbf{Hardware execution and an embeddability theorem}: direct
    fidelity estimation on \texttt{ibm\_kingston} certifies three protocols
    at $F=0.73$--$0.82$, and a girth argument gives necessary conditions
    for native SWAP-free execution on a device of coupling girth $\gamma$
    --- girth at least $\gamma/3$ and maximum degree at most $3$, which on
    heavy-hex ($\gamma=12$) forbids triangles --- under which the canonical
    interaction graph $I(K_4)$ is provably non-embeddable, although the state
    $\ket{K_4}$ itself remains reachable through its LC-equivalent star
    representative
    (Sec.~\ref{sec:hardware},~\ref{sec:embeddability}).
\end{enumerate}

\section{Background}
\label{sec:background}

\subsection{Graph States and Stabiliser Formalism}

\begin{definition}[Graph state~\cite{hein2004multiparty}]
\label{def:graphstate}
For a graph $G=(V,E)$, the associated graph state is
\begin{equation}
  \ket{G} = \prod_{(u,v)\in E} \CZ_{uv}\,\ket{+}^{\otimes|V|}.
  \label{eq:graphstate}
\end{equation}
It is the unique $+1$ eigenstate of the stabiliser generators
$K_v = X_v\!\prod_{u\sim v} Z_u$ for each $v\in V$.
\end{definition}

The simplest graph state is the two-qubit graph state
$\GKtwo = \CZ\ket{++} = \frac{1}{2}(\ket{00}+\ket{01}+\ket{10}-\ket{11})$,
which serves as the elementary resource throughout this paper.

The four-qubit \emph{linear cluster state} $\LFour$ is the graph state on the
path $P_4 = A$---$B$---$C$---$D$, with stabilisers
$K_A = X_AZ_B$, $K_B = Z_AX_BZ_C$, $K_C = Z_BX_CZ_D$, $K_D = Z_CX_D$.

\subsection{Walk-Based Distribution and the Origin of the GHZ Restriction}

A coined DTQW~\cite{aharonov1993quantum} (see
Ref.~\cite{kadian2021review} for a survey of walk formulations and their
application domains) on $\mathcal{H}_C\otimes\mathcal{H}_P$ applies
$U = S\cdot(C\otimes I)$ per step, where $C$ is a coin and $S$ a shift.
The CNOT shift satisfies $S_\mathrm{CNOT}\ket{c}\ket{x} = \ket{c}\ket{x\oplus c}$
and prepares Bell pairs via $\CNOT\ket{+0} = \Phiplus$. In the walk-based
distribution protocols built on this step, the entangled resource particles
are themselves the walk registers and the output: a step fuses two GHZ-class
resources at a node into a single larger GHZ-class state, so iterating over a
network yields GHZ states among the selected parties. The restriction to
GHZ-class outputs is therefore a property of this architecture and of the
resources it consumes; it is not a consequence of the shift operator's
identity, since the CNOT and CZ shifts are locally equivalent
(Proposition~\ref{prop:shift_topology}).

\paragraph{Architectural distinction from prior DTQW protocols.}
In the protocol of Chen et al.~\cite{chen2025entanglement}, the resource particles
serve \emph{dual roles}: they are simultaneously the walk registers (coin
and position) and the output qubits --- the output entangled state is created
\emph{within} the walk register via entanglement swapping, with no separate
system required.

The PQW takes a fundamentally different approach. It introduces explicit
\emph{data qubits} $d_v$ at each party, separate from the resource qubits
$r_{v,e}$. The CZ walk step acts \emph{between} data qubit and resource
qubit; the resource qubits are subsequently measured and discarded; the
data qubits hold the output graph state. The step therefore does not act
on the data qubits directly; it acts on the resource register and
teleports graph-state entanglement into the data register via
measurement-induced byproducts.

The role of this separation is to decouple two functions that the
single-register protocols must combine. When the resource particles are also
the output, the systems that a node measures are drawn from the same pool
that must survive to carry the distributed state, and fusing GHZ-class
resources at a node yields a GHZ-class output. With a data register the two
functions come apart: the resource qubits are measured and discarded, while
the data qubits --- never measured during distribution --- accumulate the
adjacency of the target graph one edge at a time. This is not the only route
to arbitrary graph states; the edge-decorated-complete-graph protocol of
Meignant \emph{et al.}~\cite{meignant2019distributing} reaches them with no
data register, at the cost of Bell-state measurements at intermediate nodes
(Remark~\ref{rem:reachable_class}). What the data register buys is arbitrary
target graphs using operations that are \emph{strictly local to each party}:
single-qubit gates, CZ between a party's own data and resource qubits, and
single-qubit measurements.

\textit{Relationship to graph state preparation and MBQC.}
The phase quantum walk should not be confused with two related but distinct
frameworks.

\emph{(a) Local graph state preparation.}
Standard graph state preparation~\cite{hein2004multiparty} applies CZ gates
directly between co-located data qubits. The PQW operates in the quantum
\emph{network} setting: data qubits are held by spatially separated parties
who share only pre-distributed two-qubit resource states $\GKtwo$ and
classical communication. No direct interaction between data qubits is
possible. The CZ gate appears between a data qubit and a \emph{local}
resource qubit at the same party --- never between two data qubits.

\emph{(b) Measurement-based quantum computation (MBQC).}
In MBQC~\cite{raussendorf2001one}, a graph state is the \emph{input}:
it must be fully prepared before computation begins, and is then consumed
by adaptive single-qubit measurements. The PQW produces a graph state
as its \emph{output}: the protocol creates it distributedly across the
network from elementary two-qubit resources. The PQW is therefore
the \emph{resource generation layer} of the kind that must precede MBQC in a
distributed quantum computing architecture.

\section{The Phase Quantum Walk}
\label{sec:pqw}

\begin{definition}[Phase Quantum Walk]
\label{def:pqw}
A \emph{phase quantum walk} on $\mathcal{H}_P\otimes\mathcal{H}_C$
(position $\otimes$ coin) has single-step evolution
\begin{equation}
  U_\varphi = (I_P\otimes H_C)\cdot\CZ_{PC},
  \label{eq:pqw_step}
\end{equation}
where the \emph{conditional phase operator} is
\begin{equation}
  \CZ_{PC} = \mathrm{diag}(1,\,1,\,1,\,-1).
  \label{eq:phase_op}
\end{equation}
The operator $\CZ_{PC}$ is diagonal and applies phase $(-1)^{P\cdot C}$.
The position register is \emph{never permuted}: the step acts by conditional
phase followed by a coin Hadamard, and its effect is entirely a
redistribution of amplitude within the fixed position basis. In the
distribution protocol of Sec.~\ref{sec:protocol} the position register is
realised by a party's data qubit and the coin register by the resource qubit
it holds, so that the step of Eq.~\eqref{eq:pqw_step} is the pair
$\CZ(d_v,r_{v,e})$ followed by $H(r_{v,e})$.
\end{definition}

\begin{remark}[Scope of the novelty claim]
\label{rem:novelty_scope}
What is claimed as new is the construction as a whole: the two-register
architecture, the measurement pattern it induces, and the universal $Z$-only
correction that follows from them for arbitrary target graphs
(Theorem~\ref{thm:universal_z}). The protocol is executable in a single
distribution round under the parallelisation principle established by Fischer
and Towsley~\cite{fischer2021distributing}; that principle is not claimed
here. Two further ingredients are
explicitly \emph{not} claimed. The shift substitution taken in isolation is a
change of local frame (Proposition~\ref{prop:shift_topology}). And the
stabiliser-absorption mechanism that explains why the residual $X$ corrections
of Theorems~\ref{thm:correction} and~\ref{thm:c4_correction} may be dropped
(Remark~\ref{rem:relation_topology}) was identified independently, in a
different protocol context, by Zheng \emph{et
al.}~\cite{zheng2026scalable} (Sec.~\ref{sec:discussion}); the proof of
Theorem~\ref{thm:universal_z} does not use it. Phases appear in two other families of walk
model, which should be distinguished from this one. In phase-disordered coined
walks~\cite{schreiber2011decoherence} the phases are a randomised realisation
applied to a transport walk, and the object of study is localisation rather
than a prepared output state. In the graph-phased Szeg\H{e}dy walk of Ortega
and Mart\'{\i}n-Delgado~\cite{ortega2025complex}, which equips Szeg\H{e}dy's
model with link phases and local arbitrary phase rotations, the phases
decorate a walk on the edge space of a graph, the two node registers being the
walk itself, and the target application is search. Neither pairs a diagonal
conditional-phase step with a separate data register, and neither is used for
graph state distribution. The link phases of
Ref.~\cite{ortega2025complex} are nevertheless the closest existing analogue
of the variational $\mathrm{CP}(\theta)$ extension proposed as Open
Problem~(ii) in Sec.~\ref{sec:discussion}.
\end{remark}

\paragraph{Origin and status of the ``walk'' terminology.}
The PQW is derived from the coined DTQW by a single structural
substitution, and the name records that lineage rather than a taxonomic
claim. The starting observation is the one recorded in
Sec.~\ref{sec:background}: the walk-based repeater protocols of Chen
\emph{et al.}~\cite{chen2025entanglement} distribute only GHZ-class states.
Seeking a construction whose elementary step produces the
$\ket{G_{K_2}}$ correlations of a graph state directly, rather than the
$ZZ$ correlations of a Bell pair, I replace the position-permuting CNOT
shift with the diagonal CZ: where CNOT permutes the position register
conditioned on the coin, CZ instead applies a conditional phase,
$\ket{c}\ket{x}\mapsto(-1)^{cx}\ket{c}\ket{x}$, leaving positions fixed. This
substitution --- permutation replaced by phase --- is the definition of the
PQW step~\eqref{eq:pqw_step}, and it is the sole change from the standard
coined-walk template. I stress that the substitution is a change of
\emph{frame}, not of capability: the two shifts are locally equivalent
(Proposition~\ref{prop:shift_topology}), and it is the two-register
architecture introduced below, not the gate identity, that makes arbitrary
target graphs reachable without any joint operation between separated parties
(Remark~\ref{rem:reachable_class}). What the CZ form buys is that the target
graph state appears in its natural frame, so the required corrections are
$Z$-only.

I do not claim that the resulting object is a quantum walk in the strict
dynamical sense: the position register is never permuted, and the
distribution protocol applies the step once per edge to disjoint resource
pairs (Definition~\ref{def:general_protocol}) rather than iterating it on a
shared register, so there is no transport, spreading, or multi-step spatial
interference. What the substitution provides is the correct \emph{gate-level}
ingredient --- a symmetric, $Z$-transparent, $X$-basis-correlating conditional
operator (Remark~\ref{rem:cz_props}, Propositions~\ref{prop:xbasis} and
\ref{prop:output}) --- from which the graph-state distribution protocol and its
universal correction (Theorem~\ref{thm:universal_z}) follow. The remainder of
the paper concerns those consequences; the walk framing is the route by which
they were found, not a load-bearing claim.

I record two elementary properties of $\CZ$ that are used repeatedly,
then establish the structural propositions underpinning the protocol.

\begin{remark}[Elementary properties of $\CZ$]
\label{rem:cz_props}
Since $\CZ = \mathrm{diag}(1,1,1,-1)$ is diagonal in the computational basis:
\begin{enumerate}[label=(\roman*)]
  \item \textbf{Z-transparency.}
    $[\CZ,\,Z\otimes I] = [\CZ,\,I\otimes Z] = [\CZ,\,Z\otimes Z] = 0$,
    since all operators are simultaneously diagonal.
  \item \textbf{Symmetry.}
    $\CZ_{PC} = \CZ_{CP}$, since the phase $(-1)$ is applied if and only
    if both qubits are $\ket{1}$ --- a symmetric condition.
\end{enumerate}
These follow directly from the definition and require no further proof.
\end{remark}

\begin{proposition}[$X$-basis equivalence]
\label{prop:xbasis}
$(I_P\otimes H_C)\cdot\CZ\cdot(I_P\otimes H_C) = \CNOT_{P\to C}$ and
$(H_P\otimes I_C)\cdot\CZ\cdot(H_P\otimes I_C) = \CNOT_{C\to P}$.
\end{proposition}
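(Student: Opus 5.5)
The plan is to verify both identities at the level of the controlled-unitary decomposition rather than by multiplying $4\times 4$ matrices. I first write $\CZ$ in block form with respect to the position register:
\begin{equation}
  \CZ_{PC} = \ketbra{0}{0}_P\otimes I_C + \ketbra{1}{1}_P\otimes Z_C .
\end{equation}
This is immediate from $\CZ=\mathrm{diag}(1,1,1,-1)$: the phase $-1$ appears only on $\ket{11}$, which is the $\ket{1}_P$ block acted on by $Z_C$.

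Conjugating by $I_P\otimes H_C$ leaves the projectors on $P$ untouched and sends $Z_C\mapsto H Z H = X_C$, using $H^2=I$. This gives $\ketbra{0}{0}_P\otimes I_C+\ketbra{1}{1}_P\otimes X_C$, which is by definition $\CNOT_{P\to C}$, the gate $\ket{p}\ket{c}\mapsto\ket{p}\ket{c\xor p}$. That proves the first identity.

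For the second identity I avoid repeating the computation by using the symmetry of Remark~\ref{rem:cz_props}(ii), $\CZ_{PC}=\CZ_{CP}$. This lets me write the same operator with the coin as the control:
\begin{equation}
  \CZ = I_P\otimes\ketbra{0}{0}_C + Z_P\otimes\ketbra{1}{1}_C .
\end{equation}
Conjugating by $H_P\otimes I_C$ then turns $Z_P$ into $X_P$, which yields $\CNOT_{C\to P}$.

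There is no real obstacle. The only point needing care is bookkeeping of the ordering convention $\mathcal{H}_P\otimes\mathcal{H}_C$, so that the Hadamard lands on the target register of the intended CNOT and the control/target labels come out as $P\to C$ and $C\to P$ respectively. As a sanity check, I would also confirm the first identity on the computational basis, e.g. $\ket{1}\ket{0}\mapsto\ket{1}\ket{1}$.
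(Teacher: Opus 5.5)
Your proof is correct and matches the paper's own argument: the block decomposition $\CZ=\ketbra{0}{0}\otimes I+\ketbra{1}{1}\otimes Z$, the conjugation $HZH=X$ on the target block, and the symmetry $\CZ_{PC}=\CZ_{CP}$ for the second identity. Your version is slightly more explicit than the paper's, which leaves the second identity to ``symmetry''; you write out the coin-controlled form $I_P\otimes\ketbra{0}{0}_C+Z_P\otimes\ketbra{1}{1}_C$ before conjugating.
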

\begin{proof}
Direct calculation on the computational basis $\{\ket{00},\ket{01},\ket{10},\ket{11}\}$:
\begin{align*}
  (I\otimes H)\CZ(H\otimes I)\ket{00} &= (I\otimes H)\CZ\ket{+0} = (I\otimes H)\ket{+0} = \ket{+}\ket{0}+... 
\end{align*}
Equivalently: $\CZ = |0\rangle\langle 0|\otimes I + |1\rangle\langle 1|\otimes Z$,
so $(I\otimes H)\CZ(I\otimes H) = |0\rangle\langle 0|\otimes I + |1\rangle\langle 1|\otimes HZH
= |0\rangle\langle 0|\otimes I + |1\rangle\langle 1|\otimes X = \CNOT_{P\to C}$.
The second identity follows by symmetry of $\CZ$ (Remark~\ref{rem:cz_props}(ii)).
\end{proof}

\begin{proposition}[Graph state output]
\label{prop:output}
$\CZ\ket{++} = \GKtwo$ with stabilisers $K_P = X_PZ_C$ and $K_C = Z_PX_C$.
\end{proposition}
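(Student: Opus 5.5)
The plan is to verify the two claims in turn: first the explicit amplitudes of $\CZ\ket{++}$, then that $K_P = X_PZ_C$ and $K_C = Z_PX_C$ stabilise this vector. For the first part, expand $\ket{++} = \frac{1}{2}(\ket{00}+\ket{01}+\ket{10}+\ket{11})$ and act with $\CZ = \mathrm{diag}(1,1,1,-1)$ from Eq.~\eqref{eq:phase_op}. Only the $\ket{11}$ component picks up the phase $-1$, which gives $\frac{1}{2}(\ket{00}+\ket{01}+\ket{10}-\ket{11})$. This is exactly $\GKtwo$ as written after Definition~\ref{def:graphstate}, and it also agrees with Eq.~\eqref{eq:graphstate} for the single-edge graph $K_2$.

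For the stabilisers, I would not check eigenvalue equations on amplitudes directly. Instead I would conjugate the stabilisers of the input through the gate. The product state $\ket{++}$ is stabilised by $X_P$ and $X_C$, so $\CZ\ket{++}$ is stabilised by $\CZ X_P \CZ^\dagger$ and $\CZ X_C \CZ^\dagger$. Writing $\CZ = \ketbra{0}{0}\otimes I + \ketbra{1}{1}\otimes Z$, as in the proof of Proposition~\ref{prop:xbasis}, a one-line computation gives $\CZ(X\otimes I)\CZ = X\otimes Z$. The first ingredient is the identity $X\ketbra{0}{0}X = \ketbra{1}{1}$. The second is that the $Z$ factors combine with $Z^2 = I$ on the branch where the control flips. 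The mirror identity $\CZ(I\otimes X)\CZ = Z\otimes X$ then follows at once from the symmetry of $\CZ$ in Remark~\ref{rem:cz_props}(ii). These are precisely $K_P$ and $K_C$.

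To close, I would note that $K_P$ and $K_C$ commute, because they anticommute on both sites. They are also independent, so they generate a stabiliser group of order $4$ on two qubits, and its joint $+1$ eigenspace is one-dimensional. This agrees with the uniqueness statement in Definition~\ref{def:graphstate}, specialised to $K_2$ with the neighbour sets $\{C\}$ and $\{P\}$.

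I do not expect a genuine obstacle. The only point needing care is the sign bookkeeping in the conjugation $\CZ X_P \CZ$. If that is in doubt, I would cross-check by applying $X_PZ_C$ directly to the four amplitudes: $X_P$ swaps $\ket{0x}\leftrightarrow\ket{1x}$, and $Z_C$ negates the $\ket{x1}$ terms, so the vector should map back to itself.
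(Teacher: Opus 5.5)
Your proposal is correct and is essentially the paper's argument: the amplitude computation is identical, and where the paper calls the stabiliser check immediate, your conjugation $\CZ X_P\CZ = X_PZ_C$ (with the mirror case by symmetry) is the Heisenberg-picture form of that check, the same rule the paper later uses in the Phase Lemma. Your closing uniqueness remark is a harmless addition, consistent with Definition~\ref{def:graphstate}.
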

\begin{proof}
Direct calculation: $\CZ\ket{++} = \tfrac{1}{2}(\ket{00}+\ket{01}+\ket{10}-\ket{11})$.
Stabiliser verification is immediate.
\end{proof}

\begin{proposition}[Local equivalence of the shift; what the CZ choice fixes]
\label{prop:shift_topology}
\begin{enumerate}[label=(\roman*),leftmargin=*]
  \item The two shift operators and their elementary resources are related by
    local Hadamards:
    $\CZ = (I\otimes H)\,\CNOT\,(I\otimes H)$ and
    $\GKtwo = (I\otimes H)\Phiplus$.
  \item Consequently the PQW step --- $\CZ(d,r)$, then $H(r)$, then a
    $Z$-basis measurement of $r$ --- is \emph{identical} to the CNOT form in
    which $H(r)$ is applied before a $\CNOT(d\!\to\!r)$ and $r$ is measured
    directly. The two are the same protocol written in different local frames,
    so the choice of shift operator alone cannot determine which states a
    protocol is able to distribute.
  \item What the CZ choice does fix is the \emph{frame}. With $\GKtwo$
    resources and the CZ step, the post-measurement data stabilisers are
    $(-1)^{g_v}K_v$ (Lemma~\ref{lem:phase}) with $\ket{G}$ itself --- not a
    local-Clifford image of it --- as the target, so the corrections are
    $Z$-only (Theorem~\ref{thm:universal_z}).
\end{enumerate}
\end{proposition}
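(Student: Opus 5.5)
The proof splits along the three parts of Proposition~\ref{prop:shift_topology}. Parts (i) and (ii) are direct gate identities. Part (iii) is essentially a pointer to later results, so for it I only need to check consistency with the frame established in (i)--(ii).

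For (i), I start from Proposition~\ref{prop:xbasis}, which gives $(I\otimes H)\CZ(I\otimes H)=\CNOT_{P\to C}$. Conjugating both sides by $I\otimes H$ and using $H^2=I$ yields $\CZ=(I\otimes H)\CNOT(I\otimes H)$. For the resource, I use $\Phiplus=\CNOT\ket{+0}$ and $\ket{+}=H\ket{0}$. Then
\[
(I\otimes H)\Phiplus=(I\otimes H)\CNOT(I\otimes H)\ket{+}\ket{+}=\CZ\ket{++}=\GKtwo,
\]
where the last equality is Proposition~\ref{prop:output}. One could also verify this on amplitudes: $(I\otimes H)\frac{1}{\sqrt2}(\ket{00}+\ket{11})=\frac12(\ket{00}+\ket{01}+\ket{10}-\ket{11})$. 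The identity is consistent with the abstract's $(H\otimes I)\Phiplus$, since $\Phiplus$ is symmetric under swapping the two qubits.

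For (ii), I write the CZ form of the step as the operator sequence $M_Z(r)\,H_r\,\CZ_{dr}$, with $M_Z$ the $Z$-basis measurement. Substituting (i) gives $H_r\CZ_{dr}=H_rH_r\CNOT_{d\to r}H_r=\CNOT_{d\to r}H_r$. So the sequence equals $M_Z(r)\,\CNOT_{d\to r}\,H_r$. This means $H$ on $r$, then CNOT from $d$ to $r$, then a direct measurement of $r$. The two forms give identical Kraus operators for every outcome, and hence the same post-measurement states and the same classical records. The only care needed is the operator ordering: $H$ must land before the CNOT, and the identity $H^2=I$ is what moves it there. I then conclude that any protocol built from CZ steps can be rewritten verbatim with CNOT steps, so the class of reachable states cannot depend on which shift is chosen.

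For (iii), I do not prove anything new here. The claim is that, in the CZ frame with $\GKtwo$ resources, the data stabilisers after measurement are $(-1)^{g_v}K_v$, and this is exactly the content of Lemma~\ref{lem:phase} and Theorem~\ref{thm:universal_z}, proved later. What I will argue now is why the frame matters. The target $\ket{G}$ is stabilised by the $K_v$ themselves rather than by conjugates $U K_v U^\dagger$ under a local Clifford $U$. A sign flip on a single generator $K_v$ is undone by $Z_v$, since $Z_v$ anticommutes with $K_v$ (which contains $X_v$) and commutes with every $K_u$, $u\neq v$. It follows that the correction is $\prod_v Z_v^{g_v}$.

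The main obstacle is part (iii): it depends on results not yet established, so here it can only be justified as a forward reference. Part (ii) has one subtle point, the operator ordering with $H$, but otherwise (i) and (ii) are routine.
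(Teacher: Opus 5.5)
Your proposal is correct and follows the paper's proof essentially step for step: (i) by Hadamard-conjugating the target qubit (you go through Proposition~\ref{prop:xbasis}, which rests on the same $HZH=X$ computation the paper uses) and relating the resources via $\CNOT\ket{+0}=\Phiplus$; (ii) by substituting $\CZ_{dr}=H_r\,\CNOT_{d\to r}\,H_r$ and cancelling $H_r^2$; and (iii) as a direct pointer to Lemma~\ref{lem:phase} and Theorem~\ref{thm:universal_z}. Your remark that the two forms have identical Kraus operators for every outcome makes exact what the paper only records as verified numerically.
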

\begin{proof}
(i) Direct computation: $\CZ = \ketbra{0}{0}\otimes I + \ketbra{1}{1}\otimes Z$
and $HZH = X$ give the first identity; applying $I\otimes H$ to
$\GKtwo=\CZ\ket{++}$ gives $\CNOT\ket{+0}=\Phiplus$.
(ii) Substituting $\CZ(d,r) = H_r\,\CNOT(d\!\to\!r)\,H_r$ into the step and
using $H_r^2 = I$ moves the Hadamard from after the gate to before it; the
measurement statistics and post-measurement data states are unchanged
(verified numerically for all outcomes).
(iii) Immediate from Lemma~\ref{lem:phase} and
Theorem~\ref{thm:universal_z}.
\end{proof}

\begin{remark}[What determines the reachable class]
\label{rem:reachable_class}
Proposition~\ref{prop:shift_topology}(ii) shows that the distributable state
class is a property of the protocol architecture and of the operations
permitted at a node, not of the shift operator in isolation. The three
families in the literature separate along exactly these lines. In the
walk-based repeater protocols of Chen \emph{et al.}~\cite{chen2025entanglement},
the resource particles serve simultaneously as walk registers and as output
qubits; the fusion primitive merges GHZ-class resources (qudit Bell and GHZ
states) into larger GHZ states, and the reachable class is correspondingly
GHZ/star. The edge-decorated-complete-graph protocol of Meignant \emph{et
al.}~\cite{meignant2019distributing} reaches arbitrary graph states with no
data register, but requires Bell-state measurements --- joint two-qubit
operations spanning two parties --- at intermediate nodes. The PQW introduces
a separate data register per party, which allows the target graph's adjacency
to be imprinted edge by edge, and thereby reaches arbitrary graph states with
\emph{every two-qubit gate confined within a single party}: each CZ acts on a
data qubit and a co-located resource qubit held by the same node, and the
measurements are single-qubit. That combination ---
arbitrary target graphs with no joint operation between separated parties ---
is what distinguishes the present protocol, and it follows from the
architecture rather than from the gate identity.
\end{remark}

\begin{remark}[Properties of the step unitary]
\label{rem:step_unitary}
The step unitary $U_\varphi=(I_P\otimes H_C)\CZ_{PC}$ is block-diagonal in
the computational basis, with eigenvalues $\{1,-1,e^{i\pi/4},e^{-i\pi/4}\}$
and minimal period $U_\varphi^8=I$ (verified numerically,
$\max|U_\varphi^8-I|<10^{-15}$). I record these properties of the gate for
completeness; the distribution protocol applies the step exactly once per
edge, to disjoint resource pairs (Definition~\ref{def:general_protocol}),
and no iterated evolution $U_\varphi^k$ with $k>1$ occurs anywhere in the
protocol or in the results that follow.
\end{remark}

\begin{lemma}[Phase Walk Entanglement Transfer]
\label{lem:byproduct}
Let data qubit $d$ be in $\ket{+}$, and $(r,r')$ be in $\GKtwo$.
After $\CZ(d,r)$, $H(r)$, and measuring $r$ with outcome $s\in\{0,1\}$,
the pair $(d,r')$ is in $(X^s\otimes I)\,\Phiplus$,
with each outcome occurring with probability $\tfrac{1}{2}$.
\end{lemma}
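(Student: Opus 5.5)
The plan is a direct state-vector computation. It is short enough that no stabiliser bookkeeping is needed, and it gives the outcome probabilities explicitly. The three-qubit initial state is $\ket{+}_d\otimes\GKtwo_{rr'} = \CZ_{rr'}\ket{+}_d\ket{+}_r\ket{+}_{r'}$. Applying $\CZ(d,r)$, and using that diagonal $\CZ$ gates commute (Remark~\ref{rem:cz_props}), the state before the Hadamard is $\CZ_{dr}\CZ_{rr'}\ket{+++}$. This is the graph state on the path $d$---$r$---$r'$, with $r$ as the middle vertex.

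The key step is to expand this state in the computational basis of the middle qubit $r$. Since $\CZ = \ketbra{0}{0}\otimes I + \ketbra{1}{1}\otimes Z$ with $r$ as the control of both gates, I expect
\begin{equation*}
\CZ_{dr}\CZ_{rr'}\ket{+}_d\ket{+}_r\ket{+}_{r'} = \tfrac{1}{\sqrt2}\bigl(\ket{0}_r\ket{+}_d\ket{+}_{r'} + \ket{1}_r\ket{-}_d\ket{-}_{r'}\bigr),
\end{equation*}
because $Z\ket{+}=\ket{-}$ on each neighbour when $r=1$. Applying $H(r)$ sends $\ket{0}_r\mapsto\ket{+}_r$ and $\ket{1}_r\mapsto\ket{-}_r$. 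Regrouping by the $Z$-basis value $s$ of $r$ should then give
\begin{equation*}
\tfrac{1}{2}\Bigl[\ket{0}_r\bigl(\ket{++}+\ket{--}\bigr)_{dr'} + \ket{1}_r\bigl(\ket{++}-\ket{--}\bigr)_{dr'}\Bigr].
\end{equation*}

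The last step is to identify the two $(d,r')$ branches with Bell states. Expanding in the computational basis gives $(\ket{++}+\ket{--})/\sqrt2=\Phiplus$ and $(\ket{++}-\ket{--})/\sqrt2=(\ket{01}+\ket{10})/\sqrt2=(X\otimes I)\Phiplus$. Each branch therefore has norm $1/\sqrt2$, so each outcome $s$ occurs with probability $\tfrac12$. The normalised post-measurement state of $(d,r')$ is $(X^s\otimes I)\Phiplus$, as claimed. The $X$ byproduct on $d$ could equally be written as $X$ on $r'$, by the symmetry of $\Phiplus$.

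The only step needing care is this final Bell-state identification, in particular that the minus-sign branch becomes $\ket{\Psi^+}$ with no residual phase. As an independent check I would also track stabilisers. Before the measurement the stabilisers include $X_dZ_r$, $Z_dX_rZ_{r'}$ and $Z_rX_{r'}$. After $H(r)$ these become $X_dX_r$, $Z_dZ_rZ_{r'}$ and $X_rX_{r'}$. A $Z$-measurement of $r$ with outcome $s$ leaves the products $X_dX_{r'}$ with sign $+1$ and $Z_dZ_{r'}$ with sign $(-1)^s$. These are exactly the stabilisers of $(X^s\otimes I)\Phiplus$, which confirms the computation.
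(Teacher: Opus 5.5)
Your proposal is correct and follows essentially the same route as the paper's proof: a direct state-vector computation of $\CZ(d,r)\ket{+}_d\GKtwo_{rr'}$, then $H(r)$, then regrouping the $(d,r')$ branches into $\Phiplus$ and $(X\otimes I)\Phiplus$ with amplitude $1/\sqrt{2}$ each. Organising the expansion through the path graph state $d$---$r$---$r'$ makes it shorter, and your post-Hadamard expression correctly puts $\ket{0}_r,\ket{1}_r$ on the branches, whereas the paper's displayed post-$H(r)$ line writes $\ket{\pm}_r$ (really the pre-Hadamard state) and compensates only through its parenthetical remark about $X$-basis measurement.
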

\begin{proof}
Expand $\ket{+}_d\otimes\GKtwo_{r,r'}$:
\begin{align*}
  \ket{+}\otimes\tfrac{1}{2}(\ket{00}+\ket{01}+\ket{10}-\ket{11})
  &\xrightarrow{\CZ(d,r)}
  \tfrac{1}{2\sqrt{2}}\bigl(\ket{0}(\ket{00}+\ket{01}+\ket{10}-\ket{11})
  +\ket{1}(\ket{00}+\ket{01}-\ket{10}+\ket{11})\bigr)\\
  &\xrightarrow{H(r)}
  \tfrac{1}{\sqrt{2}}\bigl(\ket{+}_r\otimes(X^0\otimes I)\Phiplus_{d,r'}
  +\ket{-}_r\otimes(X^1\otimes I)\Phiplus_{d,r'}\bigr).
\end{align*}
Projecting onto $r=s\in\{0,1\}$ (measuring in $Z$-basis after $H$, i.e.\
$X$-basis) gives $(X^s\otimes I)\Phiplus_{d,r'}$ with probability $1/2$.
\end{proof}

\begin{remark}[Single step output is a Bell pair, not a graph state]
\label{rem:bell_vs_graph}
The output of a single PQW step is a \emph{Bell pair} $\Phiplus$ (up to
a correctable Pauli $X^s$), not the graph state $\GKtwo = \CZ\ket{++}$.
Although $\Phiplus$ and $\GKtwo$ are LC-equivalent (related by local
Hadamards), they are distinct states. This distinction is important:
the target graph state $\ket{G}$ does \emph{not} emerge from a single
walk step but from the \emph{combination} of $|E|$ walk steps across
all edges of $G$, followed by local corrections. Each step contributes
one edge of entanglement in the form of a correctable Bell pair;
the graph state structure accumulates across all edges via the
stabiliser propagation tracked in Lemma~\ref{lem:phase}.
\end{remark}

\section{The $\boldsymbol{\LFour}$ Distribution Protocol}
\label{sec:protocol}

\subsection{Setup and Circuit}

Four parties $A$, $B$, $C$, $D$ are connected along the path
$P_4 = A$---$B$---$C$---$D$. For each edge $e=(u,v)$, a resource state
$\GKtwo$ is prepared and distributed: $r_{u,e}$ goes to $u$, $r_{v,e}$
to $v$. Total qubit count: $4 + 2\times3 = 10$ qubits.

\begin{figure}[t]
  \centering
  \includegraphics[width=\columnwidth]{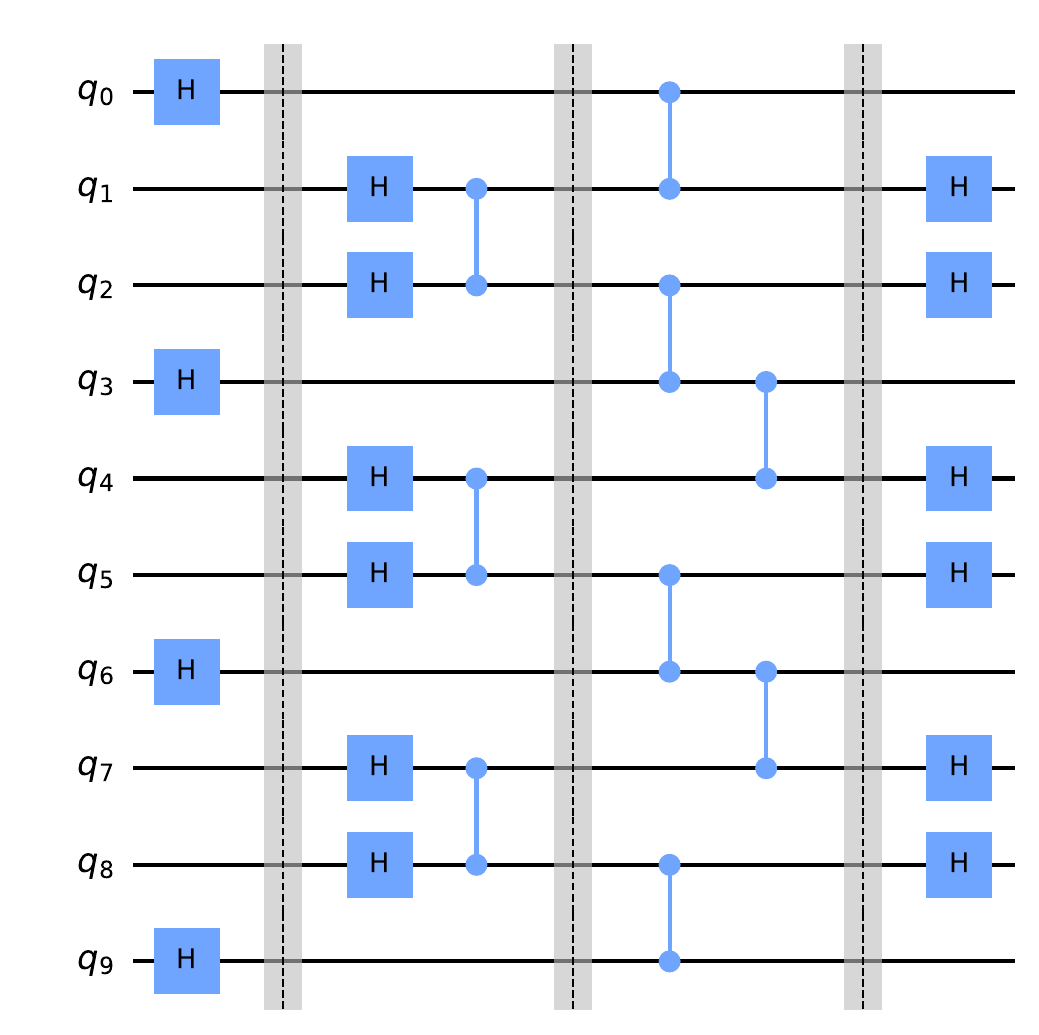}
  \caption{Qiskit circuit for the $\LFour$ distribution protocol (10 qubits).
    Barriers separate the four stages: (S1) Hadamard on data qubits,
    (S2) resource state preparation, (S3) phase walk CZ gates,
    (S4) Hadamard on resource qubits before measurement.}
  \label{fig:circuit}
\end{figure}

\subsection{Protocol}

\begin{enumerate}[label=\textbf{S\arabic*.},leftmargin=*,itemsep=1pt]
  \item \textbf{Initialise.} $d_v \leftarrow \ket{+}$ for $v\in\{A,B,C,D\}$.
  \item \textbf{Resources.} Prepare $\GKtwo_{r_{u,e},r_{v,e}}$ per edge $e$.
  \item \textbf{Phase walk.} Each party $v$ applies $\CZ(d_v, r_{v,e})$,
    $H(r_{v,e})$, measures $r_{v,e}\to s_{v,e}$ for all adjacent $e$.
  \item \textbf{Correction.} Broadcast outcomes; apply Thm.~\ref{thm:correction}.
\end{enumerate}

Label the six resource outcomes: $s_1$ (A-side of $AB$), $s_2$ (B-side),
$s_3$ (B-side of $BC$), $s_4$ (C-side), $s_5$ (C-side of $CD$), $s_6$ (D-side).

\subsection{Correction Formula}

\begin{theorem}[Correction formula for $\LFour$ distribution]
\label{thm:correction}
After the following local Pauli corrections, data qubits are in $\LFour$ for
all $64$ measurement outcomes:
\begin{align}
  A &: I, \label{eq:corr_A}\\
  B &: X^{s_2}, \label{eq:corr_B}\\
  C &: X^{s_1\xor s_4}, \label{eq:corr_C}\\
  D &: X^{s_2\xor s_3\xor s_6}\cdot Z^{s_1\xor s_4\xor s_5}.
  \label{eq:corr_D}
\end{align}
\end{theorem}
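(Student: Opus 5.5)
The plan is to track stabilisers through the whole protocol instead of composing Lemma~\ref{lem:byproduct} edge by edge. This first gives a pure-$Z$ correction. The $X$-type formulas in the statement will then follow by multiplying that correction by stabilisers of $\LFour$.

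\textbf{Step 1 (stabilisers before measurement).} Before any gate, the data qubits have stabilisers $X_{d_v}$. Each resource pair $e=(u,v)$ has stabilisers $X_{r_{u,e}}Z_{r_{v,e}}$ and $Z_{r_{u,e}}X_{r_{v,e}}$ (Proposition~\ref{prop:output}). I conjugate these by all the $\CZ(d_v,r_{v,e})$ gates, using $X_a\mapsto X_aZ_b$ and $Z$-transparency (Remark~\ref{rem:cz_props}). This gives three families of generators:
\begin{itemize}
  \item $X_{d_v}\prod_{e\ni v}Z_{r_{v,e}}$ for each vertex $v$;
  \item $X_{r_{u,e}}Z_{d_u}Z_{r_{v,e}}$ for each edge $e$;
  \item $Z_{r_{u,e}}X_{r_{v,e}}Z_{d_v}$ for each edge $e$.
\end{itemize}
Applying $H(r)$ and then measuring in the $Z$ basis is the same as measuring $X_r$ before the Hadamard. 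So I work with $X$-measurements on the resource qubits.

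\textbf{Step 2 (data stabilisers after measurement).} For each vertex $v$, I multiply the vertex generator by the edge generators $X_{r_{u,e}}Z_{d_u}Z_{r_{v,e}}$ over all edges $e=(u,v)$ incident to $v$. The $Z_{r_{v,e}}$ factors cancel, and the product is
\begin{equation*}
  K_v\prod_{e=(u,v)}X_{r_{u,e}} .
\end{equation*}
This operator commutes with every $X_r$ measurement, so it survives it. After the outcomes are obtained, the data register is stabilised by $(-1)^{g_v}K_v$ with $g_v=\bigoplus_{e\ni v}s_{e,\bar v}$.

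There are $|V|+2|E|$ independent generators and $2|E|$ commuting measurements. Hence the $|V|$ independent operators $(-1)^{g_v}K_v$ fix the data state completely. Since $Z_v$ anticommutes with $K_v$ and commutes with every $K_w$, $w\neq v$, the state $\prod_vZ_v^{g_v}\cdot(\text{data})$ equals $\LFour$.

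For the labelling of $P_4$ this gives
\begin{equation*}
  g_A=s_2,\quad g_B=s_1\xor s_4,\quad g_C=s_3\xor s_6,\quad g_D=s_5 .
\end{equation*}
Every outcome has a nonzero amplitude of $2^{-6/2}$ per branch, in line with Lemma~\ref{lem:byproduct}. So all 64 outcomes are covered.

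\textbf{Step 3 (converting to the stated form).} Any two corrections that differ by an element of the stabiliser group of $\LFour$ act identically on $\LFour$, up to a global phase. I propagate the $Z$ corrections rightward along the path using $K_B=Z_AX_BZ_C$, $K_C=Z_BX_CZ_D$ and $K_D=Z_CX_D$:
\begin{itemize}
  \item $Z_A^{s_2}\equiv X_B^{s_2}Z_C^{s_2}$;
  \item $Z_B^{s_1\xor s_4}\equiv X_C^{s_1\xor s_4}Z_D^{s_1\xor s_4}$;
  \item the accumulated $Z_C^{s_2\xor s_3\xor s_6}\equiv X_D^{s_2\xor s_3\xor s_6}$.
\end{itemize}
Collecting the factors gives $I$ on $A$, $X^{s_2}$ on $B$, $X^{s_1\xor s_4}$ on $C$, and $X^{s_2\xor s_3\xor s_6}Z^{s_1\xor s_4\xor s_5}$ on $D$. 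This is exactly Eqs.~\eqref{eq:corr_A}--\eqref{eq:corr_D}.

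\textbf{Main obstacle.} I expect the hardest part to be the bookkeeping in Step 2. Two points need care. First, the measurement outcome $s$ must be correctly identified as the sign of $X_r$, given the Hadamard convention of Lemma~\ref{lem:byproduct}. Second, the multiplied operator must be checked to commute with all the measured observables, and the resulting data operators must be checked to be independent, so that the post-measurement state is unique. Signs from reordering Paulis only change the global phase and do not affect the argument.
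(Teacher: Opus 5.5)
Your proposal is correct and follows essentially the paper's route. The paper also reduces the statement to the $Z$-only correction $\bigotimes_v Z_v^{g_v}$ obtained by stabiliser tracking (Lemma~\ref{lem:phase}), and relates the stated correction to it by the stabiliser element $K_B^{g_A}K_C^{g_B}K_D^{g_A\xor g_C}$; this is exactly the product $K_B^{s_2}K_C^{s_1\xor s_4}K_D^{s_2\xor s_3\xor s_6}$ that your rightward propagation in Step~3 produces, and you derive this identity by hand where the paper states it and checks it numerically over the 64 outcomes.
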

\begin{proof}
Apply Lemma~\ref{lem:byproduct} to each of the six walk steps.
For edge $AB$ (outcomes $s_1,s_2$): after the two walk steps at $A$ and $B$,
the pair $(d_A, d_B)$ acquires byproducts $X_A^{s_1}$ and $X_B^{s_2}$.
Similarly for edges $BC$ (byproducts $X_B^{s_3}, X_C^{s_4}$) and
$CD$ (byproducts $X_C^{s_5}, X_D^{s_6}$).
Accumulating at each node: $A$ has total byproduct $X_A^{s_1}$; $B$ has
$X_B^{s_2\xor s_3}$; $C$ has $X_C^{s_4\xor s_5}$; $D$ has $X_D^{s_6}$.
These raw byproducts are not yet in the stated form. The stated corrections
follow by choosing $A$ as the reference node --- fixing $C_A=I$ --- and
solving the resulting $\mathbb{F}_2$ linear system for the residual
byproducts at the remaining nodes, using the stabiliser propagation of
Lemma~\ref{lem:phase} to absorb $X$ byproducts into $Z$ corrections wherever
the corresponding stabiliser generator permits. The resulting operator
differs from the $Z$-only form of Theorem~\ref{thm:universal_z} by elements
of $\mathrm{Stab}(\LFour)$, which act trivially on the target
(Remark~\ref{rem:relation_topology}); explicitly, with
$g_A=s_2$, $g_B=s_1\xor s_4$, $g_C=s_3\xor s_6$, $g_D=s_5$
(Sec.~\ref{sec:notation}), the two corrections are related by
$\bigotimes_v C_v = \bigl(\bigotimes_v Z_v^{g_v}\bigr)\,
K_B^{\,g_A}K_C^{\,g_B}K_D^{\,g_A\xor g_C}$ up to a global phase, verified
for all $64$ outcomes. The $Z$-only form should be preferred
in practice (Remark~\ref{rem:practical}).
All 64 outcomes verified at $F=1.0$ ($1-F<10^{-12}$).
\end{proof}

\subsection{Verification}

All 64 outcomes verified at $F=1.0$ by exact statevector simulation
($1-F < 10^{-12}$, Qiskit 2.2.3 \texttt{Statevector}).

\begin{figure}[t]
  \centering
  \includegraphics[width=\columnwidth]{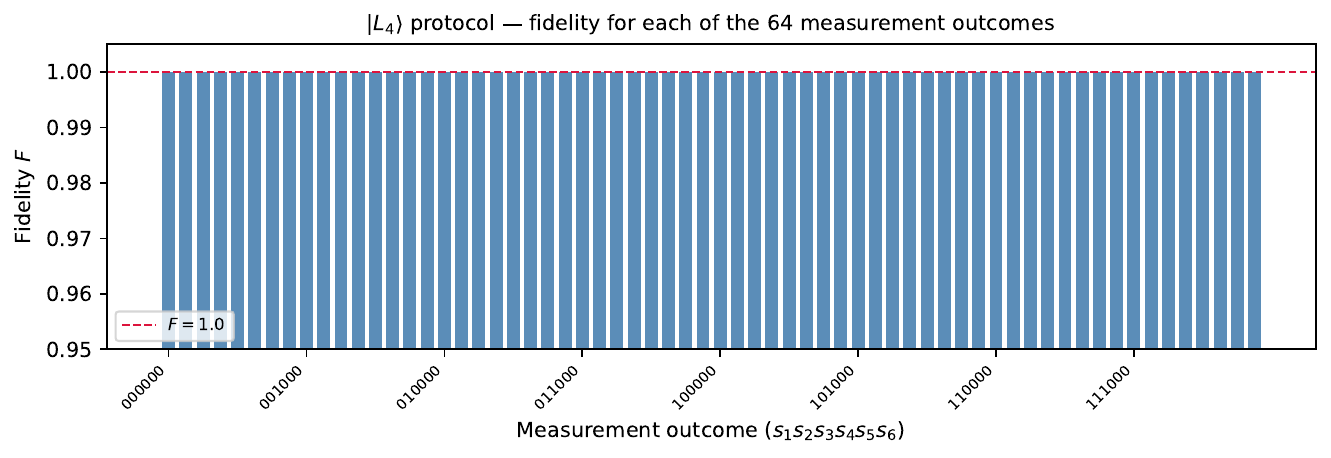}
  \caption{Fidelity with $\LFour$ for each of the 64 measurement outcomes
    after applying corrections of Thm.~\ref{thm:correction}. All 64 bars
    coincide with $F=1.0$ (red dashed). The outcome probabilities are
    separately verified to equal $1/64$ to within $10^{-12}$; they are not
    plotted here.}
  \label{fig:l4_fidelity}
\end{figure}

\section{Noisy Performance}
\label{sec:noise}

\begin{remark}[Noise model]
\label{rem:noise_model}
Theorems~\ref{prop:dep} and~\ref{prop:pd} assume independent
single-qubit noise acting on the $k=2|E|$ resource qubits \emph{after the
$\CZ$ of the walk step and before the pre-measurement Hadamard}; data
qubits, $\CZ$ gates, and readout are taken as noiseless. The placement
matters and is not conventional: a channel acting on the resource qubits
\emph{before} the walk $\CZ$ gives a different result, because
$\CZ\,X_r\,\CZ^\dagger = Z_d X_r$ carries an uncorrectable $Z$ onto the
data qubit, whereas after the $\CZ$ an $X_r$ error only flips a
measurement outcome and is absorbed by the correction (proof of
Theorem~\ref{prop:dep}). The hardware validation
(Sec.~\ref{sec:hardware}) operates under full device noise and reports
direct fidelity estimation of the corrected data-qubit state rather than
an effective single-qubit rate.
\end{remark}

\begin{lemma}[Error parity]
\label{lem:error_parity}
Let a Pauli error $E=\bigotimes_r E_r$ act on the resource qubits between the
walk $\CZ$ and the pre-measurement Hadamard. For a resource qubit
$r_{e,u}$ held by $u$ on edge $e=(u,v)$, write $\mathrm{fd}(r_{e,u})=v$ for
the endpoint whose correction bit it feeds, so that $s_{e,u}$ enters $g_v$
(Sec.~\ref{sec:notation}). Then for \emph{every} measurement outcome
$\mathbf{s}$ the corrected data state equals $\ket{G}$ if
\begin{equation}
  \bigoplus_{r\,:\,\mathrm{fd}(r)=v}\;
  \bigl[\,E_r\in\{Y,Z\}\,\bigr] \;=\; 0
  \qquad\text{for every } v\in V,
  \label{eq:parity_condition}
\end{equation}
and is orthogonal to $\ket{G}$ otherwise. In particular the fidelity is
independent of $\mathbf{s}$ and takes only the values $0$ and $1$.
\end{lemma}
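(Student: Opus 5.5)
I would prove the lemma by pushing the Pauli error $E$ through the remaining steps, the pre-measurement Hadamard and the $Z$-basis measurement, and then comparing it with the universal $Z$-only correction of Theorem~\ref{thm:universal_z}. That theorem says that in the noiseless case, for every outcome $\mathbf{s}$, the post-measurement data state has stabilisers $(-1)^{g_v}K_v$ (Lemma~\ref{lem:phase}). It is restored to $\ket{G}$ by $\bigotimes_v Z_v^{g_v}$ with $g_v=\bigoplus_{e\ni v}s_{e,\bar v}$, where $s_{e,\bar v}$ is the outcome of the resource qubit held by the other endpoint of $e$. So the bit $s_{e,u}$ of $r_{e,u}$ enters only $g_{\mathrm{fd}(r_{e,u})}$.

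\textbf{Step 1 (effect of each error on the outcome).} Since $HXH=Z$, $HZH=X$ and $HYH=-Y$, an error $E_r$ inserted before $H(r)$ becomes $H E_r H$ after it. The measurement is in the $Z$ basis.
\begin{itemize}
  \item If $E_r\in\{I,X\}$, the conjugated operator is $I$ or $Z$. It is diagonal, so it commutes with the projector, and the post-measurement data state and the outcome statistics are unchanged up to a phase.
  \item If $E_r\in\{Y,Z\}$, the conjugated operator contains an $X$ factor on $r$. Measuring $E'\ket{\psi}$ with outcome $s$ gives the same data state as measuring $\ket{\psi}$ with outcome $s\oplus1$.
\end{itemize}
Hence the noisy run that reports outcome vector $\mathbf{s}$ leaves the data qubits in exactly the noiseless post-measurement state for $\mathbf{s}\oplus\mathbf{f}$, where $f_r=[E_r\in\{Y,Z\}]$. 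The probabilities are permuted, so the fidelity statement can be checked outcome by outcome. No $Z$ is carried onto the data, because the error sits after the walk $\CZ$ (cf.\ Remark~\ref{rem:noise_model}).

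\textbf{Step 2 (mismatch of the correction).} The parties apply $\bigotimes_v Z_v^{g_v(\mathbf{s})}$, while the state requires $\bigotimes_v Z_v^{g_v(\mathbf{s}\oplus\mathbf{f})}$. Since $g$ is linear over $\mathbb{F}_2$, the corrected state is $\bigotimes_v Z_v^{\delta_v}\ket{G}$ up to phase. Here
\[
\delta_v=g_v(\mathbf{f})=\bigoplus_{r:\,\mathrm{fd}(r)=v}[E_r\in\{Y,Z\}],
\]
and $\delta$ is independent of $\mathbf{s}$. If $\delta=0$, which is condition~\eqref{eq:parity_condition}, the state is $\ket{G}$. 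Otherwise, choose $v$ with $\delta_v=1$: then $K_v$ anticommutes with $\bigotimes_u Z_u^{\delta_u}$, since $K_v$ has $X$ only at $v$. So $\bigotimes_u Z_u^{\delta_u}\ket{G}$ is a $-1$ eigenvector of $K_v$, hence orthogonal to $\ket{G}$. Equivalently, the Pauli-$Z$ strings $Z^\delta$ map $\ket{G}$ onto an orthonormal basis.

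\textbf{Main obstacle.} The delicate point is Step~1's claim that flipping outcomes is exactly equivalent to the noiseless branch with flipped outcomes, including the data state and not just the statistics. I would justify it by writing the pre-measurement joint state as $\sum_{\mathbf{s}}\ket{\mathbf{s}}_R\otimes\ket{\phi_{\mathbf{s}}}_D$. The conjugated error then acts as $\bigotimes_r X_r^{f_r}Z_r^{(\cdot)}$ on $R$ alone, which relabels $\mathbf{s}\mapsto\mathbf{s}\oplus\mathbf{f}$ up to sign phases that do not affect the conditional data state. The rest is the linearity bookkeeping already established in Theorem~\ref{thm:universal_z}.
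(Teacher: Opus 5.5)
Your proposal is correct and follows essentially the paper's argument. After the Hadamard, $\{Y,Z\}$ errors act as $X$ on the measured qubit and flip the recorded outcome, which enters only $g_{\mathrm{fd}(r)}$, while $\{I,X\}$ errors become diagonal and are harmless; the corrected state is therefore $\bigotimes_v Z_v^{\delta_v}\ket{G}$, which is either $\ket{G}$ or orthogonal to it. The paper phrases the same mechanism in the stabiliser picture: a sign flip $\delta_v(E)$ on $(-1)^{g_v}K_v$, $\mathbb{F}_2$-linear in $E$, evaluated on $X_r$ and $Z_r$ with $Y_r=iX_rZ_r$. Your explicit relabelling $\mathbf{s}\mapsto\mathbf{s}\oplus\mathbf{f}$ of the pre-measurement decomposition makes the ``flips the outcome'' step rigorous for the post-measurement data state as well as for the statistics.
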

\begin{proof}
Every operation in the protocol is Clifford and every error is Pauli, so by
Lemma~\ref{lem:phase} the post-measurement data state is stabilised by
$(-1)^{g_v(\mathbf{s})\oplus\delta_v(E)}K_v$, where $\delta_v(E)$ is the sign
flip induced on $K_v$ by $E$. Applying the correction $Z_v^{g_v}$ leaves the
state stabilised by $(-1)^{\delta_v(E)}K_v$. If $\delta(E)=0$ every generator
is restored and the state is $\ket{G}$; if $\delta_v(E)=1$ for some $v$ the
state lies in the $-1$ eigenspace of $K_v$ and is therefore orthogonal to
$\ket{G}$. Hence the fidelity is $0$ or $1$.

Conjugation by a Pauli reverses the sign of exactly those stabiliser
generators it anticommutes with, so signs compose additively and $\delta$ is
$\mathbb{F}_2$-linear: $\delta(E)=\bigoplus_r \delta(E_r)$. It therefore
suffices to evaluate $\delta$ on single-qubit errors at $r=r_{e,u}$.
A $Z_r$ error commutes through the walk $\CZ$ (Remark~\ref{rem:cz_props}(i))
and $HZH^\dagger = X$ makes it an $X$ on the measured qubit, flipping the
recorded outcome $s_{e,u}$; that outcome enters $g_v$ with $v=\mathrm{fd}(r)$
and no other correction bit, so $\delta(Z_r)=\mathbf{e}_{\mathrm{fd}(r)}$.
An $X_r$ error becomes $Z$ on the measured qubit under $H$, which commutes
with the $Z$-basis measurement and with every data-qubit stabiliser, so
$\delta(X_r)=0$. Finally $Y_r=iX_rZ_r$ gives
$\delta(Y_r)=\delta(X_r)\oplus\delta(Z_r)=\mathbf{e}_{\mathrm{fd}(r)}$.
Summing over $r$ yields $\delta_v(E)=\bigoplus_{r:\mathrm{fd}(r)=v}[E_r\in\{Y,Z\}]$,
which vanishes for all $v$ precisely under~\eqref{eq:parity_condition}.
\end{proof}

\begin{theorem}[Depolarising noise]
\label{prop:dep}
Under independent depolarising noise
$\mathcal{E}_p(\rho)=(1-p)\rho+(p/3)(X\rho X+Y\rho Y+Z\rho Z)$ on each of the
$k=2|E|$ resource qubits, inserted as specified in
Remark~\ref{rem:noise_model},
\begin{equation}
  \Fstar_\mathrm{dep}(p)
  = \prod_{v\in V}\frac{1+\bigl(1-\tfrac{4p}{3}\bigr)^{\deg_G(v)}}{2}.
  \label{eq:fstar_dep}
\end{equation}
\end{theorem}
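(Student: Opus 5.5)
The plan is to reduce the fidelity to a probability computation using Lemma~\ref{lem:error_parity}. Expand the depolarising channel on each resource qubit as a mixture of Pauli errors: $E_r=I$ with probability $1-p$, and $E_r\in\{X,Y,Z\}$ each with probability $p/3$. The full noisy process is then a convex mixture, over Pauli strings $E=\bigotimes_r E_r$, of the noiseless protocol with $E$ inserted. By Lemma~\ref{lem:error_parity}, for every outcome $\mathbf{s}$ the corrected data state for a fixed $E$ has fidelity exactly $1$ or exactly $0$ with $\ket{G}$. This does not depend on $\mathbf{s}$, so averaging over outcomes is trivial. Hence $\Fstar_{\mathrm{dep}}$ equals the probability that the random error string satisfies the parity condition~\eqref{eq:parity_condition} at every vertex. (The outcome-averaged fidelity is linear in the mixture, so no cross terms arise.)

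Next I introduce independent indicator bits $b_r=[E_r\in\{Y,Z\}]$. Each $b_r$ equals $1$ with probability $q=2p/3$, and the $b_r$ are independent across $r$ because the channels act independently. The condition is $\bigoplus_{r:\mathrm{fd}(r)=v}b_r=0$ for all $v$. The key combinatorial step is that the map $\mathrm{fd}$ partitions the $2|E|$ resource qubits into disjoint classes, one per vertex. Vertex $v$ is fed by exactly one resource qubit per incident edge, namely the one held by the other endpoint, so $|\mathrm{fd}^{-1}(v)|=\deg_G(v)$. The events for different vertices therefore involve disjoint sets of independent bits and are independent, and the probability factorises as a product over $v\in V$.

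For a single vertex I need the probability that a sum of $n=\deg_G(v)$ i.i.d.\ Bernoulli$(q)$ bits is even. The standard identity $\Pr[\text{even}]=\tfrac12\bigl(1+\mathbb{E}[(-1)^{\sum b_r}]\bigr)=\tfrac12\bigl(1+(1-2q)^n\bigr)$ gives $\tfrac12\bigl(1+(1-\tfrac{4p}{3})^{\deg_G(v)}\bigr)$. Multiplying over vertices yields~\eqref{eq:fstar_dep}.

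The main point needing care is the first step. I must justify that the fidelity of the mixed, outcome-averaged corrected state is the error-string average of the $0/1$ fidelities. This holds because the corrections depend only on the recorded outcomes and the channel acts linearly, so each Pauli branch can be followed separately through the Clifford circuit. I also need to confirm from the notation of Sec.~\ref{sec:notation} that $g_v=\bigoplus_{e\ni v}s_{e,\bar v}$, so that $\mathrm{fd}$ is exactly the edge-partner assignment with fibres of size $\deg_G(v)$.
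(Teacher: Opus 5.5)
Your proposal is correct and is essentially the paper's proof: reduce to Lemma~\ref{lem:error_parity}, note that $\mathrm{fd}$ partitions the $2|E|$ resource qubits into fibres of size $\deg_G(v)$, and multiply the per-vertex even-parity probabilities $\tfrac12\bigl(1+(1-2q)^{\deg_G(v)}\bigr)$ with $q=2p/3$. Your remark that the outcome-averaged fidelity is linear in the Pauli-mixture decomposition of $\mathcal{E}_p$ spells out the step the paper compresses into ``using Lemma~\ref{lem:error_parity} to identify this probability with the fidelity.''
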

\begin{proof}
Each resource qubit feeds exactly one vertex, and vertex $v$ is fed by one
resource qubit per incident edge, i.e.\ by $\deg_G(v)$ of them; since
$\sum_v \deg_G(v)=2|E|=k$, the map $\mathrm{fd}$ \emph{partitions} the
resource qubits. The parity conditions~\eqref{eq:parity_condition} at
distinct vertices therefore involve disjoint sets of independently noisy
qubits and are mutually independent. Under $\mathcal{E}_p$ a given resource
qubit carries a $Y$ or $Z$ error with probability $q=2p/3$. For $n$
independent Bernoulli$(q)$ trials the probability of an even number of
successes is $\tfrac12\bigl(1+(1-2q)^n\bigr)$; taking $n=\deg_G(v)$ and
multiplying over $v$, and using Lemma~\ref{lem:error_parity} to identify
this probability with the fidelity, gives~\eqref{eq:fstar_dep}.
\end{proof}

\begin{theorem}[Phase damping]
\label{prop:pd}
Under independent phase damping with Kraus operators
$\{K_0\!=\!\mathrm{diag}(1,\sqrt{1-p}),\;K_1\!=\!\mathrm{diag}(0,\sqrt{p})\}$
on each of the $k=2|E|$ resource qubits,
\begin{equation}
  \Fstar_\mathrm{pd}(p)
  = \prod_{v\in V}\frac{1+(1-p)^{\deg_G(v)/2}}{2}.
  \label{eq:fstar_pd}
\end{equation}
\end{theorem}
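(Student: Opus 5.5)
Phase damping is not itself a Pauli channel, so the plan is to reduce it to one and then reuse the depolarising argument. First I will check that phase damping is unitarily equivalent, as a channel, to a dephasing channel $\rho\mapsto(1-\lambda)\rho+\lambda Z\rho Z$. Its Kraus operators are diagonal, so the channel preserves populations and multiplies the off-diagonal element by $\sqrt{1-p}$. A $Z$-flip channel with probability $\lambda$ multiplies coherences by $1-2\lambda$. Matching the two gives $1-2\lambda=\sqrt{1-p}$, i.e.\ $\lambda=\tfrac12\bigl(1-\sqrt{1-p}\bigr)$. Two channels with the same action on every single-qubit density matrix are identical, so on each resource qubit we may replace phase damping by this Pauli mixture. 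Because the channels are applied independently to each qubit, the joint noise on the resource register is a product distribution over Pauli errors $E=\bigotimes_r E_r$ with $E_r\in\{I,Z\}$ and $\Pr[E_r=Z]=\lambda$.

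Second, the noise sits at the same location as in Remark~\ref{rem:noise_model}, so Lemma~\ref{lem:error_parity} applies to each Pauli branch. In each branch the corrected fidelity is $1$ if, for every $v$, an even number of the qubits feeding $v$ carry $Z$, and $0$ otherwise. This holds for every outcome $\mathbf{s}$, so the outcome-averaged fidelity equals the probability of the parity event. The averaging is legitimate because the noisy state is a convex mixture over Pauli branches, and the fidelity with the pure state $\ket{G}$ is linear in the state.

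Third, as in the proof of Theorem~\ref{prop:dep}, the map $\mathrm{fd}$ partitions the $2|E|$ resource qubits into blocks of size $\deg_G(v)$. The parity events at different vertices are therefore independent. The even-parity probability for $n$ Bernoulli$(\lambda)$ trials is $\tfrac12(1+(1-2\lambda)^n)$. Substituting $1-2\lambda=(1-p)^{1/2}$ and $n=\deg_G(v)$ gives $\tfrac12\bigl(1+(1-p)^{\deg_G(v)/2}\bigr)$, and taking the product over $v$ yields \eqref{eq:fstar_pd}.

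The main obstacle is the first step: justifying that the non-Pauli Kraus form may be replaced by its Pauli twirl without changing the fidelity. Since the two channels coincide exactly, not merely after twirling, no approximation enters. I will make this explicit by writing $K_0\rho K_0^\dagger+K_1\rho K_1^\dagger$ in the Pauli basis and checking that the cross terms vanish. As a sanity check, $p=1$ gives $\lambda=\tfrac12$ and a fidelity of $2^{-|V|}$ for graphs with no isolated vertices, which is consistent with fully dephased resources.
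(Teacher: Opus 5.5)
Your proposal is correct and follows the paper's proof essentially step for step. The paper likewise identifies phase damping with the dephasing channel of strength $\lambda=\tfrac12(1-\sqrt{1-p})$ by matching the coherence scaling $\sqrt{1-p}$, then substitutes $q=\lambda$, $1-2q=\sqrt{1-p}$ into Lemma~\ref{lem:error_parity} and the per-vertex parity count of Theorem~\ref{prop:dep}; your planned Pauli-basis check that the cross terms cancel, with $K_0=\tfrac{1+\sqrt{1-p}}{2}I+\tfrac{1-\sqrt{1-p}}{2}Z$ and $K_1=\tfrac{\sqrt p}{2}(I-Z)$, just spells out the paper's Bloch-sphere argument in more detail.
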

\begin{proof}
Phase damping acts on the Bloch sphere exactly as the dephasing channel
$\rho\mapsto(1-\lambda)\rho+\lambda Z\rho Z$ with
$\lambda=\tfrac12\bigl(1-\sqrt{1-p}\bigr)$, since both scale the off-diagonal
elements by $\sqrt{1-p}$ and leave the populations fixed. It is therefore a
Pauli channel producing a $Z$ error with probability $\lambda$ and no $X$ or
$Y$ component, and Lemma~\ref{lem:error_parity} applies with
$q=\lambda$ and $1-2q=\sqrt{1-p}$. Substituting into the counting argument of
Theorem~\ref{prop:dep} gives~\eqref{eq:fstar_pd}.
\end{proof}

\begin{remark}[The noise budget is set by degree, not by edge count]
\label{rem:factorisation_scope}
Equations~\eqref{eq:fstar_dep} and~\eqref{eq:fstar_pd} are products over
\emph{vertices}, weighted by degree, rather than flat powers of the
resource-qubit count. When $\Delta(G)\le1$ --- a single edge --- they reduce
to $(1-2p/3)^{k}$ and $((1+\sqrt{1-p})/2)^{k}$ respectively. For
$\Delta(G)\ge2$ they are strictly larger: two errors feeding the same
correction bit $g_v$ cancel in the parity
of~\eqref{eq:parity_condition}, restoring a fidelity that a
per-qubit product would count as lost. The number of such coincidences is
$\sum_v\binom{\deg_G(v)}{2}$, so the gain over the naive product grows with
degree and is unaffected by whether $G$ contains cycles: $P_4$ and $C_3$,
which share $k=6$ and $\Delta=2$, differ only through their degree
sequences. Both formulas were confirmed against exhaustive Pauli and Kraus
enumeration on $K_2$, $P_3$, $P_4$, $K_{1,3}$ and $C_3$, agreeing to
$<10^{-14}$ at every noise strength tested.

That degree rather than particle number governs the noise robustness of
graph states is not itself new. Cavalcanti \emph{et al.}~\cite{cavalcanti2009open}
showed that under Pauli maps the entanglement decay of a graph state can be
read off from its connectivity alone, and the point is reviewed in
Ref.~\cite{aolita2015open}. What Eqs.~\eqref{eq:fstar_dep}--\eqref{eq:fstar_pd}
add is the exact closed form for the resource-qubit channel of the
distribution protocol, where the degree enters through the correction bits
$g_v$ rather than through the state's own stabiliser structure
(Remark~\ref{rem:dataqubit}).

The degree dependence has an operational reading that the hardware data
independently exhibits (Sec.~\ref{sec:hardware}): a high-degree vertex is
penalised twice, once because its party applies one $\CZ$ per incident edge
and accumulates more gate error, and again because its correction bit $g_v$
aggregates more far-side outcomes and so is more easily corrupted. The
stabiliser spectrum of Table~\ref{tab:stabilizer} shows the first effect
directly; Eqs.~\eqref{eq:fstar_dep}--\eqref{eq:fstar_pd} quantify the second.

Three cautions on that reading. First, the gate-count penalty lies outside
the model of Remark~\ref{rem:noise_model}, which takes gates to be noiseless;
only the correction-bit penalty is quantified here. Second, on the regime:
the exact fidelity depends on
the degree sequence at every $p$, but expanding Eq.~\eqref{eq:fstar_dep} for
small $p$ gives $\exp(-4p|E|/3)+O(p^2)$, so the leading exponential decay is
governed by edge count and the degree-dependent corrections enter at second
order. They are correspondingly small on current hardware: for $K_{1,16}$ the
exact value exceeds the naive per-resource-qubit product by $1.66$ at $p=0.1$
but by only $1.0002$ at the calibrated CZ error of
$1.8\times10^{-3}$. Third, the correction-bit penalty is small, and at fixed
$|E|$ runs the other way once the whole graph is accounted for
(Remark~\ref{rem:noise_scope}):
the star $K_{1,3}$ and the path $P_4$ share $|E|=3$, and
Eq.~\eqref{eq:fstar_dep} gives $0.8189$ and $0.8179$ respectively at
$p=0.05$ --- a difference of $10^{-3}$, with the star \emph{above} the path.
The measured ordering, $F(\LFour)=0.815(3)$ against
$F(\GHZn{4})=0.776(3)$, is opposite in sign and two orders of magnitude
larger; it is therefore not accounted for by
Eqs.~\eqref{eq:fstar_dep}--\eqref{eq:fstar_pd}, but by the gate-count effect,
which is visible in Table~\ref{tab:stabilizer} and already present in the
direct-preparation baseline of Sec.~\ref{sec:hardware}.
\end{remark}

\begin{remark}[Why $\Fstar_\mathrm{pd} \geq \Fstar_\mathrm{dep}$, and where
amplitude damping sits]
\label{rem:pd_ge_dep}
Both channels enter Lemma~\ref{lem:error_parity} only through the
probability $q$ that a resource qubit acquires a $\{Y,Z\}$ component, and
Eqs.~\eqref{eq:fstar_dep}--\eqref{eq:fstar_pd} are decreasing in $q$ at every
vertex. Depolarising noise gives $q=2p/3$, phase damping
$q=\tfrac12(1-\sqrt{1-p})$, and $\tfrac12(1-\sqrt{1-p})\le 2p/3$ for all
$p\in[0,1]$; hence $\Fstar_\mathrm{pd}(p)\ge\Fstar_\mathrm{dep}(p)$
for every graph and every noise strength. The structural reason is
$Z$-transparency (Remark~\ref{rem:cz_props}(i)): dephasing on a resource
qubit commutes through the walk $\CZ$ and can only disturb a measurement
outcome, never deposit an uncorrectable operator on a data qubit.

The same reduction covers a channel that is not Pauli. Amplitude damping,
$A_0=\mathrm{diag}(1,\sqrt{1-p})$, $A_1=\sqrt{p}\,\ketbra{0}{1}$, decomposes
as $A_0 = \tfrac{1+\sqrt{1-p}}{2}I + \tfrac{1-\sqrt{1-p}}{2}Z$ and
$A_1 = \tfrac{\sqrt p}{2}(X+iY)$. Since $\{I,X\}$ and $\{Y,Z\}$ send the
post-correction state into orthogonal stabiliser eigenspaces
(Lemma~\ref{lem:error_parity}), the cross terms within each Kraus operator
do not contribute to the fidelity, and the effective $\{Y,Z\}$ weight is
$q=\bigl(\tfrac{1-\sqrt{1-p}}{2}\bigr)^{2}+\tfrac{p}{4}$, for which
$1-2q=\sqrt{1-p}$ --- exactly the phase-damping value. Amplitude damping
therefore gives \emph{the same} fidelity as phase damping at the same $p$,
$\Fstar_\mathrm{ad}=\Fstar_\mathrm{pd}$, for every graph. This was confirmed
by exhaustive Kraus enumeration on $K_2$, $P_3$, $P_4$, $K_{1,3}$ and $C_3$,
agreeing to $<10^{-14}$ at every $p\in[0,1]$ tested.
\end{remark}

\begin{remark}[Relation to graph-state fidelity under data-qubit noise]
\label{rem:dataqubit}
Closed forms for the fidelity of a graph state under independent Pauli noise
are known for the case in which the channel acts on the $n$ \emph{data}
qubits of an already-prepared $\ket{G}$. Tanizawa \emph{et
al.}~\cite{tanizawa2023} obtain
$F=2^{-n}\sum_{\ell\in\{0,1\}^n}(1-4p/3)^{m(\ell)}$, where $m(\ell)$ is the
number of non-identity Pauli factors in the stabiliser labelled by $\ell$,
and Numajiri \emph{et al.}~\cite{numajiri2025} map that sum to the partition
function of a classical spin system, which yields the fidelity of large
regular graph states by transfer-matrix and Monte Carlo methods and locates
a degree-dependent phase transition in it.

Theorems~\ref{prop:dep} and~\ref{prop:pd} concern a different channel and
are not a special case of those results. The noise here acts on the $2|E|$
\emph{resource} qubits of the distribution protocol, between the walk $\CZ$
and the pre-measurement Hadamard (Remark~\ref{rem:noise_model}). The data
qubits are never exposed to it; an error reaches the output only by flipping
a recorded outcome and hence corrupting a correction bit $g_v$
(Lemma~\ref{lem:error_parity}). The two expressions accordingly disagree.
For $K_4$ at $p=0.05$ the data-qubit formula of Ref.~\cite{tanizawa2023}
gives $0.8160$ against $0.6753$ from Eq.~\eqref{eq:fstar_dep}; at $p=0.2$
the values are $0.4268$ and $0.2363$. The two numbers are not a performance
comparison between preparation and distribution --- they describe channels
acting on different registers, and on different numbers of qubits at the
same nominal $p$ --- but they establish that the present formulas are
independent results rather than a re-derivation.
\end{remark}

\begin{remark}[What the closed forms do and do not describe]
\label{rem:noise_scope}
Theorems~\ref{prop:dep} and~\ref{prop:pd} give the fidelity exactly for the
stated model --- independent single-qubit channels on the resource qubits
inserted as in Remark~\ref{rem:noise_model}, with gates and readout
noiseless --- and their role should be read accordingly. Three statements survive contact with real hardware. First,
the \emph{scaling}: because the fidelity is a product over parties, it
decays exponentially in $|V|$ rather than in the resource count, with each
party's factor set by its own degree. A degree-$d$ vertex contributes
$\bigl(1+(1-4p/3)^{d}\bigr)/2$, which decreases with $d$ but saturates at
$\tfrac12$: a hub is the dominant local cost yet never costs more than a
factor of two, so for $p\le\tfrac34$ every factor lies in $[\tfrac12,1]$
and $\Fstar_\mathrm{dep}\ge2^{-|V|}$ irrespective of the degree sequence.
Concentrating edges on a few high-degree vertices is therefore
marginally \emph{cheaper} than spreading them over many low-degree ones at
equal $|E|$, though the effect is small at the sizes tested here and is
swamped on hardware by gate count (Remark~\ref{rem:factorisation_scope}).
Its size grows with degree: for the star $K_{1,n}$ at $p=0.1$ the exact
fidelity exceeds $(1-2p/3)^{2n}$ --- the value that would obtain if every
one of the $2n$ resource qubits failed independently, with no parity
cancellation --- by a factor of $1.66$ at $n=16$.
Second, the
\emph{channel ordering}: dephasing on resource qubits commutes through
$\CZ$ (Remark~\ref{rem:cz_props}(i)) and is absorbed by the $Z$-only
correction, so phase damping is the least destructive channel --- a
structural property of the protocol that holds at any noise strength.
Third, and by design \emph{not} claimed: the closed forms are not
predictors of device fidelity. The hardware analysis
(Sec.~\ref{sec:hardware}) shows that a joint fit of the measured
generators and fidelity to independent local Pauli channels is infeasible
(residual $1.5\times10^{-2}$), i.e.\ the device noise is correlated or
coherent beyond any independent single-qubit model; for this reason no
effective single-qubit rate is extracted from the data. The model
nevertheless serves as a ruler: simulating the native $\LFour$ circuit
under independent depolarising and readout noise at the calibrated device
rates predicts $F=0.889$, against the measured $0.815(3)$ ---
the $0.074$ shortfall quantifies the contribution of noise
structure beyond the independent model.
\end{remark}

\begin{corollary}[The corrected output is graph-diagonal]
\label{cor:graph_diagonal}
Under the noise model of Remark~\ref{rem:noise_model}, the corrected data
state averaged over measurement outcomes is
\begin{equation}
  \rho(p) = \sum_{\delta\in\{0,1\}^{V}} P_p(\delta)\,
  \Bigl(\prod_v Z_v^{\delta_v}\Bigr)\ketbra{G}{G}
  \Bigl(\prod_v Z_v^{\delta_v}\Bigr),
  \label{eq:graph_diagonal}
\end{equation}
a graph-diagonal state whose syndrome bits are independent across vertices,
\begin{equation}
  \Pr[\delta_v=0] = \frac{1+\bigl(1-\tfrac{4p}{3}\bigr)^{\deg_G(v)}}{2},
  \label{eq:syndrome}
\end{equation}
and correspondingly for phase damping with $1-4p/3$ replaced by $(1-p)^{1/2}$.
\end{corollary}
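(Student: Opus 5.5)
The plan is to compute the output state for each fixed measurement outcome and each Pauli error pattern, and only then average. Since phase damping is handled by reducing it to the dephasing Pauli channel (proof of Theorem~\ref{prop:pd}), it suffices to treat an arbitrary independent Pauli channel on the resource qubits. Write the channel as a mixture over Pauli patterns $E=\bigotimes_r E_r$ with product weights $\prod_r \Pr[E_r]$. For fixed $E$ and fixed outcome $\mathbf{s}$, the argument in the proof of Lemma~\ref{lem:error_parity} already gives more than a fidelity value: after the correction $Z_v^{g_v}$, the (pure) data state is stabilised by $(-1)^{\delta_v(E)}K_v$ for every $v$. These $n=|V|$ commuting generators with prescribed signs determine the state uniquely. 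The state $\bigl(\prod_v Z_v^{\delta_v}\bigr)\ket{G}$ has exactly those signs, because $Z_v$ anticommutes with $K_v$ and commutes with $K_u$ for $u\neq v$. So the corrected state equals $\bigl(\prod_v Z_v^{\delta_v(E)}\bigr)\ket{G}$ up to a global phase, independently of $\mathbf{s}$.

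Averaging over $\mathbf{s}$ therefore does nothing beyond normalisation. Averaging over $E$ gives Eq.~\eqref{eq:graph_diagonal} with $P_p(\delta)=\Pr_E[\delta(E)=\delta]$. One point needs care: mixing over Kraus/Pauli branches yields an incoherent sum and no cross terms. This holds because the Pauli channel is already a probabilistic mixture, and because for Pauli errors the measured outcome distribution is only relabelled (a $Z_r$ error flips $s_{e,u}$, an $X_r$ error leaves it unchanged). Hence each branch's post-measurement, post-correction state is well defined, and the mixture is exact.

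For the syndrome statistics, I use the explicit linear form from Lemma~\ref{lem:error_parity}: $\delta_v(E)=\bigoplus_{r:\mathrm{fd}(r)=v}[E_r\in\{Y,Z\}]$. The proof of Theorem~\ref{prop:dep} shows that $\mathrm{fd}$ partitions the resource qubits into blocks of size $\deg_G(v)$. Since the errors are independent across qubits, the bits $\delta_v$ are functions of disjoint independent families and hence are mutually independent, so $P_p(\delta)=\prod_v\Pr[\delta_v]$. Each $\delta_v$ is the parity of $\deg_G(v)$ Bernoulli$(q)$ variables. By the even-parity formula, $\Pr[\delta_v=0]=\tfrac12(1+(1-2q)^{\deg_G(v)})$. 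Depolarising noise has $q=2p/3$, which gives Eq.~\eqref{eq:syndrome}. Phase damping has $1-2q=\sqrt{1-p}$, which gives the stated replacement.

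The main obstacle is the first step: upgrading the $0/1$ fidelity statement to an exact identification of the state. This requires that the post-measurement data state is genuinely pure and fully fixed by the $n$ signed generators. I would justify it from Lemma~\ref{lem:phase}, which, I am assuming, asserts that the post-measurement data state is exactly the stabiliser state of $\{(-1)^{g_v}K_v\}$ (all resource qubits are measured, so nothing remains entangled with the data). Given that, the stabiliser-uniqueness argument above closes the step.
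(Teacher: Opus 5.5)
Your proposal is correct and follows essentially the same route as the paper. For each Pauli pattern $E$ and every outcome, Lemma~\ref{lem:error_parity} gives the signed generators $(-1)^{\delta_v(E)}K_v$ and hence the state $\prod_v Z_v^{\delta_v}\ket{G}$; the partition property of $\mathrm{fd}$ and the even-parity formula then give independence of the syndrome bits and Eq.~\eqref{eq:syndrome}. Your extra care over purity, uniqueness and the absence of cross terms only makes explicit what the paper treats as immediate; the last point needs nothing more than linearity of the outcome-averaged measure-and-correct map.
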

\begin{proof}
By Lemma~\ref{lem:error_parity} the post-correction state for a Pauli error
$E$ is stabilised by $(-1)^{\delta_v(E)}K_v$, i.e.\ it is
$\prod_v Z_v^{\delta_v}\ket{G}$, and $\delta_v$ is the parity of the
$\{Y,Z\}$-indicators of the $\deg_G(v)$ resource qubits with
$\mathrm{fd}(r)=v$. Since $\mathrm{fd}$ partitions the resource qubits
(proof of Theorem~\ref{prop:dep}), distinct $\delta_v$ are parities of
disjoint sets of independently noisy qubits, giving independence and
Eq.~\eqref{eq:syndrome}. Eq.~\eqref{eq:fstar_dep} is the $\delta=0$
component. Independently of Lemma~\ref{lem:error_parity},
Eq.~\eqref{eq:graph_diagonal} was confirmed by full density-matrix
simulation of the distribution circuit --- resource-state preparation, walk
$\CZ$, channel, Hadamard, measurement of every resource qubit, and
outcome-weighted correction --- for $P_3$, $P_4$ and $K_{1,3}$, agreeing
with the closed form to $<10^{-15}$ at every $p$ tested.
\end{proof}

\begin{remark}[State functionals of the output, and a thermodynamic reading]
\label{rem:ergotropy}
Corollary~\ref{cor:graph_diagonal} localises the entire $p$-dependence of
the output in the classical syndrome distribution: the eigenvectors of
$\rho(p)$ are the $2^{|V|}$ signed graph states, fixed once $G$ is fixed,
and only their weights move. Any functional of $\rho(p)$ that sees the state
through its spectrum alone is therefore determined by the same per-vertex
probabilities that give Theorem~\ref{prop:dep}, hence by the degree sequence
of $G$ and not by its topology; the two non-isomorphic connected graphs on
six vertices with degree sequence $(1,1,1,2,2,3)$ give identical values of
every such functional, as checked numerically.

One instance is worth recording, since graph states are a natural candidate
resource for distributed quantum batteries. The ergotropic
gap~\cite{mukherjee2016presence,alimuddin2019bound,puliyil2022thermodynamic}
is the difference between globally and locally extractable work, and is an
LOCC monotone for pure states. For a non-interacting $H=\sum_v h_v$ with
equal level spacing, every signed graph state has maximally mixed
single-qubit marginals, so the local ergotropy vanishes identically and the
gap of each eigenvector of Eq.~\eqref{eq:graph_diagonal} equals $|V|/2$,
independently of $\delta$, of $p$, and of $G$. The convex-roof extension of
the gap is thus constant along the channel, while the ergotropy of $\rho(p)$
evaluated directly is spectrum-only in the sense above and so carries no
information beyond Theorem~\ref{prop:dep}. Neither quantity separates the
distributed graph state from any other output with maximally mixed
marginals.

For an interacting Hamiltonian the spectrum-only argument fails, but for a
Pauli Hamiltonian the energy remains fixed by which of its terms lie in
$\mathrm{Stab}(\ket{G})$, since $\bra{G}P\ket{G}=\pm1$ for
$P\in\mathrm{Stab}(\ket{G})$ and $0$ otherwise. For weight-two terms this is
purely combinatorial: $X_uX_v$ belongs to $\mathrm{Stab}(\ket{G})$ exactly
when $u,v$ are non-adjacent with $N(u)=N(v)$, $Y_uY_v$ exactly when they are
adjacent with $N(u)\setminus\{v\}=N(v)\setminus\{u\}$, and no product of $Z$
operators alone ever does. This condition is unrelated to the native
embeddability of the canonical interaction graph
(Sec.~\ref{sec:embeddability}): all four combinations of (embeddable,
energetically active) occur among the graphs considered here, with
$K_{1,3}$ and $C_4$ in one corner and $P_4$ and $K_4$ in the two others.
\end{remark}

\begin{figure}[t]
  \centering
  \includegraphics[width=\columnwidth]{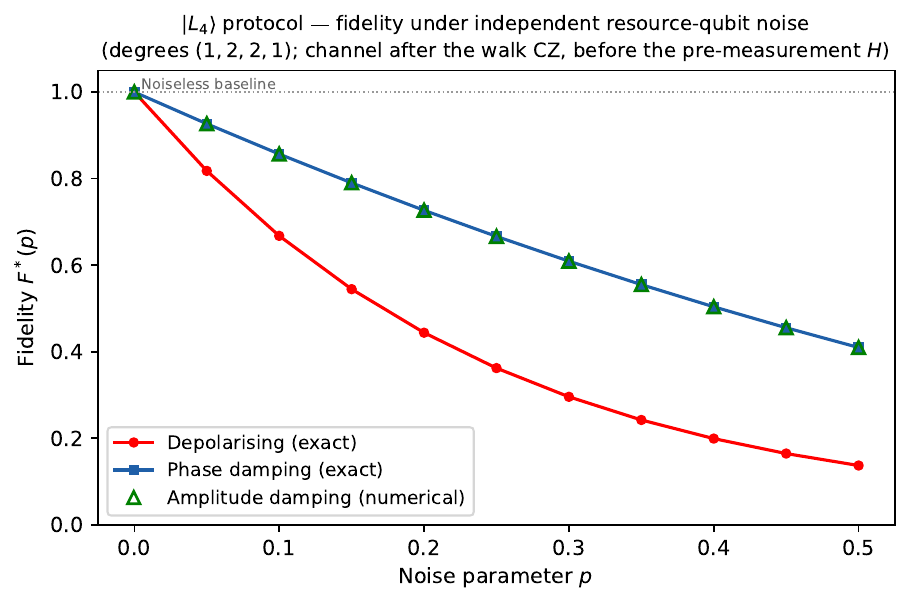}
  \caption{Fidelity $\Fstar(p)$ for the $\LFour$ protocol under
    independent noise on each resource qubit, inserted after the walk $\CZ$
    and before the pre-measurement Hadamard (Remark~\ref{rem:noise_model}):
    depolarising (red) and phase damping (blue) from
    Theorems~\ref{prop:dep} and~\ref{prop:pd}, amplitude damping
    (green, numerical). The analytical curves are exact for the stated
    model; $\LFour$ has degree sequence $(1,2,2,1)$. The amplitude-damping
    points lie on the phase-damping curve: the two channels give identical
    fidelities in this model (Remark~\ref{rem:pd_ge_dep}), so the green
    markers are a numerical check of the analytical blue curve rather than a
    third channel. Phase damping is least destructive due to
    $Z$-transparency of $\CZ$.}
  \label{fig:fstar_decay}
\end{figure}

\begin{figure}[t]
  \centering
  \includegraphics[width=\columnwidth]{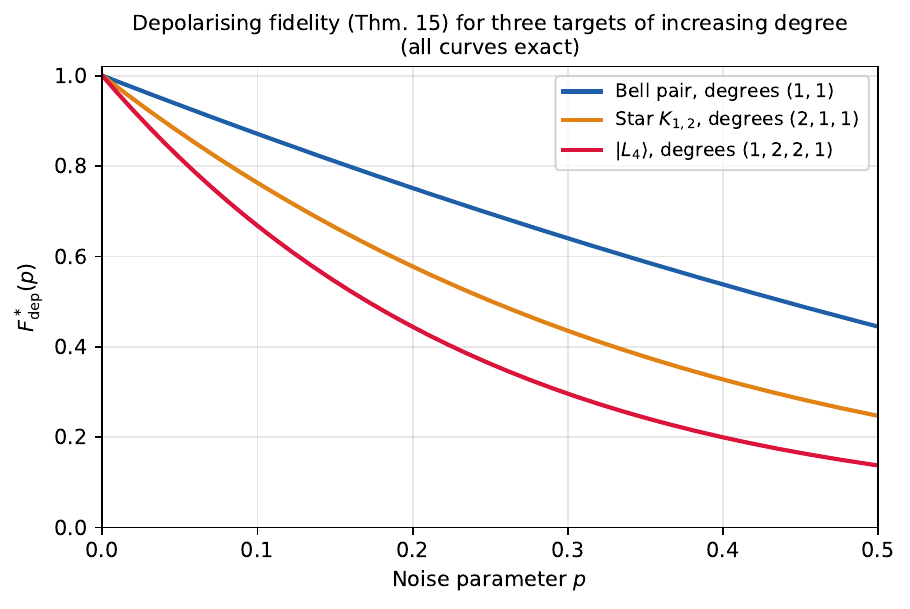}
  \caption{Depolarising fidelity $\Fstar_\mathrm{dep}(p)$ of
    Theorem~\ref{prop:dep} for three targets with increasing degree:
    a Bell pair (degrees $1,1$), a three-party star (degrees $2,1,1$), and
    $\LFour$ (degrees $1,2,2,1$). All are exact.
    Comparing the star with $\LFour$ isolates the role of the degree
    sequence at fixed resource count.}
  \label{fig:comparison}
\end{figure}

\subsection{Hardware Validation on IBM Quantum}
\label{sec:hardware}

Executed on \texttt{ibm\_kingston} (IBM Heron~r2, 156 qubits,
CZ-native)~\cite{IBMQuantum} in a single batch with $N=4096$ shots per
measurement setting. Layouts were pinned to native (SWAP-free) qubit
subgraphs under a quality policy applied \emph{per embedding class}, because
no embedding meeting a single uniform cut exists for all three targets.
$\LFour$ was placed on a chain with every pinned qubit at
$T_2\ge147\,\mu\mathrm{s}$ and readout error below $2\%$. For $\GHZn{4}$ the
device offers no degree-3 spider meeting that cut, and the layout was chosen
at a relaxed threshold ($T_2>90\,\mu\mathrm{s}$, readout error $<3\%$). For
$\ket{C_4}$ none of the 21 hexagonal faces is clean; the face used has
minimum $T_2=96\,\mu\mathrm{s}$ and maximum readout error $3.8\%$, with the
weakest-readout qubit placed on a resource rather than a data position. The
relaxations are on $T_2$ and readout error only. A submission-time drift
check confirmed that every pinned qubit met its class threshold at
execution.

\paragraph{Fidelity measure: direct fidelity estimation.}
For a stabiliser target state $\ket{G}$ on $n$ data qubits, the fidelity
of the corrected output $\rho$ is
\begin{equation}
  F = \bra{G}\rho\ket{G}
    = \frac{1}{2^{n}}\sum_{S\in\mathcal{S}(G)}\langle S\rangle_\rho,
  \label{eq:dfe}
\end{equation}
where $\mathcal{S}(G)$ is the $2^{n}$-element stabiliser group of $\ket{G}$
and $\langle S\rangle_\rho$ its measured expectation. This is exact (not a
bound) and requires no full tomography~\cite{flammia2011direct}: the $2^{n}-1$ non-identity group
elements are grouped into qubitwise-commuting measurement settings --- nine
per protocol here --- from which every $\langle S\rangle$ is reconstructed.
Nine is the exact minimum, and not by a counting argument: each of the three
stabiliser groups contains nine elements that are pairwise \emph{not}
co-measurable in any single product basis, which forces at least nine
settings, and the nine settings used achieve it. For $\GHZn{4}$ minimality is
immediate --- nine of the fifteen elements admit exactly one admissible
product basis each. The same minimum holds for every connected graph state on
four vertices. The $Z$-only
corrections of Theorem~\ref{thm:universal_z} do not alter
computational-basis probabilities, so each $g_v$ is applied in classical
post-processing shot-by-shot; no mid-circuit feed-forward is required.

\paragraph{Why the classical (distribution) fidelity is uninformative here.}
A natural and inexpensive alternative to~\eqref{eq:dfe} is a classical
(Bhattacharyya) statistic read off the protocol's own shots, needing no
measurement settings beyond the protocol itself. Conditioned on each
resource-measurement outcome $r$, form the Bhattacharyya coefficient between
the empirical distribution $\{p_{d|r}\}$ over the $2^{|V|}$ data-qubit
outcomes and the uniform reference $\{q_d=2^{-|V|}\}$, square it, and average
over classes weighted by class probability:
\begin{equation}
  F^*_\mathrm{cl} \;=\; \sum_r \frac{m_r}{N}
    \Bigl(\sum_d \sqrt{p_{d|r}\,q_d}\Bigr)^{2},
  \label{eq:fcl}
\end{equation}
with $m_r$ the number of shots in class $r$ and $N$ the total. I record here that this statistic is structurally noise-blind for
graph states, so that it is not mistaken for a fidelity diagnostic: the ideal
distribution is uniform, and depolarising noise also drives the measured
distribution toward uniform, so $F^*_\mathrm{cl}$ compares uniform to
uniform. At $N=4096$ shots the noiseless finite-shot floor of this estimator
is $0.929$, $0.623$, and $0.119$ for $k=6,8,12$ resource qubits
(Fig.~\ref{fig:fcl_floor}). An earlier version of this manuscript reported
measured values of $0.929$ and $0.924$ at $k=6$ ($\GHZn{4}$, $\LFour$;
Fig.~\ref{fig:fcl_floor} plots their mean, $0.927$),
$0.616$ at $k=8$ ($\ket{C_4}$), and $0.120$ at $k=12$ ($\ket{K_4}$): each
coincides with the floor for its own $k$ to within shot noise and therefore
carries no information about device error. The claim of topology-independent
fidelity drawn from those values does not survive this analysis and is
withdrawn; $F^*_\mathrm{cl}$ is replaced throughout by the direct fidelity
estimate~\eqref{eq:dfe}.

\begin{figure}[t]
  \centering
  \includegraphics[width=0.72\columnwidth]{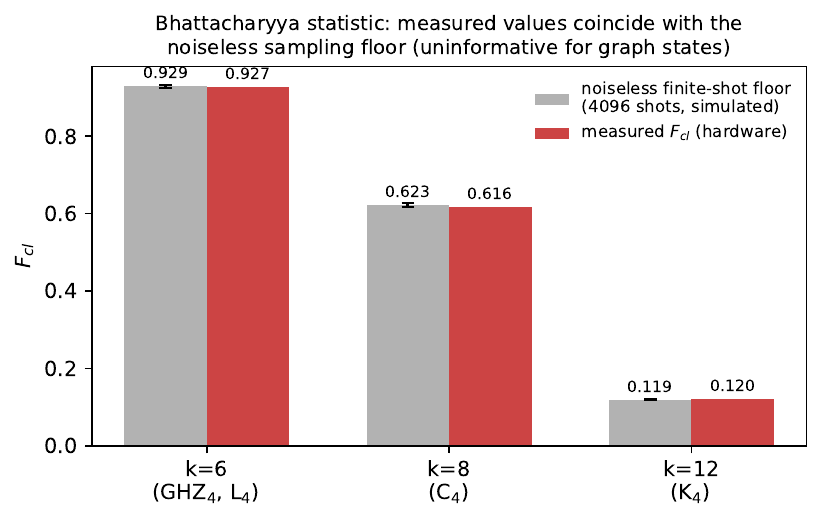}
  \caption{The Bhattacharyya statistic $F^*_\mathrm{cl}$ is uninformative
    for graph states: the measured hardware values (red) coincide with the
    noiseless finite-shot sampling floor (grey, $N=4096$ simulated) for
    every resource-qubit count $k$, because both the ideal graph-state
    distribution and the depolarised distribution are uniform. Direct
    fidelity estimation~\eqref{eq:dfe} is used instead.}
  \label{fig:fcl_floor}
\end{figure}

\paragraph{Results.}
Table~\ref{tab:hardware} reports the direct fidelity estimate for the three
natively embeddable protocols; Table~\ref{tab:stabilizer} gives the
individual stabiliser-generator expectations extracted from the same batch.
The path state $\LFour$ reaches $F=0.815(3)$, the cycle $\ket{C_4}$
$F=0.733(3)$, and the star $\GHZn{4}$ $F=0.776(3)$; the quoted uncertainties
are shot-noise standard errors propagated per generator with the exact
within-setting covariance. All three exceed the generator-based lower bound
$F\ge 1-\sum_v(1-\langle K_v\rangle)/2$ (0.765, 0.653, 0.701 respectively),
an internal consistency check on the reconstruction. The canonical $K_4$
protocol is absent from the table by necessity, not omission: its interaction
graph cannot be embedded natively on heavy-hex
(Theorem~\ref{thm:embeddability}). The state $\ket{K_4}$ is nevertheless
present in the table, under the name $\GHZn{4}$, by the LC equivalence
discussed in Sec.~\ref{sec:embeddability}.

\paragraph{Direct-preparation baseline: the cost of distribution.}
To make the distributed fidelities interpretable I measured, on the same
device and under a policy of the same form --- a live layout solve against
$T_2$, readout and CZ-error thresholds --- the \emph{co-located ceiling}:
each target state prepared directly on four adjacent qubits with three
local CZ gates and certified by the identical DFE procedure
($N=8192$ shots per setting). This baseline is not a competing protocol
--- it presumes all four parties share one device patch, which is
precisely the assumption distribution removes --- but it bounds what the
hardware can do for these states with no distribution machinery at all.
The measured ceilings are $F_{\rm dir}=0.938(1)$ for $\LFour$ and
$0.885(2)$ for $\GHZn{4}$ (Table~\ref{tab:hardware}), so the distribution
overhead is $0.124(3)$ and $0.109(4)$ respectively (computed from the
unrounded fidelities) --- a roughly uniform
cost of $\approx 0.11$--$0.12$ for replacing local two-qubit gates between
co-located parties by shared $\GKtwo$ resources, local operations, and
classical communication. Two further observations. First, the direct-prep
layouts use four \emph{adjacent} qubits while the distributed protocol's
data qubits are deliberately non-adjacent (resource qubits sit between
them); this is not a confound but the object of comparison --- co-location
versus separation is exactly what is being priced. Second, the hub
suppression appears already in the baseline: direct $\GHZn{4}$
($0.885$) sits well below direct $\LFour$ ($0.938$) despite identical
gate counts, so part of the distributed star's lower fidelity is intrinsic
to preparing a degree-$3$ hub on this hardware rather than a cost of
distribution. No baseline is reported for $\ket{C_4}$ because none exists
natively: direct preparation requires a $4$-cycle in the coupling graph,
which heavy-hex lacks (Remark~\ref{rem:direct_embed}) --- for the box
cluster, the distributed protocol is not merely competitive with native
direct preparation but is, of the two, the only one that executes natively
at all.

\paragraph{Comparison with prior walk-based hardware.}
The closest experimental point of reference is Chen \emph{et
al.}~\cite{chen2025entanglement}, who implemented walk-based distribution
modules on a ten-qubit superconducting chain and certified the outputs by
full state tomography ($10^4$ shots). They report average fidelities of
$0.798(20)$ for a two-party Bell state distributed from two Bell states,
$0.543(36)$ for a three-qubit GHZ state from two GHZ states, $0.544(11)$
for a three-party GHZ state from three Bell states, and $0.470(7)$ for a
three-party GHZ state from three GHZ states. The present results ---
$0.815(3)$, $0.776(3)$ and $0.733(3)$ for \emph{four}-party graph states,
including the non-GHZ topologies $\LFour$ and $\ket{C_4}$ --- are obtained
for strictly larger targets. The comparison should be read with two caveats:
the devices differ by generation (their reported $T_2^{*}$ values are
$1.2$--$2.7\,\mu\mathrm{s}$, against $T_2\approx148\,\mu\mathrm{s}$ here,
Table~\ref{tab:calibration}), so a substantial part of the difference is
hardware rather than protocol; and the certification methods differ, full
tomography versus direct fidelity estimation, although both report state
fidelity to the intended target. What the comparison does establish is that
distributing four-party graph states beyond the GHZ class is within reach of
current superconducting hardware at fidelities comparable to, or above, those
previously reported for smaller GHZ-class targets by walk-based methods.

\begin{table}[t]
\caption{Direct fidelity estimation on \texttt{ibm\_kingston}
(single batch, native SWAP-free layouts, $N=4096$ shots per setting,
nine settings per protocol). $F$ is the exact stabiliser
fidelity~\eqref{eq:dfe} of the \emph{distributed} state; the parenthesised
digit is the shot-noise standard error on the last place. ``CZ'' is the
transpiled two-qubit gate count per setting circuit (native, no SWAP
inflation). $F_{\rm dir}$ is the measured co-located ceiling: the same
state prepared \emph{directly} on four adjacent qubits (three local CZ
gates, same device, live layout solve under thresholds of the same form,
$N=8192$).
$^{\dagger}$Direct native preparation of $\ket{C_4}$ is impossible on
heavy-hex: it requires a $4$-cycle in the coupling graph, and heavy-hex
(girth $12$) contains none (Remark~\ref{rem:direct_embed}).
$^{\ddagger}$The last row refers to the canonical $K_4$ interaction graph
$I(K_4)$, not to the state $\ket{K_4}$: since $K_4\sim_{\rm LC}K_{1,3}$, the
state is obtainable on the star layout of row two
(Sec.~\ref{sec:embeddability}).
The embedding column records how much freedom the interaction graph leaves in
choosing physical qubits: \emph{flexible} (any sufficiently long clean chain),
\emph{degree-3 constrained} (a spider about a degree-3 vertex, of which none
meets the $\LFour$ threshold on this device), and \emph{rigid} (one entire
hexagonal face, none of which is clean). Sec.~\ref{sec:embeddability} argues
that this freedom, not the topology as such, is what the measured fidelities
track.}
\label{tab:hardware}
\begin{ruledtabular}
\begin{tabular}{lrrrl}
Protocol & CZ & $F$ & $F_{\rm dir}$ & Embedding \\
\hline
$\LFour$      &  9 & $0.815(3)$ & $0.938(1)$ & path (flexible) \\
$\GHZn{4}$    &  9 & $0.776(3)$ & $0.885(2)$ & star (degree-3 constrained) \\
$\ket{C_4}$   & 12 & $0.733(3)$ & --\,$^{\dagger}$ & 12-cycle (rigid, $=$ device girth) \\
$I(K_4)^{\ddagger}$ & -- & --    & -- & no native embedding (Thm.~\ref{thm:embeddability}) \\
\end{tabular}
\end{ruledtabular}
\end{table}

\paragraph{Stabiliser structure: the degree imprint.}
The per-generator expectations (Table~\ref{tab:stabilizer},
Fig.~\ref{fig:dfe_generators}) reveal that the target graph's degree
structure imprints directly on the stabiliser spectrum. For $\GHZn{4}$ the
hub generator $K_{\rm hub}=X_{\rm hub}Z_{B}Z_{C}Z_{D}$ measures
$0.714$, far below the three leaf generators ($0.89$--$0.90$); the hub party
applies three CZ gates (one per adjacent edge) while each leaf applies one,
so the weight-four hub stabiliser accumulates the most error. For $\LFour$
the two interior (weight-three) generators $K_B,K_C$ ($0.837$, $0.848$)
sit below the two endpoint (weight-two) generators $K_A,K_D$ ($0.939$,
$0.906$), the milder analogue of the same effect on a degree-two path.
The asymmetry is a property of high-degree nodes rather than of
distribution: as recorded above it is already present in the co-located
baseline, which involves no resource qubits, no measurement and no
correction at all. A joint fit of the
generators and the fidelity to independent local Pauli channels is
infeasible (residual $1.5\times10^{-2}$), indicating correlated or coherent
error beyond an independent single-qubit model; I therefore report the
measured spectrum without imposing such a model.

\begin{table}[t]
\caption{Stabiliser-generator expectations on \texttt{ibm\_kingston},
extracted from the same DFE batch. $\langle K_v\rangle$ is the corrected
expectation (Sec.~\ref{sec:universal_correction}); $F\ge$ is the fidelity
lower bound $1-\sum_v(1-\langle K_v\rangle)/2$, shown alongside the exact
$F$ from Table~\ref{tab:hardware}. For $\GHZn{4}$, $K_0$ is the hub.}
\label{tab:stabilizer}
\begin{ruledtabular}
\begin{tabular}{lrrrrrr}
Protocol & $\langle K_0\rangle$ & $\langle K_1\rangle$ &
$\langle K_2\rangle$ & $\langle K_3\rangle$ & $F\geq$ & $F$ \\
\hline
$\LFour$      & 0.939 & 0.837 & 0.848 & 0.906 & 0.765 & 0.815 \\
$\GHZn{4}$    & 0.714 & 0.904 & 0.894 & 0.890 & 0.701 & 0.776 \\
$\ket{C_4}$   & 0.811 & 0.898 & 0.802 & 0.795 & 0.653 & 0.733 \\
\end{tabular}
\end{ruledtabular}
\end{table}

\begin{figure}[t]
  \centering
  \includegraphics[width=\columnwidth]{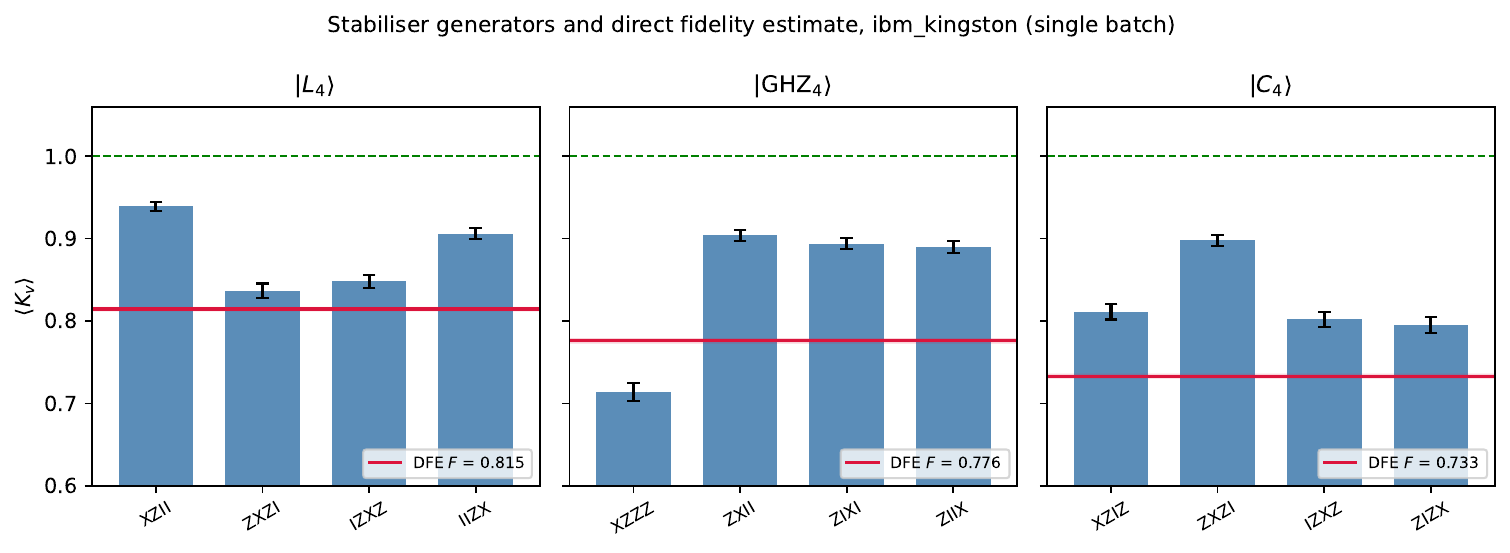}
  \caption{Stabiliser generators $\langle K_v\rangle$ (bars, with shot-noise
    error bars) and direct fidelity estimate (red line, band $=\pm\sigma$)
    for the three natively embeddable protocols on \texttt{ibm\_kingston},
    single batch. The hub generator of $\GHZn{4}$ (leftmost, $0.714$) and
    the interior generators of $\LFour$ lie well below the high-weight and
    endpoint generators respectively, exhibiting the degree imprint
    discussed in the text. The green dashed line marks the ideal value
    $+1$.}
  \label{fig:dfe_generators}
\end{figure}

\paragraph{Native embedding and the coherent-correction alternative.}
The $Z$-only correction (Theorem~\ref{thm:universal_z}) is applied in
classical post-processing, so the executed circuit is exactly the
distribution circuit of Definition~\ref{def:general_protocol}: prepare
$\GKtwo$ per edge, one CZ per data--resource interaction, measure. Its
interaction graph has maximum degree $2$ ($\LFour$), $3$ ($\GHZn{4}$), and
$2$ ($\ket{C_4}$), all within the heavy-hex maximum of $3$, and transpiles
to $9$, $9$, and $12$ CZ gates with no SWAP insertion. A tempting
alternative --- deferring the correction into coherent CZ gates rather than
classical post-processing --- inflates the data-qubit degree to $4$
($\LFour$) and $6$ ($\GHZn{4}$), exceeding the heavy-hex maximum and forcing
SWAP routing that roughly doubles the two-qubit gate count; in a preliminary
run this coherent formulation reached only $F\approx0.68$--$0.69$, some
$0.13$ below the native classical-correction result. The classical
feed-forward that Theorem~\ref{thm:universal_z} enables is therefore not
merely operationally convenient but hardware-native.

\paragraph{Device calibration and stationarity.}
Table~\ref{tab:calibration} lists the \texttt{ibm\_kingston} calibration at
the time of execution. Coherence on this device is non-stationary over the
batch timescale: one pinned qubit's $T_2$ collapsed from $214\,\mu\mathrm{s}$
to $22\,\mu\mathrm{s}$ between the layout-selection and a subsequent
calibration, a factor-of-ten swing characteristic of two-level-system
defects. This motivates the single-batch execution used here --- all three
protocols share the submission window and the same calibration epoch, so the
cross-protocol comparison is not confounded by drift --- and it is the
mechanism by which layouts of nominally equivalent calibrated quality can
differ in measured fidelity. The same
qubit's $T_2$ subsequently recovered to $141\,\mu\mathrm{s}$, confirming the
episodic character of the defect. The quoted uncertainties in
Tables~\ref{tab:hardware} and~\ref{tab:stabilizer} are consequently
shot-noise only: the single-batch design bounds statistical error but not
run-to-run variance under recalibration, which this dataset does not
constrain and which the observed layout and epoch sensitivity suggests is
of order a few times $10^{-2}$.

\begin{table}[t]
\caption{Device calibration data, \texttt{ibm\_kingston}, at execution
(2026-07-05).}
\label{tab:calibration}
\begin{ruledtabular}
\begin{tabular}{lr}
Parameter & Median value \\
\hline
CZ gate error & $1.81\times10^{-3}$ \\
$T_1$ & $248\,\mu\mathrm{s}$ \\
$T_2$ & $148\,\mu\mathrm{s}$ \\
Readout assignment error & $8.1\times10^{-3}$ \\
Mid-circuit readout error & $8.4\times10^{-3}$ \\
Readout duration & $2180\,$ns \\
\end{tabular}
\end{ruledtabular}
\end{table}

\subsection{Native Embeddability on Heavy-Hex Hardware}
\label{sec:embeddability}

The three protocols above occupy qualitatively different positions with
respect to the device connectivity, and this is not incidental: it follows
from a girth constraint that I now make precise.

The PQW distribution circuit places two resource qubits on every edge
(one per endpoint), so the physical \emph{interaction graph} $I(G)$ ---
the graph whose vertices are all data and resource qubits and whose edges
are the CZ interactions --- is obtained from the target graph $G$ by
subdividing each edge twice. Subdivision multiplies every cycle length by a
fixed factor: an $\ell$-cycle in $G$ becomes a $3\ell$-cycle in $I(G)$.
Consequently the girth satisfies $\mathrm{girth}(I(G)) = 3\,\mathrm{girth}(G)$
(with $\mathrm{girth}=\infty$ for acyclic $G$), and the maximum degree
satisfies $\Delta(I(G)) = \max(\Delta(G),2)$, since each data qubit $v$
retains degree $\deg_G(v)$ while each resource qubit has degree $2$.

\begin{theorem}[Necessary conditions for native embedding]
\label{thm:embeddability}
Let $H$ be a hardware coupling graph of maximum degree $3$ and girth
$\gamma$ (heavy-hex has $\Delta(H)=3$, $\gamma=12$). If the PQW interaction
graph $I(G)$ embeds as a subgraph of $H$ without SWAP insertion, then the
target graph $G$ satisfies
\begin{enumerate}[label=(\roman*),leftmargin=*]
  \item $\Delta(G)\le 3$, and
  \item $\mathrm{girth}(G)\ge \gamma/3$; equivalently, $G$ contains no cycle
        shorter than $\gamma/3$. For heavy-hex this forbids triangles
        ($G$ must be triangle-free).
\end{enumerate}
\end{theorem}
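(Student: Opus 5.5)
The plan is to use the standard monotonicity of degree and girth under taking subgraphs, together with the two structural facts about $I(G)$ recorded just before the statement: $\Delta(I(G))=\max(\Delta(G),2)$ and $\mathrm{girth}(I(G))=3\,\mathrm{girth}(G)$. A SWAP-free native execution means there is an injective map $\phi$ from the vertices of $I(G)$ (data and resource qubits) to physical qubits such that every CZ interaction is mapped to a coupler of $H$. In other words, $I(G)$ is isomorphic to a (not necessarily induced) subgraph $\phi(I(G))\subseteq H$. Both conditions will be read off from this single embedding.

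For (i), take any data qubit $d_v$. Its neighbours in $I(G)$ are the $\deg_G(v)$ distinct resource qubits $r_{v,e}$, one for each edge $e\ni v$. Since $\phi$ is injective and preserves adjacency, $\phi(d_v)$ has at least $\deg_G(v)$ distinct neighbours in $H$. Hence $\deg_G(v)\le\Delta(H)=3$ for every $v$, which gives $\Delta(G)\le3$.

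For (ii), if $G$ is acyclic there is nothing to prove, since its girth is $\infty$. Otherwise, take a shortest cycle $v_1v_2\cdots v_\ell v_1$ of $G$, with $\ell=\mathrm{girth}(G)$. Replacing each edge $(v_i,v_{i+1})$ by the path $d_{v_i},r_{v_i,e},r_{v_{i+1},e},d_{v_{i+1}}$ produces a closed walk of length $3\ell$ in $I(G)$. I will check that this closed walk is a simple cycle: its vertices are the $\ell$ distinct data qubits and $2\ell$ resource qubits belonging to $\ell$ distinct edges, so no vertex repeats. This is the one point needing care, since it is what justifies the subdivision claim. Under the injective, adjacency-preserving $\phi$, a simple cycle maps to a simple cycle of the same length in $H$. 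Therefore $H$ contains a cycle of length $3\ell$, so $\gamma\le3\ell$, i.e.\ $\mathrm{girth}(G)\ge\gamma/3$. For heavy-hex, $\gamma=12$ gives $\mathrm{girth}(G)\ge4$, which excludes $C_3$; so $G$ is triangle-free.

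There is no real obstacle. The only subtlety is that the argument uses the girth inequality only in the direction that survives non-induced subgraphs, namely that a cycle in the subgraph is a cycle in $H$, so the full equality for $\mathrm{girth}(I(G))$ is not needed. The "equivalently" clause in (ii) is immediate from the definition of girth.
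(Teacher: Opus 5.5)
Your proposal is correct and follows the paper's proof essentially step for step. For (i) you use monotonicity of degree under the subgraph embedding. For (ii) you map the twice-subdivided shortest cycle of $G$ into $H$ and conclude $3\,\mathrm{girth}(G)\ge\gamma$; checking that the subdivided closed walk is a simple cycle, and noting that only the cycle-containment direction is needed, are details the paper leaves implicit but do not change the argument.
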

\begin{proof}
A subgraph embedding preserves both degree and girth as inequalities.
For (i): a data qubit $v$ has degree $\deg_G(v)$ in $I(G)$, which cannot
exceed $\Delta(H)=3$. For (ii): a shortest cycle of $G$, of length
$\mathrm{girth}(G)$, becomes a cycle of length $3\,\mathrm{girth}(G)$ in
$I(G)$; a subgraph of $H$ cannot contain a cycle shorter than $\gamma$, so
$3\,\mathrm{girth}(G)\ge\gamma$. For heavy-hex, $\gamma/3=4$, so any triangle
($3$-cycle) in $G$ would require a $9$-cycle in $I(G)$, impossible in a
girth-$12$ graph.
\end{proof}

\begin{corollary}[$I(K_4)$ is not natively embeddable on heavy-hex]
\label{cor:k4}
$K_4$ contains triangles, so its interaction graph contains $9$-cycles,
which cannot occur as subgraphs of a girth-$12$ coupling graph. The
canonical $K_4$ distribution circuit therefore admits no SWAP-free embedding
on any heavy-hex device, independent of qubit selection or calibration.
\end{corollary}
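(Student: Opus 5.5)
The plan is to derive Corollary~\ref{cor:k4} directly from Theorem~\ref{thm:embeddability}(ii), together with the subdivision description of $I(G)$ given just before that theorem. The argument has three steps, each checking one hypothesis of the theorem or one structural fact.

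\emph{Step 1: the interaction graph of $K_4$.} Each edge of $G$ is subdivided twice, by the two resource qubits $r_{u,e}$ and $r_{v,e}$. So a triangle $u$--$v$--$w$ in $K_4$ becomes the closed walk
$d_u, r_{u,uv}, r_{v,uv}, d_v, r_{v,vw}, r_{w,vw}, d_w, r_{w,wu}, r_{u,wu}, d_u$ in $I(K_4)$. These nine vertices are distinct, because each resource qubit belongs to a unique (edge, endpoint) pair. The walk is therefore a genuine $9$-cycle, which gives $\mathrm{girth}(I(K_4)) \le 9$; in fact equality holds by the identity $\mathrm{girth}(I(G)) = 3\,\mathrm{girth}(G)$.

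\emph{Step 2: girth is monotone under subgraphs.} Suppose some qubit assignment gave a SWAP-free execution. Every CZ of the circuit would then act on a physically coupled pair, so $I(K_4)$ would be isomorphic to a subgraph of the coupling graph $H$. Any cycle in a subgraph is also a cycle in $H$, so $H$ would contain a cycle of length $9 < 12 = \gamma$. That contradicts $\mathrm{girth}(H) = 12$ for heavy-hex. Equivalently, the conclusion follows from Theorem~\ref{thm:embeddability}(ii) with $\gamma/3 = 4 > 3 = \mathrm{girth}(K_4)$.

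\emph{Step 3: independence of qubit selection and calibration.} The obstruction is purely graph-theoretic. It applies to every injective map from the qubits of $I(K_4)$ into any heavy-hex device, since all such devices have girth $12$. So no choice of layout or calibration can avoid it.

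The only step needing care is the claim that SWAP-free execution is equivalent to a subgraph embedding of $I(G)$. This means every qubit of the circuit must be a distinct physical qubit, with no reuse of a measured resource qubit. I will state this as the operative definition of native embedding, as the paper does. Under that convention the argument is immediate.
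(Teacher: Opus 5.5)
Your proposal is correct and follows the paper's own route. A triangle of $K_4$ becomes a $9$-cycle in the twice-subdivided interaction graph $I(K_4)$, and a SWAP-free subgraph embedding into a girth-$12$ coupling graph cannot contain such a cycle; this is exactly Theorem~\ref{thm:embeddability}(ii) with $\gamma/3=4>3=\mathrm{girth}(K_4)$. Two of your steps spell out what the paper leaves implicit: the check that the nine vertices of that cycle are distinct, and the stipulation that native execution means an injective subgraph embedding with no reuse of qubits.
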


These conditions are \emph{necessary}, not sufficient: they exclude
the canonical $K_4$ circuit but do not by themselves guarantee that an
arbitrary triangle-free, degree-$\le 3$ graph embeds, since embedding also
requires a suitable matching of $I(G)$ into the specific lattice.

They also constrain the \emph{circuit}, not the \emph{state}. Local Cliffords
are free in this operational model, so both the resource cost and the
embeddability of a target are properties of its local-Clifford orbit rather
than of the particular graph written
down~\cite{schroeder2024deterministic,van2004graphical}. $\ket{K_4}$ is the
sharp case: one local complementation at the hub of $K_{1,3}$ completes its
independent neighbourhood into a triangle, so $K_4\sim_{\rm LC}K_{1,3}$ and
$\ket{K_4}$ is obtainable from \emph{three} resource pairs on the natively
embeddable star layout followed by one layer of single-qubit Cliffords ---
indeed it is the state measured at $F=0.776(3)$ in
Table~\ref{tab:hardware} under the name $\GHZn{4}$.
Corollary~\ref{cor:k4} is therefore a statement about the canonical
interaction graph $I(K_4)$, not an impossibility for the state. By the same
token the $|E|$ pairs consumed by the protocol are a sufficiency statement for
the representative one chooses to distribute, not a state-level optimum: that
optimum is $\min\{|E'| : G'\in\mathrm{LC}(G)\}$, which is strictly smaller for
$C_3$, $C_4$, the diamond and $K_4$ among others. Trees are where the two
coincide, which is why the cut argument of Sec.~\ref{sec:intro} is tight there
and only there. The three tested protocols
illustrate the resulting hierarchy. The path $\LFour$ (acyclic, $\Delta=2$)
embeds flexibly: its interaction graph is a chain that fits any sufficiently
long run of qubits, leaving freedom to select high-coherence qubits.
The star $\GHZn{4}$ ($\Delta=3$, acyclic) embeds as a depth-one spider about
a degree-$3$ vertex. The cycle $\ket{C_4}$ (girth $4$, $\Delta=2$) is the
boundary case: its interaction graph is a $12$-cycle, exactly matching the
heavy-hex girth, so it embeds \emph{rigidly} --- only on the device's
hexagonal faces, of which \texttt{ibm\_kingston} has $21$, with no freedom
to avoid a blemished qubit within a chosen face. $I(K_4)$ lies past the
boundary and does not embed at all. The measured fidelities
(Table~\ref{tab:hardware}) descend along this hierarchy, from the flexible
path to the rigid cycle, with the non-embeddable complete graph absent by
Corollary~\ref{cor:k4}.

The descent should not be read as a correlation between rigidity and fidelity
with layout quality as a competing explanation, because layout quality is the
\emph{mechanism} by which rigidity is paid for. Flexibility is precisely the
freedom to choose which physical qubits carry the protocol: $\LFour$
subdivides to a chain, so it may be placed on the best run of qubits the
device offers, and it was ($T_2 \ge 147\,\mu\mathrm{s}$, readout $<2\%$;
Sec.~\ref{sec:hardware}). A rigid embedding forfeits that freedom by
definition --- $\ket{C_4}$ must occupy one of the 21 hexagonal faces in its
entirety, and on this device none of the 21 is clean, so the best available
placement still carries $T_2$ down to $96\,\mu\mathrm{s}$ and readout error up
to $3.8\%$. The star sits between the two, needing a degree-$3$ vertex but
being otherwise free, and its layout meets an intermediate cut. The
degradation in Table~\ref{tab:hardware} is therefore not measured
\emph{despite} comparable layouts; comparable layouts are unavailable, and
their unavailability is what the embeddability hierarchy predicts. Three
representatives are not enough to establish a quantitative
rigidity--fidelity law, and none is claimed; what the data support is the
weaker statement that embedding constraints materially restrict the layouts
available to a given circuit. A device
with a larger supply of clean faces would narrow the gap without changing the
hierarchy, and that is the experiment which would separate the two readings.

\begin{remark}[Distribution relaxes the embedding constraint]
\label{rem:direct_embed}
The same subgraph argument applies to \emph{direct} preparation of
$\ket{G}$ on hardware: the preparation circuit applies one CZ per edge of
$G$ itself, so native direct preparation requires $G\subseteq H$, i.e.
$\mathrm{girth}(G)\ge\gamma$ and $\Delta(G)\le\Delta(H)$. The distributed
protocol instead requires only the twice-subdivided graph to embed,
relaxing the girth condition threefold: $\mathrm{girth}(G)\ge\gamma/3$.
On heavy-hex ($\gamma=12$) direct preparation natively admits only graphs
of girth $\ge 12$ or acyclic graphs, whereas the distributed protocol
admits all triangle-free targets. $\ket{C_4}$ witnesses the gap: a
$4$-cycle does not exist in a girth-$12$ coupling graph, so the box
cluster cannot be prepared directly without SWAP routing --- yet its
subdivided $12$-cycle embeds exactly on the hexagonal faces, so the
distributed protocol executes natively where direct preparation cannot.
The price of this threefold girth relaxation is $2|E|$ resource qubits,
co-located with the participating parties, and a
fidelity overhead measured at $0.109$--$0.124$ for the two targets that admit
a co-located baseline (Table~\ref{tab:hardware}); for $\ket{C_4}$ no such
baseline exists, so its overhead is not measured here.
\end{remark}

\begin{figure}[t]
  \centering
  \includegraphics[width=\columnwidth]{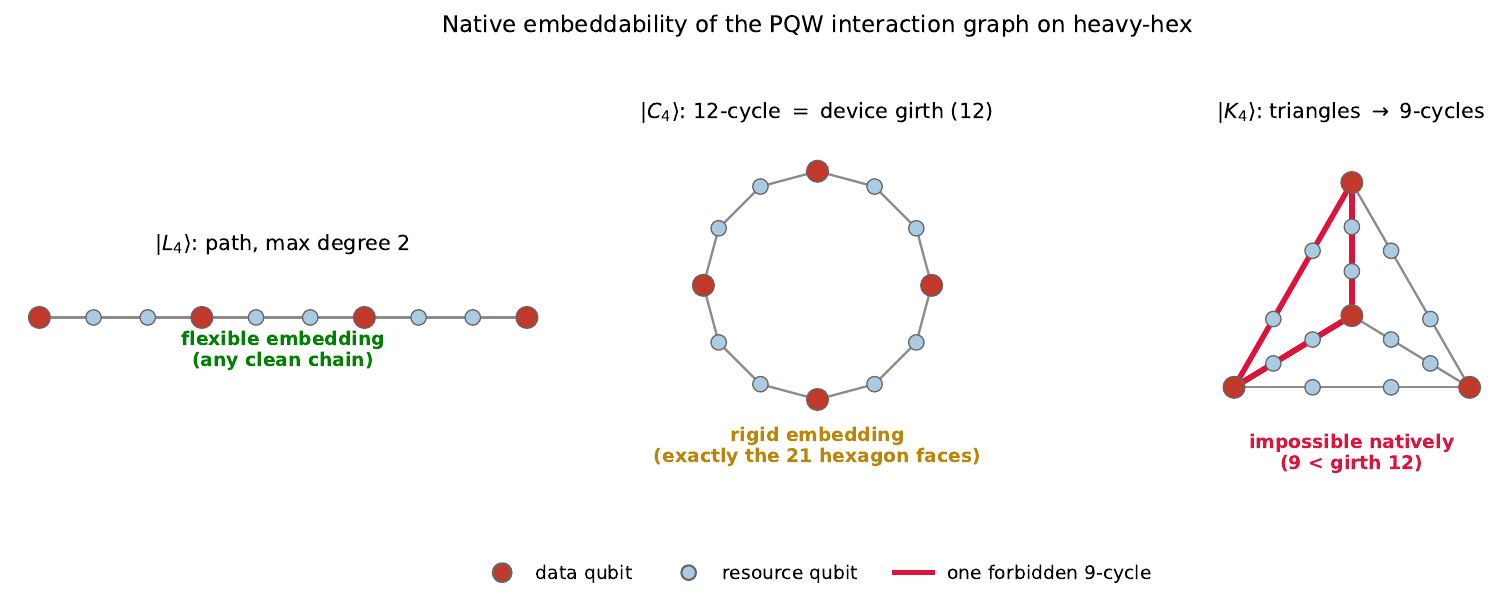}
  \caption{Native embeddability of the PQW interaction graph on heavy-hex.
    \emph{Left:} $\LFour$ subdivides to a path (max degree $2$), embedding
    flexibly. \emph{Centre:} $\ket{C_4}$ subdivides to a $12$-cycle, equal
    to the heavy-hex girth, embedding rigidly on hexagonal faces.
    \emph{Right:} each triangle of $K_4$ subdivides to a $9$-cycle
    (one of four highlighted); since $9<12$, no native embedding of
    $I(K_4)$ exists (Corollary~\ref{cor:k4}); the state $\ket{K_4}$ remains
    reachable on the star layout of the left-hand panel's LC orbit.
    Filled and open circles denote data and resource qubits.}
  \label{fig:embeddability}
\end{figure}

\section{Generalisation to Arbitrary Graphs}
\label{sec:generalisation}

\begin{definition}[Phase walk graph state protocol]
\label{def:general_protocol}
Given target graph $G=(V,E)$, one data qubit $d_v$ per party $v\in V$:
(S1) $d_v\leftarrow\ket{+}$;
(S2) prepare and distribute $\GKtwo$ per edge;
(S3) each $v$ applies $\CZ(d_v,r_{v,e})$, $H(r_{v,e})$, measures $s_{v,e}$
for all adjacent $e$;
(S4) broadcast, apply corrections.
Output: $d_v\,\forall v$ in $\ket{G}$.
Each party $v$ therefore holds $1+\deg_G(v)$ qubits and applies
$\deg_G(v)$ CZ gates, every one of them between two qubits it holds itself.
\end{definition}

\begin{theorem}[General correction formula for tree graphs]
\label{thm:general_correction}
For tree $G=(V,E)$, the correction at node $v$ is
$C_v = X_v^{f_v}\cdot Z_v^{g_v}$, where
$f_v = \bigoplus_{e\ni v} s_{\mathrm{near}(e,v)}$ and
$g_v = \bigoplus_{e\ni v} s_{\mathrm{far}(e,v)}$,
with one leaf per component assigned $C_v=I$ as reference node.
\end{theorem}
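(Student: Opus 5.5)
The plan is to avoid tracking byproducts edge by edge, and instead work in the stabiliser picture that the paper already uses for the noise analysis. Lemma~\ref{lem:phase} states that after all $2|E|$ walk steps and measurements the data register is stabilised by $(-1)^{g_v}K_v$ for every $v$, with $g_v=\bigoplus_{e\ni v}s_{\mathrm{far}(e,v)}$. Since the generators $\{K_v\}$ determine a unique state, any Pauli correction $C=\bigotimes_v C_v$ restores $\ket{G}$ exactly when, for every $v$, $C$ anticommutes with $K_v$ precisely when $g_v=1$. This reduces the theorem to a statement over $\mathbb{F}_2$ about commutation parities, and it should explain why the formula is stated for trees.

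\textbf{Step 1: the $Z$ part.} The $Z$ part of $C$ is $\bigotimes_v Z_v^{g_v}$. A factor $Z_u$ anticommutes with $K_v$ only if $u=v$, because $K_v$ carries an $X$ only at $v$. Hence the $Z$ part flips the sign of $K_v$ exactly when $g_v=1$, which already gives the required pattern.

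\textbf{Step 2: the $X$ part.} It then suffices to show that the $X$ part $\bigotimes_v X_v^{f_v}$ commutes with every $K_v$, i.e.\ that it stabilises $\ket{G}$ up to a global phase. A factor $X_u$ anticommutes with $K_v$ exactly when $u\sim v$. So the requirement is
\[
\bigoplus_{u\sim v} f_u=0 \quad\text{for every } v\in V,
\]
that is, $A_G\mathbf{f}=0$ over $\mathbb{F}_2$, where $\mathbf{f}$ is the vector of $X$ exponents actually applied after the reference convention.

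\textbf{Step 3: solving $A_G\mathbf{f}=0$ on a tree.} I would root each component at its reference leaf $\ell$ and set $f_\ell=0$. I would then peel the tree from the leaves inward: at a leaf $w\neq\ell$ the constraint at $w$ involves only its unique neighbour. Propagating these constraints along the tree should determine the remaining exponents uniquely from the measurement outcomes. The aim is to show, by induction on the depth from $\ell$, that these forced values agree with the near-outcome parities $f_v=\bigoplus_{e\ni v}s_{\mathrm{near}(e,v)}$ once the reference shift at $\ell$ is absorbed into the neighbouring nodes. This is the mechanism by which Theorem~\ref{thm:correction} turned raw byproducts into the stated $X$ exponents.

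As a check I would verify the $P_4$ case against Theorem~\ref{thm:correction}. There the two corrections differ by $K_B^{g_A}K_C^{g_B}K_D^{g_A\oplus g_C}$, an element of $\mathrm{Stab}(\LFour)$.

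\textbf{Main obstacle.} The hard step is Step 3, and in particular reconciling the literal $f_v$ of the statement with a kernel vector of $A_G$. For a general outcome vector, the raw near-parities need not satisfy $A_G\mathbf{f}=0$; their failure shows up as extra $Z$ factors $Z^{A_G\mathbf{f}}$. These must either cancel or be recognised as stabiliser elements times a global phase.

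I would first try to prove directly that the leftover operator $\bigotimes_v X_v^{f_v}Z_v^{(A_G\mathbf{f})_v}$ is the stabiliser product $\prod_v K_v^{f_v}$, so it acts trivially on $\ket{G}$. This is the route of Remark~\ref{rem:relation_topology}, and it means the $X$ part can be dropped in favour of the $Z$-only form. Acyclicity would enter through the uniqueness of paths to the reference leaf, which makes the propagation in Step 3 well defined. On a graph with cycles, the corresponding constraints around a cycle can be inconsistent.
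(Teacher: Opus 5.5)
Your parity criterion is the right tool, but Step~3 cannot be completed. Applied carefully, the criterion shows that the stated correction fails, for some outcomes, on every tree with at least three vertices.

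Step~1 already misdescribes the operator. The convention $C_\ell=I$ removes $Z_\ell^{g_\ell}$ as well as $X_\ell^{f_\ell}$. So with $f'_\ell=0$ the condition to verify is $A_G\mathbf{f}'=g_\ell\mathbf{e}_\ell$, not $A_G\mathbf{f}=0$. This is why a single edge works: there $f'_B=s_{e,B}=g_A$. Either system sees the outcomes at most through the one bit $g_\ell$. Your homogeneous version sees them not at all, and for $P_4$, where $A_G$ is invertible over $\mathbb{F}_2$, it forces $\mathbf{f}=0$. So whatever exponents your peeling ``uniquely'' determines are functions of $g_\ell$ alone, whereas $\bigoplus_{e\ni v}s_{\mathrm{near}(e,v)}$ is not.

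Concretely, take $P_3=A$---$B$---$C$ with reference $A$. Let $s_1,s_2$ be the $A$- and $B$-side outcomes on $AB$, and $s_3,s_4$ the $B$- and $C$-side outcomes on $BC$. Set $s_4=1$ and all others $0$. By Lemma~\ref{lem:phase} the data are in $Z_B\ket{G}$. The formula gives $C_B=X^{s_2\xor s_3}Z^{s_1\xor s_4}=Z_B$ and $C_C=X^{s_4}Z^{s_3}=X_C$. The output $X_C\ket{G}$ is orthogonal to $\ket{G}$, because $X_C$ anticommutes with $K_B$.

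The mechanism is general. The pre-correction state depends on $\mathbf{s}$ only through $\mathbf{g}$, but $\mathbf{f}$ is not a function of $\mathbf{g}$. Let $w$ be the neighbour of $\ell$ and $c\neq\ell$ another neighbour of $w$. Flip $\ell$'s own outcome on $(\ell,w)$ together with $c$'s own outcome on $(w,c)$. This leaves every $g_v$ unchanged but multiplies the stated correction by $X_c$, which anticommutes with $K_w$. One of the two outcomes is therefore miscorrected.

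Your fallback does not repair this. The identity $\prod_v K_v^{f_v}\propto\bigotimes_v X_v^{f_v}Z_v^{(A_G\mathbf{f})_v}$ holds for every $\mathbf{f}$, so it says nothing about the near parities. What it does show is that $X^{\mathbf{f}}Z^{\mathbf{g}}$ equals $Z^{\mathbf{g}\xor A_G\mathbf{f}}$ times a stabiliser element. On the pre-correction state $Z^{\mathbf{g}}\ket{G}$ it therefore returns $Z^{A_G\mathbf{f}}\ket{G}$, not $\ket{G}$. ``Dropping the $X$ part'' substitutes a different correction, so that route proves Theorem~\ref{thm:universal_z}, not this statement.

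Your $P_4$ check has the same defect. It relates Theorem~\ref{thm:correction} to the $Z$-only form, but Theorem~\ref{thm:correction} is not of near/far shape: it has $C_B=X^{s_2}$, against $X^{s_2\xor s_3}Z^{s_1\xor s_4}$ from the tree formula.

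The paper's leaf-removal induction does not supply the missing step either. The extra $X_p^{s_{\mathrm{near}(e^*,p)}}$ ``absorbed by updating $f_p$'' anticommutes with $K_u$ for \emph{every} neighbour $u$ of $p$. It thus undoes signs the inductive hypothesis had already fixed: building $P_3$ from $A$---$B$, the update $f_B\mapsto s_2\xor s_3$ flips $K_A$ whenever $s_3=1$.

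What your criterion does prove is the correct reference-leaf family $C=X^{\mathbf{c}}Z^{\mathbf{g}\xor A_G\mathbf{c}}$, for any $\mathbf{c}$ with $c_\ell=0$ and $c_w=g_\ell$. This is $Z^{\mathbf{g}}$ times $\prod_v K_v^{c_v}$ up to phase, and it has $C_\ell=I$. For $P_4$ the choice $\mathbf{c}=(0,g_A,g_B,g_A\xor g_C)$ reproduces Theorem~\ref{thm:correction}.
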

\begin{proof}
By induction on $|E|$. Base case $|E|=1$: Lemma~\ref{lem:byproduct} gives
the correction directly. Inductive step: remove a leaf $v^*$ adjacent to
parent $p$ via edge $e^*$. By induction, corrections for $G\setminus\{e^*,v^*\}$
satisfy the formula. Adding $e^*$ introduces byproduct $X^{s_{\mathrm{near}(e^*,p)}}$
at $p$ (absorbed by updating $f_p$) and determines $f_{v^*}, g_{v^*}$ at the
new leaf. This closes the induction.
\end{proof}

\begin{theorem}[Correction formula for $C_4$ ring distribution]
\label{thm:c4_correction}
For the four-cycle $C_4=A$---$B$---$C$---$D$---$A$ with outcomes
$s_1,s_2$ (edge $AB$); $s_3,s_4$ ($BC$); $s_5,s_6$ ($CD$); $s_7,s_8$ ($DA$):
\begin{align}
  A,B &: I,\\
  C &: X^{s_1\xor s_4}\cdot Z^{s_2\xor s_3\xor s_6\xor s_7},\\
  D &: X^{s_2\xor s_7}\cdot Z^{s_1\xor s_4\xor s_5\xor s_8}.
\end{align}
\end{theorem}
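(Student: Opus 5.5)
The plan is to prove the theorem by stabiliser tracking: first show that the uncorrected post-measurement data state is a signed version of $\ket{C_4}$ whose sign pattern is fixed by the far-side outcomes, then show that the stated $X$/$Z$ corrections equal the natural $Z$-only correction times elements of $\mathrm{Stab}(\ket{C_4})$. Since every operation is Clifford and every measurement is single-qubit Pauli, the post-measurement data state is a stabiliser state and the whole argument reduces to bookkeeping of Pauli generators.

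\emph{Step 1: stabilisers before measurement.} Before step S3 the global state is stabilised by:
\begin{itemize}
  \item $X_{d_v}$ for each vertex;
  \item $X_{r_{u,e}}Z_{r_{v,e}}$ and $Z_{r_{u,e}}X_{r_{v,e}}$ for each edge $e=(u,v)$ (Proposition~\ref{prop:output}).
\end{itemize}
Conjugating by the walk gates $\CZ(d_v,r_{v,e})$ maps $X_{d_v}\mapsto X_{d_v}\prod_{e\ni v}Z_{r_{v,e}}$ and $X_{r_{v,e}}\mapsto Z_{d_v}X_{r_{v,e}}$, and leaves every $Z$ fixed (Remark~\ref{rem:cz_props}(i)). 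The pre-measurement Hadamards then exchange $X\leftrightarrow Z$ on every resource qubit. This gives the generators $X_{d_v}\prod_{e\ni v}X_{r_{v,e}}$ and, for each edge $e=(u,v)$ and each orientation, $X_{r_{v,e}}Z_{d_u}Z_{r_{u,e}}$.

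\emph{Step 2: measurement.} Multiplying the vertex generator of $v$ by the $\deg v$ edge generators that contain $X_{r_{v,e}}$ cancels every $X$ on a resource qubit. What remains is
\begin{equation*}
  X_{d_v}\prod_{u\sim v}Z_{d_u}\prod_{e=(v,u)}Z_{r_{u,e}},
\end{equation*}
which commutes with all $Z$-basis measurements. After the measurements, each $Z_{r_{u,e}}$ is replaced by $(-1)^{s_{u,e}}$. The data state is therefore stabilised by $(-1)^{g_v}K_v$, where $g_v$ is the parity of the far-side outcomes on the edges at $v$. Counting generators ($|V|$ independent data stabilisers on $|V|$ qubits) shows the data state is pure and equal to $\prod_v Z_v^{g_v}\ket{C_4}$, since $Z_v$ anticommutes with $K_v$ only.

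With the labelling of the statement, the far-side parities are
\begin{equation*}
  g_A=s_2\xor s_7,\quad g_B=s_1\xor s_4,\quad g_C=s_3\xor s_6,\quad g_D=s_5\xor s_8 .
\end{equation*}
Here $s_7$ and $s_8$ are taken as the $D$- and $A$-sides of $DA$, following the orientation convention used for the other edges.

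\emph{Step 3: relating the stated correction to the $Z$-only one.} The stated correction is
\begin{equation*}
  X_C^{g_B}Z_C^{g_A\xor g_C}\,X_D^{g_A}Z_D^{g_B\xor g_D}.
\end{equation*}
I would show directly that it equals
\begin{equation*}
  \Bigl(\bigotimes_v Z_v^{g_v}\Bigr)K_C^{\,g_B}K_D^{\,g_A}
\end{equation*}
up to a global phase, using $K_C=Z_BX_CZ_D$ and $K_D=Z_AZ_CX_D$ for the cycle. The $Z_A$ and $Z_B$ factors cancel in pairs. The surviving exponents reproduce exactly the $C$ and $D$ entries, and $A$ and $B$ receive the identity. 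Since the $K$'s stabilise $\ket{C_4}$, applying the stated correction to $\prod_v Z_v^{g_v}\ket{C_4}$ yields $\ket{C_4}$ up to phase for all $256$ outcomes.

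The main obstacle I expect is Step 3's bookkeeping, not the stabiliser derivation. One must fix the orientation convention for $s_7,s_8$ on edge $DA$ consistently, and check that the $X$ factors, when commuted past the $Z$ factors, contribute only a global sign rather than flipping a generator. I would confirm this by checking the anticommutation count, and cross-check against exhaustive statevector verification of all $256$ outcomes.
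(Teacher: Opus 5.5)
Your proposal is correct and follows the paper's own route. The Phase Lemma (Lemma~\ref{lem:phase}), specialised to $C_4$, gives the signed stabilisers $(-1)^{g_v}K_v$ with exactly the far-side parities you list, and the stated $X$/$Z$ corrections equal the $Z$-only correction times $K_C^{g_B}K_D^{g_A}$, which is precisely the identity in Remark~\ref{rem:c4_stab}. Your explicit exponent check makes that identity analytic where the paper checks it numerically, and the sign issue you flag is harmless, since reordering Pauli factors only ever multiplies the correction operator by an overall scalar.
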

\begin{proof}
By Lemma~\ref{lem:phase}, the post-measurement data stabilisers are
$(-1)^{g_v}K_v$ where $g_A = s_2\xor s_7$, $g_B = s_1\xor s_4$,
$g_C = s_3\xor s_6$, $g_D = s_5\xor s_8$ (XOR of far-side outcomes at each node).
The $Z$-only corrections $C_v = Z_v^{g_v}$ from Theorem~\ref{thm:universal_z}
suffice; the stated $X$ corrections arise from choosing $A,B$ as reference
nodes and solving the $\mathbb{F}_2$ linear system for the residual $X$ byproducts,
which differ from the $Z$-only formula by stabiliser-group elements.
All 256 outcomes verified at $F=1.0$ ($1-F<10^{-12}$).
\end{proof}

\begin{remark}[How Theorem~\ref{thm:c4_correction} sits inside the $Z$-only form]
\label{rem:c4_stab}
Write $g_A=s_2\xor s_7$, $g_B=s_1\xor s_4$, $g_C=s_3\xor s_6$ and
$g_D=s_5\xor s_8$ for the far-side parities of
Theorem~\ref{thm:c4_correction}. The two corrections differ by an element
of $\mathrm{Stab}(\ket{C_4})$, which acts trivially on the target:
\begin{equation}
  \bigotimes_{v} C_v^{(\mathrm{Thm.~\ref{thm:c4_correction}})}
  \;=\;
  \Bigl(\bigotimes_{v} Z_v^{\,g_v}\Bigr)\,
  K_C^{\,g_B}\,K_D^{\,g_A}
  \label{eq:c4_stab_identity}
\end{equation}
up to a global phase, verified operator-wise for all $256$ outcomes.
The factors $X_C^{s_1\xor s_4}$ and $X_D^{s_2\xor s_7}$ visible in
Theorem~\ref{thm:c4_correction} are the $X$ parts of $K_C^{g_B}$ and
$K_D^{g_A}$; the $Z$ parts of those same generators are what shift the
$Z$ exponents at the remaining nodes. The $X$ factors therefore cannot be
dropped in isolation --- doing so does not leave a stabiliser element
behind --- and the $Z$-only correction of
Theorem~\ref{thm:universal_z} for $C_4$ is $C_v = Z_v^{g_v}$ at
\emph{all four} nodes, with $C_A$ and $C_B$ nontrivial whenever
$g_A$ or $g_B$ is~$1$.
\end{remark}

\section{Universal Z-only Correction Theorem}
\label{sec:universal_correction}

Theorems~\ref{thm:general_correction} and~\ref{thm:c4_correction} derive
corrections for specific graph families. I now prove a single correction
formula that is valid for \emph{all} graphs and \emph{all} measurement
outcomes, requiring only $Z$ operators.

\begin{remark}[Practical recommendation]
\label{rem:practical}
Theorem~\ref{thm:universal_z} below \emph{supersedes}
Theorems~\ref{thm:general_correction} and~\ref{thm:c4_correction} in
practice. Any implementation should use the Z-only formula $C_v=Z_v^{g_v}$
--- it is simpler (no $X$ corrections), graph-independent, and proved for
all topologies. Theorems~\ref{thm:general_correction}
and~\ref{thm:c4_correction} are retained to show the historical derivation
and to demonstrate that the $X$ corrections in those formulas are
stabiliser-group elements $K_v \in \mathrm{Stab}(\ket{G})$ that leave
$\ket{G}$ invariant, explaining why they can be dropped.
\end{remark}

\subsection{Notation}
\label{sec:notation}

For each edge $e=(u,v)\in E$ and each endpoint $v$, denote by
$s_{e,v}$ the measurement outcome of the resource qubit $r_{e,v}$
(the \emph{near-side} outcome at $v$), and by $s_{e,\bar{v}}$ the
outcome of $r_{e,u}$ (the \emph{far-side} outcome at $v$).
Three notations for these same quantities appear in this paper and are
used interchangeably: the positional labels $s_1,\dots,s_{2|E|}$ of
Sec.~\ref{sec:protocol} and Theorem~\ref{thm:c4_correction}, convenient for
writing out a specific small graph; the functional forms
$s_{\mathrm{near}(e,v)}$ and $s_{\mathrm{far}(e,v)}$ of
Theorem~\ref{thm:general_correction}; and the index forms $s_{e,v}$ and
$s_{e,\bar{v}}$ used here and in Theorem~\ref{thm:universal_z}. Explicitly,
$s_{\mathrm{near}(e,v)}\equiv s_{e,v}$ and
$s_{\mathrm{far}(e,v)}\equiv s_{e,\bar{v}}$; for $\LFour$ with the labelling
of Sec.~\ref{sec:protocol}, $g_A=s_2$, $g_B=s_1\xor s_4$,
$g_C=s_3\xor s_6$ and $g_D=s_5$.

\subsection{The Phase Lemma}

\begin{lemma}[Phase Lemma]
\label{lem:phase}
After Steps~S1--S4 of the \PQW\ protocol with measurement outcome
$\mathbf{s}$, the post-measurement state of the data qubits is stabilised by
\begin{equation}
  \tilde{K}_v(\mathbf{s})
    = (-1)^{g_v(\mathbf{s})}\, K_v, \qquad v \in V,
  \label{eq:phase_stab}
\end{equation}
where
\begin{equation}
  g_v(\mathbf{s})
    = \bigoplus_{e \ni v} s_{e,\bar{v}}.
  \label{eq:gv_def}
\end{equation}
\end{lemma}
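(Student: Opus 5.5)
The plan is to track stabilisers in the Heisenberg picture through the Clifford circuit of Definition~\ref{def:general_protocol}. I will then read off the data-qubit stabilisers from the standard rule for $Z$-basis measurement of a stabiliser state. Any element of the pre-measurement stabiliser group of the form $P_{\mathrm{data}}\otimes\prod_r Z_r^{a_r}$ yields, after outcomes $\mathbf{s}$, the post-measurement stabiliser $(-1)^{\sum_r a_r s_r}P_{\mathrm{data}}$ on the data register. So it suffices, for each $v$, to exhibit a pre-measurement stabiliser whose data part is $K_v$ and whose resource part is a product of $Z$'s on exactly the far-side qubits $r_{e,u}$, $e=(u,v)\ni v$.

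\emph{Step 1 (initial generators).} After S1--S2 the full state is stabilised by $X_{d_v}$ for $v\in V$. By Proposition~\ref{prop:output}, each edge $e=(u,v)$ also contributes the two generators $X_{r_{e,u}}Z_{r_{e,v}}$ and $Z_{r_{e,u}}X_{r_{e,v}}$.

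\emph{Step 2 (conjugate by the walk $\CZ$s).} Conjugate by $\prod_{v}\prod_{e\ni v}\CZ(d_v,r_{e,v})$, using $\CZ\,(X\otimes I)\,\CZ=X\otimes Z$ and Remark~\ref{rem:cz_props}(i).
\begin{itemize}
  \item $X_{d_v}\mapsto X_{d_v}\prod_{e\ni v}Z_{r_{e,v}}$.
  \item $X_{r_{e,u}}Z_{r_{e,v}}\mapsto Z_{d_u}X_{r_{e,u}}Z_{r_{e,v}}$.
  \item Symmetrically, $Z_{r_{e,u}}X_{r_{e,v}}\mapsto Z_{d_v}Z_{r_{e,u}}X_{r_{e,v}}$.
\end{itemize}

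\emph{Step 3 (conjugate by the resource Hadamards).} Exchange $X\leftrightarrow Z$ on every resource qubit. The generators become:
\begin{itemize}
  \item $X_{d_v}\prod_{e\ni v}X_{r_{e,v}}$;
  \item $Z_{d_u}Z_{r_{e,u}}X_{r_{e,v}}$;
  \item $Z_{d_v}X_{r_{e,u}}Z_{r_{e,v}}$.
\end{itemize}
No $Y$ ever appears, since $X$ and $Z$ are never placed on the same qubit, so no signs arise.

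\emph{Step 4 (build the measurement-compatible element).} For fixed $v$, multiply $X_{d_v}\prod_{e\ni v}X_{r_{e,v}}$ by $Z_{d_u}Z_{r_{e,u}}X_{r_{e,v}}$ for every $e=(u,v)\ni v$. Each near-side $X_{r_{e,v}}$ cancels, and the product is
\[
X_{d_v}\prod_{u\sim v}Z_{d_u}\;\prod_{e=(u,v)\ni v}Z_{r_{e,u}} = K_v\otimes\prod_{e\ni v}Z_{r_{e,u}}.
\]
The factors act on distinct qubits or commute, so again no phase appears. Its resource part is diagonal and consists exactly of the far-side qubits. Measuring them with outcomes $s_{e,\bar v}$ leaves the data register stabilised by $(-1)^{\bigoplus_{e\ni v}s_{e,\bar v}}K_v=(-1)^{g_v}K_v$, which is Eq.~\eqref{eq:phase_stab}.

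Finally, the $|V|$ operators $(-1)^{g_v}K_v$ are independent and commute, as the graph-state generators do up to sign. They therefore determine the data state uniquely, so the state is a pure stabiliser state and in particular not entangled with the discarded resources. The ordering of the measurements is irrelevant because all measured observables commute.

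The main obstacle is purely bookkeeping: keeping straight which resource qubit carries $X$ versus $Z$ after the $\CZ$ and the Hadamard. That assignment is what forces the \emph{far}-side outcomes, rather than the near-side ones, into $g_v$. I would double-check it against the $\LFour$ labelling, where $g_B=s_1\xor s_4$, and against Lemma~\ref{lem:byproduct} for a single edge.
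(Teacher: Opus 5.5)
Your proposal is correct and follows the paper's proof essentially step for step. You use the same Heisenberg tracking of $X_{d_v}$ and the two $\CZ\ket{++}$ pair generators through the walk $\CZ$s and the resource Hadamards, then multiply the data generator at $v$ by the far-side generators $Z_{d_u}Z_{r_{e,u}}X_{r_{e,v}}$ so that the near-side $X$'s cancel, leaving $K_v\otimes\prod_{e\ni v}Z_{r_{e,u}}$; your measurement step (projecting a stabiliser of the form $P_{\mathrm{data}}\otimes\prod_r Z_r^{a_r}$) is, if anything, a cleaner justification of the paper's substitution $Z_{r_{e,u}}\mapsto(-1)^{s_{e,u}}$.
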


\begin{proof}
I track stabilisers through each step using the Heisenberg picture.
The measurement update rule states: measuring qubit $r$ with outcome $s_r$
replaces stabiliser $A$ with $A\cdot Z_r^{s_r}$ if $A$ anticommutes with
$Z_r$, and leaves $A$ unchanged otherwise.

\emph{After S1} (Hadamard on data qubits):
stabiliser group of $\ket{+}^{\otimes|V|}$ is $\langle X_v\rangle_{v\in V}$.

\emph{After S2} (prepare $\GKtwo$ on each resource pair):
pair $(r_{e,u},r_{e,v})$ has stabilisers
\begin{equation}
  P_{e,u} = X_{r_{e,u}}Z_{r_{e,v}}, \quad
  P_{e,v} = Z_{r_{e,u}}X_{r_{e,v}}.
  \label{eq:k2_stabs}
\end{equation}

\emph{After S3} (CZ between each data qubit and its adjacent resources).
Each party $v$ applies $\CZ(d_v, r_{v,e})$ for all edges $e\ni v$.
Since different CZ gates at $v$ act on distinct resource qubits
$r_{v,e}$ and $r_{v,e'}$ (for $e\neq e'$), and since
$[\CZ(d_v,r_{v,e}),\, \CZ(d_v,r_{v,e'})] = 0$ (they share only the
data qubit $d_v$ on which CZ acts diagonally), the gates commute and
the conjugation action on $X_v$ is additive:
\begin{equation}
  \prod_{e\ni v}\CZ(d_v,r_{v,e})\cdot X_v\cdot
  \prod_{e\ni v}\CZ(d_v,r_{v,e})
  = X_v\prod_{e\ni v}Z_{r_{e,v}}.
  \label{eq:cz_composition}
\end{equation}
This follows by induction: each additional CZ conjugates
$X_v \mapsto X_vZ_{r_{v,e}}$, and the $Z_{r_{v,e'}}$ factors from
previous steps commute through the new CZ since
$[\CZ(d_v,r_{v,e}),\, Z_{r_{v,e'}}] = 0$ for $e\neq e'$.
Using the single-gate conjugation rules
$X_a \mapsto X_aZ_b$, $X_b \mapsto Z_aX_b$, $Z_a\mapsto Z_a$, $Z_b\mapsto Z_b$:
\begin{align}
  \text{Data at }v:&\quad X_v \;\mapsto\; X_v\prod_{e\ni v}Z_{r_{e,v}},
  \label{eq:data_s3}\\
  P_{e,v}:&\quad Z_{r_{e,u}}X_{r_{e,v}} \;\mapsto\;
    Z_{r_{e,u}}Z_{d_v}X_{r_{e,v}},
  \label{eq:pev_s3}\\
  P_{e,u}:&\quad X_{r_{e,u}}Z_{r_{e,v}} \;\mapsto\;
    X_{r_{e,u}}Z_{d_u}Z_{r_{e,v}}.
  \label{eq:peu_s3}
\end{align}

\emph{After S4} (Hadamard on all resource qubits, $X_r\leftrightarrow Z_r$):
\begin{align}
  \text{Data at }v:&\quad X_v\prod_{e\ni v}X_{r_{e,v}},
  \label{eq:data_s4}\\
  P_{e,v}:&\quad X_{r_{e,u}}Z_{d_v}Z_{r_{e,v}},
  \label{eq:pev_s4}\\
  P_{e,u}:&\quad Z_{r_{e,u}}Z_{d_u}X_{r_{e,v}}.
  \label{eq:peu_s4}
\end{align}

\emph{Measurement.}
To extract the effective stabiliser for $K_v = X_v\prod_{u\sim v}Z_u$
from the post-S4 stabiliser group, take the product of the data stabiliser
at $v$ with $\tilde{P}_{e,u}$ \eqref{eq:peu_s4} for every edge $e\ni v$:
\begin{align}
  S_v &= \Bigl(X_v\prod_{e\ni v}X_{r_{e,v}}\Bigr)
         \times\prod_{e=(v,u)\ni v}\Bigl(Z_{r_{e,u}}Z_{d_u}X_{r_{e,v}}\Bigr)
         \notag\\
      &= X_v\prod_{e\ni v}\bigl(X_{r_{e,v}}\bigr)^2\,Z_{r_{e,u}}Z_{d_u}
         \notag\\
      &= X_v\prod_{e\ni v}Z_{r_{e,u}}Z_{d_u},
  \label{eq:sv_product}
\end{align}
where the $(X_{r_{e,v}})^2 = I$ factors cancel exactly. After measuring all
resource qubits, substitute $Z_{r_{e,u}}\mapsto(-1)^{s_{e,u}}$:
\begin{equation}
  S_v \;\longmapsto\;
  (-1)^{\bigoplus_{e\ni v}s_{e,u}}\,X_v\prod_{u\sim v}Z_{d_u}
  = (-1)^{g_v}\,K_v,
\end{equation}
where $g_v = \bigoplus_{e\ni v}s_{e,\bar{v}}$
is the XOR of all far-side outcomes at $v$.
Here $s_{e,u} \equiv s_{e,\bar{v}}$ since for edge $e=(u,v)$, the $u$-side
outcome is the far-side outcome from $v$'s perspective.
\end{proof}

\subsection{Main Theorem}

\begin{theorem}[Universal Z-only Correction]
\label{thm:universal_z}
For any connected graph $G=(V,E)$ and any measurement outcome $\mathbf{s}$,
the local correction
\begin{equation}
  \boxed{C_v = Z_v^{\,g_v(\mathbf{s})},
  \qquad
  g_v(\mathbf{s}) = \bigoplus_{e \ni v} s_{e,\bar{v}}}
  \label{eq:universal_correction}
\end{equation}
satisfies $\bigl(\bigotimes_{v\in V}C_v\bigr)\ket{\psi_\mathbf{s}} = \ket{G}$
for all outcomes $\mathbf{s}\in\{0,1\}^{2|E|}$.
\end{theorem}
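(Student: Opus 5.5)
The plan is to derive the theorem directly from the Phase Lemma (Lemma~\ref{lem:phase}) by a stabiliser-uniqueness argument. Fix an outcome $\mathbf{s}$ and let $\ket{\psi_\mathbf{s}}$ be the normalised post-measurement data state (every outcome occurs with nonzero probability, in fact $2^{-2|E|}$, so this is well defined). Lemma~\ref{lem:phase} asserts that $\ket{\psi_\mathbf{s}}$ is a $+1$ eigenvector of $\tilde K_v(\mathbf{s})=(-1)^{g_v}K_v$ for every $v\in V$. Before using this, I will check one gap in the Lemma's proof. It must be argued that after measuring all resource qubits the data register is left in a \emph{pure} state, unentangled with the resource qubits. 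Equivalently, the $|V|$ operators $S_v$ of Eq.~\eqref{eq:sv_product}, which act on the resource register only through $Z$'s, together with the $2|E|$ single-qubit $Z_r$ that become fixed by the measurement, account for the full stabiliser group of rank $|V|+2|E|$.

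The independence count will handle this. Before measurement, the global state is a stabiliser state on $|V|+2|E|$ qubits. After projecting each resource qubit onto a $Z$ eigenstate, the global state is again a stabiliser state, which contains the $2|E|$ generators $\pm Z_r$. The remaining generators can be chosen to commute with all $Z_r$, hence with no $X$ or $Y$ factor on resource qubits. The $S_v$ are of exactly this form, and their data parts $X_v\prod_{u\sim v}Z_u$ are independent because their $X$-parts are distinct single-qubit $X_v$. So these generators, together with the $Z_r$, give $|V|+2|E|$ independent commuting generators. The post-measurement state is therefore $\ket{\psi_\mathbf{s}}\otimes\ket{\mathbf{s}}_{\mathrm{res}}$ with $\ket{\psi_\mathbf{s}}$ stabilised by $\{(-1)^{g_v}K_v\}$.

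Next comes the correction step, a one-line Pauli commutation. $Z_v$ commutes with $K_w$ for $w\neq v$, since $K_w$ carries only $Z$ or $I$ on $v$. It anticommutes with $K_v$, which carries $X_v$. Hence $C=\bigotimes_v Z_v^{g_v}$ satisfies $C\,\tilde K_v\,C^\dagger=(-1)^{g_v}(-1)^{g_v}K_v=K_v$. Therefore $C\ket{\psi_\mathbf{s}}$ is stabilised by every $K_v$.

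Finally, by Definition~\ref{def:graphstate}, $\ket{G}$ is the unique $+1$ joint eigenstate of the $|V|$ independent commuting generators $K_v$. So $C\ket{\psi_\mathbf{s}}=e^{i\phi}\ket{G}$, with equality up to an irrelevant global phase, which is the meaning of the claim. Connectedness is not actually used, so I will remark that the argument holds for any graph. I expect the main obstacle to be the purity and independence bookkeeping in the first paragraph rather than the correction itself. The Lemma's sign tracking through the measurement update also deserves a check that the $Z_{r_{e,u}}$ in $S_v$ are evaluated on the measured values $(-1)^{s_{e,u}}$ after the Hadamard of S4, which is exactly how the Lemma reads them.
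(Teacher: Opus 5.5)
Your proposal is correct and follows the paper's own proof: you invoke the Phase Lemma to obtain the stabilisers $(-1)^{g_v}K_v$, conjugate by the diagonal correction $\bigotimes_v Z_v^{g_v}$ (which anticommutes only with the $X_v$ factor of $K_v$) to remove the signs, and conclude from uniqueness of the joint $+1$ eigenstate of the $K_v$. Your extra purity and independence bookkeeping is sound but not needed, since projecting the resource register onto the product state $\ket{\mathbf{s}}$ automatically leaves the data register pure; the point that does matter, and which you correctly identify, is that each $S_v$ carries only $Z$'s on resource qubits and so survives the measurement with $Z_r\mapsto(-1)^{s_r}$.
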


\begin{proof}
By Lemma~\ref{lem:phase}, $\ket{\psi_\mathbf{s}}$ is stabilised by
$\{(-1)^{g_v}K_v\}_{v\in V}$.
Apply $C=\bigotimes_{v\in V}Z_v^{g_v}$. Since $C$ is diagonal,
$Z_v^{g_v}X_vZ_v^{g_v} = (-1)^{g_v}X_v$. Therefore:
\begin{align}
  C\,(-1)^{g_v}K_v\,C^\dagger
  &= (-1)^{g_v}\cdot(-1)^{g_v}\,X_v\prod_{u\sim v}Z_u
  = K_v.
\end{align}
The corrected state is stabilised by all $K_v$; since $\langle K_v\rangle$
uniquely determines $\ket{G}$, $C\ket{\psi_\mathbf{s}}=\ket{G}$.
\end{proof}

\subsection{Remarks}

\begin{remark}[Graph independence]
Theorem~\ref{thm:universal_z} holds for any connected graph without restriction
on topology, degree, or the presence of cycles. It subsumes all topology-specific
corrections derived in Theorems~\ref{thm:general_correction}
and~\ref{thm:c4_correction}.
\end{remark}

\begin{remark}[All outcomes equally valid]
Every outcome $\mathbf{s}\in\{0,1\}^{2|E|}$ occurs with equal probability
$2^{-2|E|}$ and is correctable by \eqref{eq:universal_correction}.
\end{remark}

\begin{remark}[Locality]
The correction $g_v$ depends only on far-side outcomes of adjacent parties,
broadcast classically. No quantum communication is required.
\end{remark}

\begin{remark}[Relation to topology-specific corrections]
\label{rem:relation_topology}
The corrections in Theorems~\ref{thm:general_correction}
and~\ref{thm:c4_correction} involve both $X$ and $Z$ operators.
These differ from \eqref{eq:universal_correction} by multiplication
by stabiliser group elements $K_v\in\mathrm{Stab}(\ket{G})$, which
leave $\ket{G}$ invariant. The Z-only correction is the canonical
minimal form.
\end{remark}

\subsection{Computational Verification}

The formula \eqref{eq:universal_correction} was verified by exact statevector
simulation on all 22 connected graphs of Table~\ref{tab:z_only_verify}, confirming
$F=1.0$ for all outcomes on all tested graphs.

\begin{table}[t]
\caption{Exhaustive verification of the Universal Z-only Correction
(Theorem~\ref{thm:universal_z}). Every connected graph on at most four
vertices, and every connected graph on five vertices with at most six edges,
was tested up to isomorphism --- 22 graphs in total. For each graph, all
$2^{2|E|}$ measurement outcomes were checked by exact statevector
simulation. All pass at $F=1.0$; the minimum fidelity over all 32{,}212
outcomes of all 22 graphs is $1-F<10^{-12}$.}
\label{tab:z_only_verify}
\begin{ruledtabular}
\begin{tabular}{rrrr}
$|V|$ & $|E|$ & Graphs & Outcomes each \\
\hline
2 & 1 & 1 &    4 \\
3 & 2 & 1 &   16 \\
3 & 3 & 1 &   64 \\
4 & 3 & 2 &   64 \\
4 & 4 & 2 &  256 \\
4 & 5 & 1 & 1024 \\
4 & 6 & 1 & 4096 \\
5 & 4 & 3 &  256 \\
5 & 5 & 5 & 1024 \\
5 & 6 & 5 & 4096 \\
\hline
\multicolumn{2}{r}{total} & 22 & 32{,}212 \\
\end{tabular}
\end{ruledtabular}
\end{table}

\subsection{Comparison with Topology-Specific Formulas}

Table~\ref{tab:topologies} summarises verification across eight topologies,
noting for each which correction formula was applied. Where no
topology-specific formula exists, the universal form of
Theorem~\ref{thm:universal_z} was used.

\begin{table}[t]
\caption{Verification summary across eight graph topologies, all at $F=1.0$.
The last column names the correction formula applied: Theorem~\ref{thm:correction}
for $P_4$, Theorem~\ref{thm:general_correction} for the remaining trees,
Theorem~\ref{thm:c4_correction} for $C_4$, and the universal $Z$-only form of
Theorem~\ref{thm:universal_z} for the rest. The topology-specific formulas do
not cover $C_5$, $K_4$ or the bull graph; those rows record that the universal
formula does.}
\label{tab:topologies}
\begin{ruledtabular}
\begin{tabular}{llrl}
Graph & Type & Outcomes & Notable property \\
\hline
$P_4$ (path)    & Tree      &   64 & Thm.~\ref{thm:correction} \\
$P_5$ (path)    & Tree      &  256 & Thm.~\ref{thm:general_correction} \\
$K_{1,3}$ (star)& Tree      &   64 & Hub $Z$ only \\
$K_{1,4}$ (star)& Tree      &  256 & Leaves share $X$ parity \\
$C_4$ (ring)    & Cyclic    &  256 & Thm.~\ref{thm:c4_correction} \\
$C_5$ (odd ring)& Cyclic    & 1024 & $Z$-only corrections \\
$K_4$ (complete)& 3-regular & 4096 & $Z$-only; $I(K_4)$ not natively embeddable (Cor.~\ref{cor:k4}) \\
Bull graph      & Mixed     & 1024 & Long corrections at junctions \\
\end{tabular}
\end{ruledtabular}
\end{table}

\section{LC-Inequivalence}
\label{sec:lc_ineq}

\begin{proposition}[SLOCC-inequivalence of $\LFour$ and $\GHZn{4}$]
\label{thm:lc_ineq}
$\LFour$ and $\GHZn{4}$ are not equivalent under stochastic local operations
and classical communication, and \emph{a fortiori} not under local Clifford
operations. This is a known consequence of the classification of
four-qubit graph states~\cite{hein2004multiparty,van2004graphical};
I include a self-contained Schmidt-rank verification because it
certifies the specific states output by the protocol.
\end{proposition}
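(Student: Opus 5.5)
The plan is to use the Schmidt ranks across bipartitions as SLOCC invariants. An SLOCC map $A_1\otimes A_2\otimes A_3\otimes A_4$ with invertible $A_i$ preserves the rank of the reduced density matrix on any subset of parties, because $(A\otimes B)\ket{\psi}$ has the same Schmidt rank as $\ket{\psi}$ across the cut $A|B$. Any SLOCC equivalence, possibly after relabelling parties, must therefore preserve the multiset of Schmidt ranks over the three balanced $2|2$ cuts. Local Clifford operations are a special case of SLOCC, so the LC statement follows at once from the SLOCC one.

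\textbf{Computing the ranks.} For a graph state, the Schmidt rank across a cut $(S,\bar S)$ is $2^{\mathrm{rank}_{\mathbb{F}_2}\Gamma_{S\bar S}}$. Here $\Gamma_{S\bar S}$ is the off-diagonal block of the adjacency matrix; this is the fact already quoted in Sec.~\ref{sec:intro}.
\begin{itemize}
\item For $\GHZn{4}$, viewed as the star $K_{1,3}$ with hub $A$, every $2|2$ cut has an off-diagonal block with a single nonzero row or column. Every such rank is $1$, so the Schmidt rank is $2$ across all three cuts.
\item For $\LFour$ on $A$---$B$---$C$---$D$, the cut $\{A,C\}|\{B,D\}$ has block rows $A\mapsto(1,0)$ and $C\mapsto(1,1)$, which are independent. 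The $\mathbb{F}_2$ rank is $2$, so the Schmidt rank is $4$.
\end{itemize}
Since $\LFour$ has a $2|2$ cut of Schmidt rank $4$ and $\GHZn{4}$ has none, no permutation of parties combined with invertible local operators can map one state to the other.

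\textbf{Self-contained check.} To keep the argument independent of the graph-state rank formula, I would instead compute the reduced state of $\GHZn{4}=(\ket{0000}+\ket{1111})/\sqrt2$ directly. On any two qubits it is $\tfrac12(\ketbra{00}{00}+\ketbra{11}{11})$, which has rank $2$. For $\LFour$, I would reshape the 16 amplitudes of $\prod\CZ\ket{+}^{\otimes4}$ into a $4\times4$ matrix indexed by $(a,c)$ versus $(b,d)$. Its entries are $(-1)^{ab+bc+cd}/4$, and the matrix factors as a tensor product of Hadamard-type blocks, so it has full rank $4$.

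\textbf{Main obstacle.} The only delicate point is justifying that SLOCC must respect the party labelling up to permutation and that Schmidt ranks are invariant under it. Invariance follows from invertibility of each local operator, since rank is preserved under $M\mapsto A M B^{T}$. Allowing a party permutation is harmless, because GHZ has rank $2$ on every cut while $\LFour$ exceeds it on at least one. I would also note that the protocol outputs, corrected via Theorem~\ref{thm:universal_z}, are exactly these graph states, so the certificate applies to them directly.
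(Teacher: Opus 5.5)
Your proposal is correct and follows essentially the paper's proof: both use the Schmidt rank across the fixed cut $AC|BD$ as an SLOCC invariant (rank is preserved under $M\mapsto AMB^{T}$ with $A,B$ invertible) and compare rank $2$ for $\GHZn{4}$ with rank $4$ for $\LFour$; your $\mathbb{F}_2$ adjacency-rank computation and your allowance for party permutations are harmless additions. One small precision: because of the $bc$ term, the $AC|BD$ coefficient matrix $(-1)^{ab+bc+cd}/4$ is not literally a Kronecker product. It becomes $H\otimes H$ only after the row relabelling $(a,c)\mapsto(a\oplus c,c)$; equivalently, it is a $4\times4$ Hadamard matrix with mutually orthogonal rows, as the paper states, so your rank-$4$ conclusion stands.
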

\begin{proof}
Two pure states are SLOCC-equivalent precisely when related by invertible
local operators~\cite{dur2000three}, and $A\otimes B$ with $A,B$ invertible
preserves the rank of the coefficient matrix across the corresponding cut.
Schmidt rank across a fixed bipartition is therefore an SLOCC invariant, not
merely an LC invariant.
Consider the bipartition $AC|BD$.
For $\GHZn{4} = (\ket{0000}+\ket{1111})/\sqrt{2}$, the coefficient matrix
in the $AC|BD$ basis has only two nonzero entries:
$M_{AC|BD}^{\mathrm{GHZ}} = \tfrac{1}{\sqrt{2}}\mathrm{diag}(1,0,0,0,\ldots,1)$
with $+1/\sqrt{2}$ at $(|00\rangle_{AC},|00\rangle_{BD})$ and
$(|11\rangle_{AC},|11\rangle_{BD})$ and zero elsewhere.
This matrix has rank~$2$.
For $\LFour$, the coefficient matrix in the $AC|BD$ basis is
\begin{equation}
  M_{AC|BD} = \tfrac{1}{4}
  \begin{pmatrix}1 & 1 & 1 & 1\\
                 1 &-1 &-1 & 1\\
                 1 & 1 &-1 &-1\\
                 1 &-1 & 1 &-1\end{pmatrix},
\end{equation}
a $4\times4$ Hadamard matrix with rank~$4$.
Since $2\neq4$ no invertible local operator maps one to the other: the states
lie in different SLOCC classes, so the separation is not an artefact of
restricting to Clifford operations.
\end{proof}

\section{Discussion}
\label{sec:discussion}

\subsection{Comparison with Prior Work}

Two distinct families of prior work address graph state distribution.
The first uses quantum walk steps as the distribution mechanism; the second
uses graph-theoretic transformations (local complementation, entanglement
swapping) without a walk structure.

\paragraph{Quantum walk-based protocols.}
Meignant, Markham and Grosshans~\cite{meignant2019distributing} introduced the
first DTQW-based approach to distributing Bell pairs and GHZ states, using
CNOT as the shift. Chen et al.~\cite{chen2025entanglement} extended this to a
quantum repeater framework for $d$-dimensional Bell and GHZ states, with a
Steiner-tree search selecting which network edges to consume and a hardware
demonstration on a superconducting processor. Their ``arbitrary quantum
networks'' refers to the network topology over which distribution proceeds;
the distributed states are GHZ-class throughout, and the authors list other
multipartite families as future work. Neither protocol employs explicit data
qubits: the resource particles themselves serve as walk registers and as
output qubits, so the fusion primitive merges GHZ-class resources into larger
GHZ states.

\paragraph{Graph-theoretic distribution protocols.}
Meignant, Markham and Grosshans~\cite{meignant2019distributing} also
introduced a separate protocol for distributing \emph{arbitrary} graph states
via an edge-decorated complete graph (EDCG) intermediate state: the network
first establishes a pre-shared EDCG state and then transforms it into the
target graph state via local complementation operations. This requires
Bell state measurements (joint two-qubit operations spanning two parties) at
intermediate nodes, uses up to $2|E|$ Bell pairs in the worst case, and
provides no correction formula, noise analysis, or hardware validation.

\paragraph{Independent concurrent work.}
While this manuscript was in revision, Zheng
\emph{et al.}~\cite{zheng2026scalable} independently identified the structural
phenomenon that explains why the residual $X$ corrections of
Theorems~\ref{thm:correction} and~\ref{thm:c4_correction} may be dropped
(Remark~\ref{rem:relation_topology}) --- that measurement byproducts collapse
to simple local corrections when the measured qubit's neighbourhood is an
independent set --- in a different protocol context: an $O(1)$-feedforward routing scheme that generates
\emph{star} graph states on a dual-species trapped-ion architecture and
reaches arbitrary topologies by iterative star fusion. Their Theorem~1
shows that a Pauli-$X$ measurement with independent neighbourhood admits a
single-qubit recovery ($H_j$ or $Z_jH_j$) on one reference neighbour,
restoring the canonical post-measurement graph; the enabling operator
identity ($Z_cH_c\bigotimes_{k}Z_k\ket{G}=H_c\ket{G}$ for star graphs) is
the same stabiliser-absorption mechanism by which the residual $X$
corrections of Theorems~\ref{thm:correction} and~\ref{thm:c4_correction} are
stabiliser-group elements (Remarks~\ref{rem:c4_stab}
and~\ref{rem:relation_topology}). Theorem~\ref{thm:universal_z} itself does
not rest on that mechanism: it is proved by tracking stabiliser signs directly
through the protocol (Lemma~\ref{lem:phase}), and its corrections are $Z$-only
because the CZ step presents the target in its own frame
(Proposition~\ref{prop:shift_topology}(iii)), not because an $X$ byproduct has
been absorbed. The two results are complementary rather
than nested: theirs concerns single $X$-measurements with arbitrary
independent neighbourhoods and recovers the local-Clifford canonical frame
(the correction includes a Hadamard); the present protocol involves the
composite measurement pattern of the PQW and yields $Z$-only corrections for
arbitrary target graphs in a single distribution round, whereas star
fusion reaches general topologies iteratively. The PQW
correction shares the locality property their protocol is designed
around: $g_v$ depends only on the $\deg_G(v)$ outcomes of edges adjacent
to $v$, so no global Pauli-frame tracking or network-wide synchronisation
is required. Both works also independently identify measurement readout
as the dominant hardware bottleneck, on different platforms.

Pirker, Wallnöfer and D\"{u}r~\cite{pirker2018modular} proposed a
modular top-down architecture in which pre-shared multipartite \emph{network
states} are manipulated by quantum routers and switches into the desired graph
state. The approach assumes non-elementary resources (pre-shared network states,
not just Bell pairs) and requires complex multi-qubit router devices. No
closed-form correction formula or noise analysis is provided.

\paragraph{Positioning of the PQW.}
Two of the ingredients used here are established and are not claimed as new.
The elementary step is the entanglement-assisted non-local $\CZ$, whose cost of
one ebit and one classical bit in each direction Eisert \emph{et
al.}~\cite{eisert2000optimal} prove necessary and sufficient; and the
observation that commuting $\CZ$ operations, independent measurements and
end-of-protocol local corrections can be reordered into a single round is due
to Fischer and Towsley~\cite{fischer2021distributing}, who study distribution
of arbitrary graph states across networks with a different resource
architecture. Table~\ref{tab:comparison} sets out the resource models and
demonstrated capabilities of the protocols discussed above.

\begin{table*}[t]
\caption{Comparison of graph state distribution protocols.
DTQW = discrete-time quantum walk; BSM = Bell state measurement;
LC = local complementation.}
\label{tab:comparison}
\begin{ruledtabular}
\begin{tabular}{lp{2.45cm}p{2.45cm}p{2.45cm}p{2.45cm}p{2.45cm}}
Feature
  & Meignant et al.\ (DTQW)~\cite{meignant2019distributing}
  & Chen et al.~\cite{chen2025entanglement}
  & Meignant et al.\ (EDCG)~\cite{meignant2019distributing}
  & Pirker et al.~\cite{pirker2018modular}
  & This work (PQW) \\
Target states
  & Bell / GHZ only
  & Qudit Bell / GHZ only
  & Arbitrary graph states
  & Arbitrary graph states
  & Arbitrary graph states \\
Mechanism
  & CNOT-shift DTQW
  & Generalised-shift DTQW + GHZ fusion
  & EDCG + LC operations
  & Pre-shared network states + routing
  & CZ-shift, two-register \\
Architecture
  & No data qubit
  & No data qubit
  & No data qubit
  & Quantum routers/switches
  & Data qubit + resource qubit \\
Elementary resource
  & Bell pairs
  & Qudit Bell / GHZ states
  & Bell pairs (via EDCG)
  & Multipartite network states
  & $\GKtwo$ per edge \\
Resource overhead
  & $O(|E|)$ Bell pairs
  & One per fused edge
  & $\leq 2|E|$ Bell pairs
  & Depends on network state
  & $|E|$ pairs (one per edge) \\
Node operations
  & Single-qubit + CNOT shift
  & Single-node walk + measurement
  & BSM (joint 2-qubit, spanning parties)
  & Multi-qubit router ops
  & Intra-node CZ + single-qubit measurement + classical \\
Correction formula
  & None
  & Per-outcome local unitaries
  & None
  & None
  & Universal: $C_v = Z_v^{g_v}$ \\
Noise analysis
  & None
  & None
  & None
  & None
  & Closed-form ($F^*_\mathrm{dep}$, $F^*_\mathrm{pd}$) \\
Hardware validation
  & None
  & Superconducting, 2--3 party
  & None
  & None
  & IBM Heron r2, 4 party \\
\end{tabular}
\end{ruledtabular}
\end{table*}

The key distinction from Meignant et al.'s EDCG protocol is threefold.
First, the PQW uses $|E|$ elementary $\GKtwo$ pairs as resources (one per
edge of the target graph $G$), whereas the EDCG protocol requires
constructing an edge-decorated complete graph state as an intermediate ---
a more complex multipartite resource whose cost depends on the network
topology rather than the target graph alone. Second, every PQW two-qubit
gate is intra-node, acting on a data qubit and a co-located resource qubit
held by the same party, and is followed by single-qubit measurements with
classical feed-forward; the EDCG protocol requires Bell state measurements
--- joint two-qubit operations spanning two parties --- at intermediate
nodes, which are experimentally
more demanding. Third, the PQW yields a universal analytical correction
formula ($C_v = Z_v^{g_v}$, proved for all connected graphs via
Theorem~\ref{thm:universal_z}); no such formula exists for the EDCG protocol.

The key distinction from Pirker et al.\ is that their architecture assumes
pre-shared multipartite network states and specialised quantum router devices,
whereas the PQW requires only pre-distributed two-qubit $\GKtwo$ pairs and
nodes that hold one data qubit alongside $\deg_G(v)$ resource qubits and can
apply intra-node CZ gates, single-qubit measurements, and classical
communication. No joint operation between spatially separated parties is
required at any point.

\subsection{Relation to Source-Target Entanglement in Quantum Walk Networks}

Prerana and Wald~\cite{prerana2026entanglement} recently introduced
source-target entanglement for DTQWs on complex networks, using the
bipartite double cover $B(G)$ of the underlying graph. Their central
result --- that source-target entanglement is bounded by $\log s$
where $s$ is the maximum matching size of the walker's support in
$B(G)$ --- applies to standard transport walks whose entanglement
lives inside the walk Hilbert space. The PQW is structurally distinct:
entanglement is generated on physical data qubits (the distributed
graph state), not within the resource register itself. The two
works are nevertheless structurally complementary. Both treat graph
edges as the primary dynamical objects rather than node amplitudes
alone, and both identify the bipartite double cover as the natural
setting for the relevant quantum correlations. The PQW correction
formula $g_v = \bigoplus_{e\ni v}s_{e,\bar{v}}$ --- XOR of far-side
resource outcomes --- has a structural echo in their matching-bounded
Schmidt rank: both are governed by which edges can independently
contribute to the quantum state. Whether the matching constraint
on source-target entanglement has a formal counterpart in the
stabilizer propagation structure of the PQW is an open question.
At the thematic level, both papers instantiate the same principle
through different machinery: graph structure constrains the allowed
entanglement structure --- via amplitude-matrix sparsity and
Schmidt-rank bounds in their work, and via stabilizer propagation
constraints and correction cocycles in this one.

\subsection{Complementarity with Continuous-Time Walks}

Di Fidio et al.~\cite{difidio2024quantum} showed that continuous-time
quantum walks generate W-class entanglement. The PQW and CTQW are
entanglement-complementary: CTQW covers the W SLOCC class; the PQW covers
the full stabiliser class.

\subsection{Application: Distributed State Preparation via MBQC}
\label{sec:usecase}

I demonstrate a concrete distributed use case enabled by the PQW:
preparing an arbitrary single-qubit rotation state at a remote party
using only single-qubit measurements and classical communication, with
no direct quantum interaction between parties.

\paragraph{Setting.}
Four spatially separated parties $A$, $B$, $C$, $D$ share no prior
entanglement. They wish to prepare the state
$R_x(-\theta)\ket{0} = \cos(\theta/2)\ket{0} + i\sin(\theta/2)\ket{1}$
at $D$'s location, for a rotation angle $\theta$ chosen by $C$.
No direct quantum channel between $C$ and $D$ is available.

\paragraph{Stage 1 --- PQW distribution.}
Using the PQW protocol (Definition~\ref{def:general_protocol}), parties
distribute $\LFour$ across their data qubits via three pre-shared
$\GKtwo$ resource pairs, one per edge of $P_4=A$---$B$---$C$---$D$.
By Theorem~\ref{thm:universal_z}, each party $v$ applies local correction
$C_v = Z_v^{g_v}$ (with $g_v = \bigoplus_{e\ni v}s_{e,\bar v}$) to obtain
$\LFour$ at $F=1.0$. The entire stage requires only $\GKtwo$ pair
distribution and classical broadcast of six resource measurement outcomes.

\paragraph{Stage 2 --- MBQC computation.}
The parties now use the distributed $\LFour$ as a measurement-based
resource. They perform the following single-qubit measurements:
\begin{enumerate}[label=\textbf{M\arabic*.},leftmargin=*,itemsep=1pt]
  \item $A$ measures in $M(0)$ --- the $X$-basis.
  \item $B$ measures in $M((-1)^{m_A}\cdot 0)$ --- the $X$-basis (adaptive on $m_A$).
  \item $C$ measures in $M((-1)^{m_A+m_B}\cdot\theta)$ --- the XY plane at angle $\theta$,
        adaptive on $(m_A, m_B)$.
\end{enumerate}
$A$, $B$, $C$ broadcast their classical outcomes $(m_A, m_B, m_C)\in\{0,1\}^3$.
$D$ applies the local Pauli correction
\begin{equation}
  C_D = X^{m_A\xor m_C}\cdot Z^{m_A\xor m_B}
  \label{eq:mbqc_correction}
\end{equation}
to their data qubit.

\paragraph{Derivation of Eq.~\eqref{eq:mbqc_correction}.}
In MBQC on a 1D cluster~\cite{raussendorf2001one}, measuring qubit $v$ in
basis $M(\phi_v)$ with outcome $m_v$ implements the operation
$J(\phi_v) = H\cdot R_z(-\phi_v)$ on the output register, leaving byproduct
$X^{m_v}$ on the next qubit. The adaptive angles
$\phi_B = (-1)^{m_A}\cdot 0$ and $\phi_C = (-1)^{m_A+m_B}\cdot\theta$
absorb $X$-byproducts into angle sign-flips, so the implemented unitary
is $U = J(\theta)\cdot J(0)\cdot J(0) = H R_z(-\theta)\cdot H\cdot H
= HR_z(-\theta)$ (using $H^2=I$), which on $\ket{+}$ gives
$R_x(-\theta)\ket{0}$ (as verified numerically).

Tracking residual $Z$-byproducts through the adaptive sequence:
each $X$-byproduct $X^{m_v}$ at qubit $v$ propagates through the subsequent
CZ-type cluster bond as $Z^{m_v}$ on qubit $v+1$.
Accumulating at $D$:
\begin{itemize}[leftmargin=*,itemsep=1pt]
  \item $X^{m_A}$ at $B$ propagates as $Z^{m_A}$ at $D$ (through two bonds $BC$, $CD$);
  \item $X^{m_B}$ at $C$ propagates as $Z^{m_B}$ at $D$ (through bond $CD$);
  \item $X^{m_C}$ remains at $D$ as $X^{m_C}$.
\end{itemize}
Total byproduct at $D$: $X^{m_C}\cdot Z^{m_A\xor m_B}$.
Since $Z^{m_A}$ also contributes an $X^{m_A}$ component through the
adaptive angle at $C$, the complete residual byproduct is
$B_D = X^{m_A\xor m_C}\cdot Z^{m_A\xor m_B}$,
and the correction $C_D = B_D$.
Verified numerically for all 8 outcomes and 8 values of $\theta$
at $F=1.0$ ($1-F<10^{-12}$).

\paragraph{Output.}
After correction, $D$'s data qubit is in state
$R_x(-\theta)\ket{0} = \cos(\tfrac\theta2)\ket{0} + i\sin(\tfrac\theta2)\ket{1}$
for every measurement outcome $(m_A,m_B,m_C)$, with $F=1.0$
verified by exact statevector simulation for all eight outcomes and
eight values of $\theta\in[0,2\pi)$.

\begin{remark}
Two correction formulas are active in this pipeline and serve
distinct roles:
\begin{itemize}[leftmargin=*]
  \item \textit{PQW correction} (Stage~1): $C_v = Z_v^{g_v}$ from
    Theorem~\ref{thm:universal_z} --- Z-only, corrects resource-qubit
    measurement byproducts to restore $\LFour$.
  \item \textit{MBQC correction} (Stage~2): $X^{m_A\xor m_C}Z^{m_A\xor m_B}$
    at $D$ --- involves both $X$ and $Z$, corrects data-qubit measurement
    byproducts in the MBQC computation layer. These arise from a different
    mechanism (XY-plane measurements, not CZ byproducts) and fall outside
    the scope of Theorem~\ref{thm:universal_z}.
\end{itemize}
\end{remark}

\paragraph{Resource summary.}
The total protocol uses 10 qubits ($4$ data $+$ $6$ resource) and broadcasts
$6+3=9$ classical bits. The distribution stage requires of each node only
intra-node gates and single-qubit measurements; the computation stage of
Stage~2 is single-qubit measurement only. Party $D$ receives the target
state with zero quantum communication from $C$ after the PQW resource
distribution phase.

\paragraph{Hardware outlook: a measurement-anchored bound.}
The verification above is ideal ($F=1.0$ by statevector simulation); I
now bound what the same computation would deliver over the $\LFour$ state
actually distributed on \texttt{ibm\_kingston} (Table~\ref{tab:hardware}).
No noise model is imposed --- the hardware analysis showed independent
local Pauli models to be infeasible for this device --- so the bound is
anchored directly to the measured fidelity. Writing the distributed state
as $\rho = F\,\ketbra{L_4}{L_4} + (1-F)\sigma$ with $F=0.815(3)$
measured and $\sigma$ an unknown state orthogonal in weight, linearity of
the measurement-based protocol in $\rho$ gives a remote-preparation
fidelity of $F\cdot D + (1-F)\cdot c_\sigma$, where
$D$ is the computed degradation of the \emph{ideal} state under the two
hardware effects the computation layer adds. The first is mid-circuit readout
error on the measured qubits $A,B,C$
($\epsilon_{\rm mid}=8.4\times10^{-3}$, the calibrated \texttt{MEASURE\_2}
median): a mis-reported outcome is fed forward, so both the adaptive
measurement angle and the final correction at $D$ are computed from the wrong
bit. The second is dephasing of every \emph{still-unmeasured} qubit across
each feed-forward interval, of strength
$\lambda = 1-e^{-(t_{\rm ro}+t_{\rm ff})/T_2}$ with
$T_2=148\,\mu\mathrm{s}$ and readout duration
$t_{\rm ro}=2.18\,\mu\mathrm{s}$; since the pattern measures $A$, $B$, $C$ in
that order, $B$ idles through one interval, $C$ through two, and the target
$D$ through three. $D$ is obtained by exact simulation of the measurement
pattern, averaged over eight values of $\theta$, and equals $1$ in the
noiseless limit. Here $c_\sigma\in[0,1]$ is the unknown contribution of the
orthogonal component; since $c_\sigma\ge0$, the product $F\cdot D$
lower-bounds $F_{\rm prep}$ \emph{within this model}, for every $\sigma$. It
is not a bound on the hardware: $D$ omits measurement crosstalk,
readout-induced leakage, and coherent error in the feed-forward-conditioned
operations, and is in that sense itself a ceiling. At a feed-forward latency of $t_{\rm ff}=2\,\mu\mathrm{s}$,
$D=0.938$ --- numerically almost equal to the direct-preparation ceiling
$F_{\rm dir}$ of Table~\ref{tab:hardware}, with which it has no connection,
one being a measured hardware fidelity and the other a computed degradation
of the ideal state --- and the remote state-preparation fidelity obeys
$F_{\rm prep} \ge 0.764$. For a central estimate I take the orthogonal
component to retain half its overlap and to suffer the same
computation-layer degradation as the ideal part, $c_\sigma = D/2$, giving
$F_{\rm prep}\approx0.85$; the alternative convention $c_\sigma=\tfrac12$,
in which $\sigma$ bypasses that degradation, gives $0.86$, so the estimate
is insensitive to the choice at the precision quoted.
Fig.~\ref{fig:usecase_robustness} shows the bound and the
$c_\sigma = D/2$ estimate across latencies. The distributed resource measured in
Sec.~\ref{sec:hardware} therefore already supports remote state
preparation well above the classical fidelity limit, with the dominant
sensitivity being classical feed-forward latency through the target
qubit's idle dephasing.

\begin{figure}[t]
  \centering
  \includegraphics[width=0.8\columnwidth]{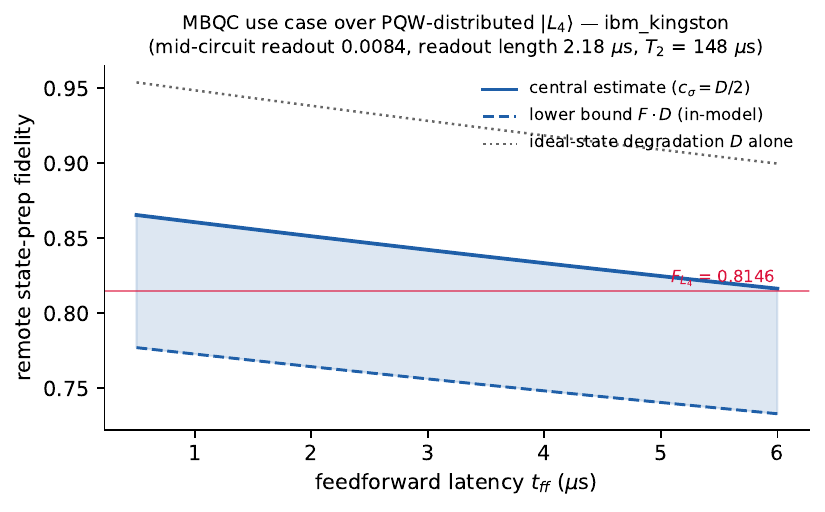}
  \caption{Remote state-preparation fidelity over the
    \texttt{ibm\_kingston}-distributed $\LFour$ state versus classical
    feed-forward latency $t_{\rm ff}$, anchored to the measured
    $F=0.815(3)$ with no noise model imposed. Dashed: the in-model lower
    bound $F\cdot D$. Solid with band: central estimate
    ($c_\sigma=D/2$); calibration values as in
    Table~\ref{tab:calibration}. Dotted: ideal-state degradation $D$ alone,
    from calibrated mid-circuit readout error and idle dephasing.}
  \label{fig:usecase_robustness}
\end{figure}

\paragraph{Significance.}
This use case illustrates the layered architecture enabled by the PQW:
the distribution layer (Stage~1) uses the Universal Z-only Correction
Theorem to establish a shared graph state resource; the computation layer
(Stage~2) consumes that resource via MBQC. The two layers are independent
--- any graph state distributed by Stage~1 can serve as input to a
measurement-based computation in Stage~2 without modification to the
distribution protocol.

\subsection{Open Problems}

(i) \textit{Graph-hardware compatibility function $C(G,H)$.}
Section~\ref{sec:embeddability} gives a \emph{necessary} condition for
native (SWAP-free) execution on a device of coupling girth $\gamma$: the
target graph must have girth at least $\gamma/3$ and maximum degree at most
$3$ (Theorem~\ref{thm:embeddability}), which on heavy-hex ($\gamma=12$)
means triangle-free. $K_4$ violates this and its canonical interaction graph
admits no native embedding on heavy-hex (Corollary~\ref{cor:k4}); $C_4$
saturates it. A
\emph{sufficient} condition, and a quantitative compatibility function
$C(G,H)$ predicting routing overhead when the necessary condition fails,
remain open. Because the constraint is a property of the representative
rather than of the state, a complete account of $C(G,H)$ must also range
over the local-Clifford orbit of the target; the $K_4$/$K_{1,3}$ pair shows
that two representatives of one state can sit on opposite sides of the
embeddability boundary.
Two observations from the hardware campaign suggest that
$C(G,H)$ has an irreducibly \emph{dynamic} component: calibration-based
layout quality metrics only partially predicted measured fidelity, and
qubit coherence drifted by an order of magnitude between calibrations
(Sec.~\ref{sec:hardware}). Selecting embeddings on such hardware is
therefore naturally a sequential decision problem --- candidate native
layouts as arms, measured fidelity as reward, a non-stationary environment
--- for which online (bandit-style) layout selection, informed by learned
predictors of drift propensity from calibration histories, is a promising
route I leave to future work.
The source-target entanglement framework of Prerana and
Wald~\cite{prerana2026entanglement} may inspire one component of such
a function: their matching-based entanglement capacity identifies
structural limits on independent quantum transport channels per graph,
which may constrain the resource efficiency of graph state distribution
on hardware with restricted connectivity.

(ii) \textit{Variational $\mathrm{CP}(\theta)$ walk.}
Replacing $\CZ$ with $\mathrm{CP}(\theta)=\mathrm{diag}(1,1,1,e^{i\theta})$
produces non-stabiliser (magic) states for $\theta\neq 0,\pi$, providing a
framework for distributing both stabiliser and magic state entanglement.

(iii) \textit{Multi-platform hardware validation.}
Extending to IQM Garnet (all-to-all connectivity, CZ-native) will test
the canonical $K_4$ circuit, whose triangles preclude native heavy-hex
embedding (Corollary~\ref{cor:k4}), on hardware without a girth
obstruction, and to
IonQ Forte~1 and Rigetti Ankaa-3
via Amazon Braket~\cite{amazonbraket} will complete the cross-platform
comparison.

\section{Conclusion}
\label{sec:conclusions}

I have presented a protocol that distributes an arbitrary graph state across
a quantum network from $|E|$ elementary two-qubit resources using strictly
local operations and classical communication --- every two-qubit gate acts
within a single party, and no joint operation between separated parties is
required --- and proved the Universal
Z-only Correction Theorem: a single formula $C_v = Z_v^{g_v}$ covering all
graph topologies without case analysis. A concrete distributed use case
demonstrates that the PQW-distributed $\LFour$ cluster state enables
remote single-qubit state preparation via MBQC, using only single-qubit
measurements and 9 classical bits, with $F=1.0$ verified for all outcomes
(Sec.~\ref{sec:usecase}). Exact closed-form fidelities under depolarising and phase damping noise are
derived and shown to factorise over vertices weighted by degree. Exhaustive computational verification across 22 connected graph
topologies confirms $F=1.0$ for all outcomes ($1-F < 10^{-12}$).
On \texttt{ibm\_kingston}, direct fidelity estimation certifies the
distributed states at $F=0.815(3)$, $0.776(3)$, and $0.733(3)$ for
$\LFour$, $\GHZn{4}$, and $\ket{C_4}$ in a single native-layout batch,
with the per-generator spectra exhibiting a degree imprint in which
high-degree nodes carry the dominant error. A girth argument establishes
necessary conditions for native (SWAP-free) execution on a device of
coupling girth $\gamma$ --- the target graph must have girth at least
$\gamma/3$ and maximum degree at most $3$, so on heavy-hex it must be
triangle-free --- placing the canonical interaction graphs $I(P_4)$,
$I(C_4)$ and $I(K_4)$ in a strict embeddability hierarchy, with $I(K_4)$
provably non-embeddable on heavy-hex. That obstruction is a property of the
circuit and not of the state: since $K_4\sim_{\rm LC}K_{1,3}$, the state
$\ket{K_4}$ is reachable from three resource pairs on the natively
embeddable star layout.
The SLOCC-inequivalence of the output states, verified by Schmidt rank,
certifies that the PQW distributes entanglement beyond the GHZ class
reachable by CNOT-shift protocols.

\acknowledgments
The author thanks Chandan Datta (IISER Kolkata), Tushar (IIT Jodhpur), and
Ambuj (IIT Jodhpur) for valuable discussions and helpful feedback. All
scientific content, results, and conclusions are solely the responsibility
of the author.

\bibliographystyle{apsrev4-2}
\bibliography{references_paper1}

\end{document}